% new_TLP2egui.tex / guide for TLP
% v2.12, released 23-apr-2003
%   (based on JFP2egui.tex v1.01) and tlp2egui.tex
% Copyright (C) 2000,2001,2002,2003, 2012 Cambridge University Press

\NeedsTeXFormat{LaTeX2e}
\documentclass[10pt,a4paper]{article}

\DeclareMathAlphabet{\mathcal}{OMS}{cmsy}{m}{n}
%%% Macros for the guide only %%%
\hyphenation{either}

\newcommand\bcmdtab{\noindent\bgroup\tabcolsep=0pt%
  \begin{tabular}{@{}p{10pc}@{}p{20pc}@{}}}
\newcommand\ecmdtab{\end{tabular}\egroup}

\usepackage{bm}
\usepackage{mathptmx}
\usepackage{multirow}
\usepackage{xspace}
\usepackage{amsmath}
\usepackage{graphicx}
\usepackage{amsthm}
\usepackage{amssymb}
\usepackage[usenames, dvipsnames]{color}
\usepackage{xspace}
\usepackage{comment}
\usepackage{url}
\usepackage{bigstrut}
    \setlength\bigstrutjot{3pt}

\usepackage{tikz}

\newcommand{\rar}{\rightarrow}
\newcommand{\sat}{\mathit{SAT}}

\newcommand{\nop}[1]{}

\newif\ifdraft\drafttrue
\newif\ifinlineref\inlinereffalse
\newif\iffinal\finalfalse
\newif\ifextended\extendedfalse
\newif\ifdotikz\dotikzfalse
\newif\ifmakeallproofsinline\makeallproofsinlinefalse
\newif\ifeditors\editorstrue

% remove for inline references
\inlinereftrue
% remove to omit editor info in bibliography
\editorsfalse
% remove for submission and final version
%\draftfalse
% remove for final version
\finaltrue

% deactivate this to not compute in-line graphics
%\dotikztrue
%
% activate above line and use the following to recompute in-line graphics
%   for j in $(egrep 'beginpgfgraphicnamed\{[^\{\}]*\}' eiter14a.tex | tr \{\} ' ' | awk '{print $2;}'); do pdflatex --jobname=$j eiter14a.tex ; done
%

%\newcommand{\rev}[2]{{\color{blue}#2}}

% remove for extended version
\extendedtrue
% activate to show all proofs inline
% deactivate to show proofs where they are "located" (see macros below)
%\makeallproofsinlinetrue
\newif\iftechreport\techreportfalse

\usepackage{times}
\usepackage[scaled]{helvet}
\usepackage{courier}
\usepackage{xcolor}
\usepackage{multirow}

%\usepackage{stackengine,xcolor,graphicx}
%\stackMath
%\def\notsubseteq{\mathop{
%  \stackinset{l}{0pt}{c}{}{\wrule[2pt]{5pt}{.5pt}}{
%  \stackinset{l}{1.6pt}{c}{}{\wrule[-10pt]{.5pt}{3pt}}{
%  \stackinset{l}{0.5pt}{c}{1.5pt}{\rotatebox[origin=center]{45}{\wrule[2pt]{.5pt}{3pt}}}{
%  \stackinset{l}{0.5pt}{c}{}{\rotatebox[origin=center]{-45}{\wrule[-5pt]{.5pt}{3pt}}}{
%  \stackinset{l}{3.35pt}{c}{}{\wrule[-4pt]{.5pt}{10pt}}{
%  \stackinset{l}{5.1pt}{c}{}{\wrule[-4pt]{.5pt}{10pt}}{
% \subseteq}}}}}}
%}}
%\newcommand\wrule[3][0pt]{\textcolor{white}{\rule[#1]{#2}{#3}}}

\usepackage{latexsym}

% safe \url space
\usepackage{url}
\urlstyle{sf}
%\makeatletter
%% Inspired by http://anti.teamidiot.de/nei/2009/09/latex_url_slash_spacingkerning/
%% but slightly less kern and shorter underscore
%\let\UrlSpecialsOld\UrlSpecials
%\def\UrlSpecials{\UrlSpecialsOld\do\/{\Url@slash}\do\_{\Url@underscore}}%
%\def\Url@slash{\@ifnextchar/{\kern-.11em\mathchar47\kern-.2em}%
%    {\kern-.0em\mathchar47\kern-.08em\penalty\UrlBigBreakPenalty}}
%\def\Url@underscore{\nfss@text{\leavevmode \kern.06em\vbox{\hrule\@width.3em}}}
%\makeatother

\usepackage{paralist}
\usepackage{enumerate}
\usepackage{xspace}
\usepackage{bm}
\usepackage{amsmath}
\usepackage{amssymb}
\usepackage{mathptmx}

\usepackage{ifpdf}
\ifpdf
\usepackage{microtype}
\fi

\ifdraft
\usepackage{pdfsync}
\usepackage{srcltx}
\else
\fi

\newif\ifdotikz\dotikzfalse

% drafts do tikz, final versions do not:
% comment out next line for final version
\dotikztrue

\usepackage{graphicx}

\ifdotikz
\usepackage{tikz}
%\pgfrealjobname{hexcdcl}
\pgfrealjobname{main}
\usetikzlibrary{calc}
\else
\long\def\beginpgfgraphicnamed#1#2\endpgfgraphicnamed{\includegraphics{#1}}
\fi

%%%%%%%%%%%%  Our Packages %%%%%%%%%%%% 
%\usepackage{floatrow,subfig}
\usepackage{verbatim}
\usepackage{xspace}
\usepackage[ruled,linesnumbered,vlined]{algorithm2e}
\usepackage{epsfig}
\usepackage{epstopdf}
\usepackage{xcolor}
\usepackage{caption}
\usepackage{subcaption}
\usepackage[normalem]{ulem}

\newtheorem{thm}{Theorem}[section]
\newtheorem{corollary}{Corollary}[section]
\newtheorem{lemma}{Lemma}[section]
\newtheorem{example}{Example}[section]

%%%%%%%%%%%% COMMANDS AND MACROS (begin) %%%%%%%%%%%%%%%%%%%%%%%%%%%%

\newcommand{\ignore}[1]{}
\newcommand{\kcnfs}{\textsc{kcnfs}\xspace}
\newcommand{\glucose}{\textsc{glucose}\xspace}
\newcommand{\lingeling}{\textsc{lingeling}\xspace}
\newcommand{\bqcegar}{\textsc{bq-Cegar}\xspace}
\newcommand{\aigsolve}{\textsc{Aigsolve}\xspace}
\newcommand{\wasp}{\textsc{wasp}\xspace}
\newcommand{\clasp}{\textsc{clasp}\xspace}
\newcommand{\rareqs}{\textsc{RAReQS}\xspace}
\newcommand{\aqua}{\textsc{AQuA-S2V}\xspace}

\newcommand{\cQ}{\mathcal{Q}}

\newcommand{\cF}{\mathcal{F}}

\newcommand{\gctd}{\mathit{gctd}}
\newcommand{\sgctd}{\mathit{sgctd}}
\newcommand{\ctd}{\mathit{ctd}}
\newcommand{\dlp}{\mathit{dlp}}
\newcommand{\qbf}{\mathit{Q}}
\newcommand{\cnf}{\mathit{C}}
\newcommand{\dnf}{\mathit{D}}

\def\<{\langle}
\def\>{\rangle}
\def\int{\mathit{Int}}

\newcommand{\rra}{\rightarrow}

\newcommand{\lla}{\leftarrow}

\newcommand{\n}{\mathit{not}\;}

%%%%%%%%%%%% COMMANDS AND MACROS (end) %%%%%%%%%%%%%%%%%%%%%%%%%%%%

%\title{\papertitle\tnoteref{mytitlenote}}
%\tnotetext[mytitlenote]{Some of the results were presented in preliminary form at IJCAI~2017~\cite{IJCAIVERSION}.
%}
%

\title{New Models for Generating Hard\\ Random Boolean Formulas and\\ Disjunctive Logic Programs\footnote{Some of the results were presented in preliminary form at IJCAI~2017~\cite{IJCAIVERSION}.}}
\date{}
\author{Giovanni Amendola$^1$, Francesco Ricca$^1$, Miroslaw Truszczynski$^2$ \\
         $^1$University of Calabria, Italy,  
         $\{$amendola,ricca$\}$@mat.unical.it\\
         $^2$University of Kentucky, USA, 
         mirek@cs.uky.edu}

\usepackage{color}
\usepackage{todonotes}

\begin{document}
%\nocite{*}% includes all entries of BibTeX database into the list of references.

\label{firstpage}

\maketitle

\begin{abstract}
We propose two models of random quantified boolean formulas and their natural
random disjunctive logic program counterparts. 
The models extend the standard models of random $k$-CNF formulas 
and the Chen-Interian model of random 2QBFs.
The first model controls the generation of programs and QSAT formulas 
by imposing a specific structure on rules and clauses, respectively. 
The second model is based on a family of QSAT formulas in a non-clausal form. 
We provide theoretical bounds for the phase transition region in our models, 
and show experimentally the presence of the easy-hard-easy pattern
and its alignment with the location of the phase transition. 
%Importantly,
We show that boolean formulas and logic programs from our models are 
significantly harder than those obtained from the standard $k$-CNF and 
Chen-Interian models, and that their combination yields formulas and 
programs that are ``super-hard'' to evaluate. We also provide evidence
suggesting that formulas from one of our models are well suited for 
assessing solvers tuned to real-world instances. Finally, it is noteworthy
that, to the best of our knowledge, our models and results on random 
disjunctive logic programs are the first of their kind. 
\end{abstract}

%\begin{keyword}
%   Answer Set Programming, Random Boolean Formulas, Phase Transition, Random Logic Programs
%\end{keyword}

\section{Introduction}
Models for generating random instances of search problems have 
received much attention from the artificial intelligence community 
in the last twenty years. The results obtained for boolean 
satisfiability (SAT)~\cite{Achlioptas09,SelmanML96} and constraint satisfaction 
(CP)~\cite{DBLP:conf/cp/Mitchell02} have had a major impact on the development of fast and robust
solvers, significantly expanding their range of effectiveness as general 
purpose tools for solving hard search and optimization problems arising in
AI, and scientific and engineering applications. They also revealed an
intriguing phase-transition phenomenon often associated with the inherent 
hardness of instances, and provided theoretical and experimental basis for 
a good understanding of the ``region'' where the phase-transition occurs.

Models of random propositional formulas and QBFs that can reliably generate 
large numbers of instances of a \emph{desired} hardness are 
important~\cite{GentW99}. Inherently hard instances for SAT and QBF solvers 
are essential for designing and testing search methods employed by 
solvers~\cite{Achlioptas09},
and are used to assess their performance in solver competitions~%
\cite{JarvisaloBRS12,NarizzanoPT06,CalimeriGMR16}.
On the flip side, large collections of \emph{easy} instances 
support the so-called \emph{fuzz} testing, used to reveal problems in solver
implementation, as well as defects in solver design~\cite{BrummayerLB2010}.

Previous work on models of random formulas  focused on 
random CNF formulas and random prenex-form QBFs with the matrix in CNF or DNF 
(depending on the quantifier sequence). 
The fixed-length
clause model of $k$-CNF formulas and its 2QBF extension
have been especially well studied. 
Formulas in the fixed-length clause model consist of $m$ clauses over 
a (fixed) set of $n$ variables, each clause 
with 
$k$ non-complementary literals. All formulas are assumed to be equally 
likely. For that model it is known that there are reals $\rho_l(k)$ and
$\rho_u(k)$ 
such that if $m/n < \rho_l(k)$, a formula from the model is almost surely 
satisfiable (SAT), and if $m/n > \rho_u(k)$, almost surely unsatisfiable
(UNSAT).\footnote{We give a precise statement of these properties
in Section \ref{s21}.} It is conjectured that $\rho_l(k) = \rho_u(k)$. That conjecture 
is still open. However, it holds asymptotically, i.e., the two bounds 
converge to each other with $k\rightarrow \infty$ \cite{Achlioptas2}. For 
the best studied case of $k=3$, we have $\rho_l(3)\geq 3.52$ \cite{KaporisKL03}
and $\rho_u(3)\leq 4.49$ \cite{DiazKMP09}, and 
experiments show that the phase transition ratio $m/n$ is
close to $4.26$ \cite{CrawfordA96}. 
Important for the solver design and testing is that instances 
from the phase transition region are hard 
and those from regions on 
both sides of the phase transition are easy, a property called the 
\emph{easy-hard-easy} pattern \cite{MitchellSL92} or, more accurately,
the ``easy-hard-less hard'' pattern \cite{CoarfaDASV00}.   
%(almost surely SAT if $m/n < \rho_l(k)$ and almost surely UNSAT if 
%$m/n > \rho_u(k)$) \cite{MitchellSL92}. 
%Thus, when $k$ and $n$ are fixed, the only parameter that controls hardness 
%is $m$.
Empirical studies suggest that SAT solvers devised for solving random formulas 
are usually not effective with real world instances; \emph{vice versa}
solvers for industrial instances are less efficient on random 
formulas~\cite{JarvisaloBRS12}. This is often attributed 
to some form of (hidden) structure present in industrial problems that 
solvers designed for industrial applications can 
exploit~\cite{DBLP:conf/aaai/AnsoteguiBLM08}. Finding models to generate
random formulas with ``structure'' that behave similarly to those arising 
in practice is an important challenge~\cite{DBLP:conf/cp/KautzS03}. 
Ansotegui et al. \cite{DBLP:conf/ijcai/AnsoteguiBL09} 
presented the first 
model that may have this property: despite the ``randomness'' of its
instances, they are better solved by solvers tuned to 
industrial applications.
More recently, Gir{\'{a}}ldez{-}Cru and Levy \cite{DBLP:journals/ai/Giraldez-CruL16} proposed a model of random
SAT based on the notion of \textit{modularity}, and showed that formulas
with high modularity behave similarly to industrial ones.

The fixed-length clause model was extended to QBFs by Chen and Interian~\cite{ChenI05}. 
In addition to $n$ and $m$ (understood as above), their model includes 
parameters controlling the structure of formulas. 
Once these parameters are fixed, similar properties as in 
the case of the $k$-CNF model emerge. There is a phase transition region 
associated with a specific value of the ratio $m/n$ (that does not 
depend on $n$) and the easy-hard-easy pattern can be experimentally
verified. 

These two models are based on formulas in normal forms. However, many 
applications give rise to formulas in non-normal forms motivating
studies of solvers of non-normal form formulas and QBFs, and raising
the need of models of random non-normal form formulas. The \emph{fixed-shape} 
model proposed by Navarro and Voronkov \cite{NavarroV05}, and studied 
by Creignou et al. \cite{CreignouES12}, is a response to that challenge. 
The model is similar to that of the $k$-CNF one (or its extensions to QBFs), 
but \emph{fixed shape} (and size) non-normal form formulas are used in place 
of $k$-clauses as the key building blocks. 
Experimental studies again show the phase-transition and the easy-hard-easy pattern.

Motivated by the work on random SAT and QBF models, 
researchers proposed models of random logic programs, 
and obtained empirical and theoretical
results concerning their properties 
\cite{ZhaoL03,wst05,NamasivayamT09,WangWM15,DBLP:journals/tocl/WenWSL16}. 
Those results are limited to \textit{non-disjunctive} logic programs. No 
models for \textit{disjunctive} logic programs have been proposed so far. 
Such results would be of substantial interest to answer set programming 
(ASP) \cite{BrewkaET11}, a popular computational formalism based on 
disjunctive logic programs.

In this paper we propose two models of random QBF formulas and 
the corresponding models of disjunctive logic programs. 
First, we propose a \emph{controlled} version of the Chen-Interian model 
in which CNF formulas that are used as matrices are subject to additional 
conditions restricting their structure. 
Second, we propose \emph{multi-component} versions of the earlier models.%
\footnote{When we use the term ``multi-component model,'' we        
understand that the notion is parameterized by the underlying ``standard,''
or single-component, model.}
In the multi-component models, propositional formulas and matrices of QBFs 
are disjunctions of $t$ $k$-CNF formulas (either standard or ``controlled'').
They are not formulas from the fixed-shape model of Navarro and
Voronkov, as their building blocks (CNF or DNF formulas) do not have a 
fixed size. In each case, the standard translation from QBFs to disjunctive
programs suggests random models for the latter.

For the new models, we present theoretical bounds on the region where 
the phase transition is located, and study experimentally their behavior.
In our experiments, we consider several ASP, SAT and QBF solvers
to exclude any possible bias that could be an artifact of a particular solver.
We study the regions of hardness for the models and show empirically that 
they lie within their phase transition regions. 
We compare the hardness of the controlled model with the corresponding Chen-Interian model
and find that the former can generate formulas that are significantly harder.
For the multi-component versions of the standard random CNF and the Chen-Interian models 
we study hardness as a function of the ratio $m/n$ and of the number of components $t$.
The latter was of main interest to us. The results show that 
the multi-component model allows for controlling hardness of formulas
and programs in such a way that, even when the number of variables is 
fixed, raising $t$ may result in instances that are orders of magnitude harder to evaluate.  
Moreover, we show that the combination of controlled and multi-component model allows 
to generate instances that are ``super-hard'' to evaluate.

As Ans{\'{o}}tegui et al. \cite{DBLP:conf/ijcai/AnsoteguiBL09}, 
we compare SAT/QBF solvers designed for %handling
random instances with 
those designed for real-world ones. We find that for $t \geq 2$ our 
models generate instances better solved by solvers for real-world instances, 
and that the difference becomes more pronounced as $t$ grows. 
%Finally, 
For disjunctive logic programs, 
we measure the effect of $t$ on processing them %disjunctive programs, 
and show that $t$ allows us to control the amount of computation dedicated 
to stable model checking~\cite{DBLP:journals/ai/KochLP03}.

Our results provide new ways to generate hard and easy instances of 
propositional formulas, QBFs and disjunctive programs. Our models can
generate instances of increasing hardness with properties affecting 
solver performance in a similar way real-world instances do.
The results are particularly important to the development of disjunctive
ASP solvers, as no models for generating random disjunctive programs of 
desired hardness have been known before.

%%%%%%%%%%%%%%%%%%%%%%%%%%%%%%%%%%%%%%%%%%%%%%%%%%%%%%%%%%%%%%%%%%%%%%%%%%%
\section{Preliminaries}

A \emph{clause} is a set of literals that contains no pair of 
complementary literals. By a \emph{CNF formula} we mean an (ordered)
tuple of clauses with repetitions of clauses allowed. Disjunctions 
of CNF formulas are also assumed to be (ordered) tuples and they also 
allow repetitions. The dual concepts (such as DNF formulas) 
are defined similarly. 
In other models, CNF formulas are viewed as \emph{sets} of clauses,
and disjunctions of CNF formulas are viewed as sets of CNF 
formulas. However, assuming some reasonable limit on the number of 
clauses in a formula, and assuming in each case the uniform distribution, 
the two probabilistic models are asymptotically equivalent for properties that 
do not depend on the order (such as satisfiability). %for each such property, 
Specifically, as the number of atoms tends to infinity, the probability 
that such a property holds in one model and the corresponding probability 
for the other model converge to each other. (We offer
a technical justification for this claim in Appendix B.) 
Thus, there is no essential difference between the two models 
and we use them interchangeably.

By $\cnf(k,n,m)$ we denote the set of all $k$-CNF formulas consisting of
$m$ clauses over (some fixed) set of $n$ propositional variables. 
Similarly, $\dnf(k,n,m)$ stands for the set of all $k$-DNF formulas of
$m$ \emph{products} (conjunctions of non-complementary literals) over an 
$n$-element set of atoms. 

\subsection{The fixed-length clause model}\label{s21}
The model is given by
the set $\cnf(k,n,m)$ of CNF formulas, with all formulas assumed equally
likely. Formulas from the model can be generated % in a straightforward way
by selecting $m$ $k$-literal clauses over a set of $n$ %propositional 
variables uniformly, independently and with replacement. As we noted,
%earlier, 
the model is well understood. In particular, let us denote by 
$p(k,n,m)$ the probability that a random formula in $\cnf(k,n,m)$ is SAT. 
We define $\rho_l(k)$ to be the supremum over all
real numbers $\rho$ such that %for every $\rho < x$,
%\[
$\lim_{n\rra\infty} p(k,n,\lfloor \rho n\rfloor) =1$.
%\]
Similarly, we define $\rho_u(k)$ to be the infimum over all real numbers
$\rho$ such that %for every $\rho > x$,
%\[
$\lim_{n\rra\infty} p(k,n,\lfloor \rho n\rfloor) =0$.
%\]
%
As we mentioned, $\rho_l(k)$ and $\rho_u(k)$ are well defined. Moreover,
$\rho_l(k) \leq \rho_u(k)$ and, it is conjectured that $\rho_l(k)=\rho_u(k)$.
%The conjecture holds asymptotically, that is, $\lim_{k\rra \infty} \rho_l(k)
%=\lim_{k\rra \infty} \rho_u(k)$. 
Experimental results agree with these
theoretical predictions.  
%Because the model reliably generate SAT and UNSAT formulas of desired hardness,
%it has had a strong impact on the development of solvers. 

\subsection{The Chen-Interian model}
The model generates QBFs of the form $\forall X\exists Y F$. Sets $X$ and 
$Y$ are disjoint and contain all propositional variables that may appear 
in $F$. The sizes of $X$ and $Y$ are prescribed to some specific integers
$A$ and $E$, respectively. Moreover, each clause in $F$ contains $a$ 
literals over $X$ and $e$ literals over $Y$ for some specific values 
$a$ and $e$. We denote the set of all such CNF formulas $F$ with $m$
clauses by $\cnf(a,e;A,E;m)$. Clearly, $\cnf(a,e;A,E;m) \subseteq
\cnf(a+e,A+E,m)$. We write $Q(a,e;A,E;m)$ for the set of all QBFs 
$\forall X\exists Y F$, where $F\in \cnf(a,e;A,E;m)$. The Chen-Interian
model generates QBFs from $Q(a,e;A,E;m)$, with all formulas equally likely. 

Chen and Interian \cite{ChenI05} presented a comprehensive 
experimental study of the model. 
Let us denote by
$q(a,e;A,E;m)$ the probability that a random QBF from $\qbf(a,e;A,E;m)$
is true. Let $r > 0$ be fixed real. We set $\nu_l(a,e;r)$
to be the supremum over all real numbers $\nu$ such that %for every $\nu < x$,
%\[
$\lim_{n\rra\infty} q(a,e;A,E;\lfloor \nu n\rfloor) =1$,
%\]
where $A=\lfloor rE\rfloor$ and $n=A+E$. Similarly, we set $\nu_u(a,e;r)$
to be the infimum over all real numbers $\nu$ such that %for every $\nu > x$,
%\[
$\lim_{n\rra\infty} q(a,e;A,E;\lfloor \nu n\rfloor) =0$,
%\]
again with $A=\lfloor rE\rfloor$ and $n=A+E$. Chen and Interian proved the following result.

\begin{thm}
\label{ci1}
$\nu_l(a,e;r)$ and $\nu_u(a,e;r)$ are well defined.
\end{thm}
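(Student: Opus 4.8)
The plan is to sandwich $q(a,e;A,E;m)$ between two probabilities coming from the ordinary fixed-length clause model, whose thresholds $\rho_l(e)$ and $\rho_u(e)$ are already in hand from Section~\ref{s21}, and then to read off finiteness of $\nu_l$ and $\nu_u$ from nonemptiness and boundedness of their defining index sets. Throughout I keep in mind that $n=A+E$ with $A=\lfloor rE\rfloor$, so $E\sim n/(1+r)$ and a clause-to-variable ratio measured against $E$ is the ratio against $n$ multiplied by $1+r$. First I would prove the lower bound $q(a,e;A,E;m)\ge p(e,E,m)$. This is purely structural: the $Y$-parts of the $m$ clauses form a uniformly random member of $\cnf(e,E,m)$, and if that formula is satisfiable then a single satisfying assignment $y^{\ast}$ of $Y$ satisfies every clause of $F$ for \emph{every} assignment of $X$, so $\forall X\exists Y\,F$ is true with the one witness $y^{\ast}$; equivalently, falsity of the QBF forces the $Y$-projection to be unsatisfiable. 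Since $m=\lfloor\nu n\rfloor$ gives $m/E\to\nu(1+r)$, whenever $\nu(1+r)<\rho_l(e)$ Section~\ref{s21} yields $p(e,E,m)\to 1$ and hence $q\to 1$; thus every $\nu<\rho_l(e)/(1+r)$ belongs to the set defining $\nu_l$, which is therefore nonempty.

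Second I would prove a matching upper bound by exhibiting one bad universal assignment. Fix $x_0$, say all $X$-variables false. A clause reduces to its $Y$-part precisely when all $a$ of its $X$-literals are falsified, an event of probability $2^{-a}$ independent across clauses, and conditioned on survival the $Y$-part is a uniform $e$-clause; hence the residual $F|_{x_0}$ is a random $e$-CNF whose number of clauses is $\mathrm{Bin}(m,2^{-a})$, concentrated around $2^{-a}m$. If $F|_{x_0}$ is unsatisfiable then $\exists Y\,F(x_0,Y)$ fails and the QBF is false, so $q\le \Pr[F|_{x_0}\ \text{satisfiable}]$. When $2^{-a}\nu(1+r)>\rho_u(e)$, the surviving clause count exceeds $(\rho_u(e)+\varepsilon)E$ with high probability, and then, by monotonicity of $p(e,E,\cdot)$ in the number of clauses together with the definition of $\rho_u(e)$, $F|_{x_0}$ is unsatisfiable with high probability and $q\to 0$. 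Thus every $\nu>\rho_u(e)2^{a}/(1+r)$ belongs to the set defining $\nu_u$ (so it is nonempty) and lies outside the set defining $\nu_l$ (so the latter is bounded above).

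Assembling these, the set over which $\nu_l$ is a supremum is nonempty and bounded above, and the set over which $\nu_u$ is an infimum is nonempty and bounded below by $0$; hence $\nu_l$ and $\nu_u$ exist as finite reals, which is exactly the assertion. As a clean bonus, beyond what the statement requires, a coupling that reveals the $m$ clauses one at a time shows $q(a,e;A,E;m)$ is nonincreasing in $m$; passing to the limit makes the two index sets downward- and upward-closed and yields $\nu_l\le\nu_u$.

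The step I expect to require the most care is the translation between the ratio $\nu$, tied to $n$ through $m=\lfloor\nu n\rfloor$, and a ratio measured against $E$ as demanded by the single-block thresholds of Section~\ref{s21}; in the upper bound this is compounded by the residual having a \emph{random} number of clauses. Both difficulties are absorbed by concentration of the binomial survival count and by the monotonicity of $p(e,E,\cdot)$, which lets me compare the actual clause count with the deterministic threshold $\lfloor\rho E\rfloor$ in spite of the floor functions; making these comparisons hold uniformly as $n\to\infty$ is where the genuine bookkeeping lies, whereas the two sandwiching inequalities themselves are immediate.
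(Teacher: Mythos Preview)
The paper does not actually give its own proof of this statement: Theorem~\ref{ci1} is quoted as a result of Chen and Interian~\cite{ChenI05} and no argument is supplied. So there is nothing in the paper to compare your proof against directly.

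That said, your argument is correct and is precisely the sandwiching technique the paper itself deploys for its own Theorem~\ref{th:controlledThreshold} on the controlled model: bound $q$ from below by the probability that the $Y$-projection $F^Y$ is satisfiable (which forces the QBF true), and bound $q$ from above via a single fixed assignment $I$ to $X$ (unsatisfiability of $F|_I$ forces the QBF false). The only extra wrinkle in the Chen--Interian setting, which you handle and which does not arise in the controlled model, is that $F|_{x_0}$ has a \emph{random} number of surviving clauses; your use of binomial concentration plus the monotonicity of $p(e,E,\cdot)$ in the clause count is the right way to deal with this, and the translation between ratios in $n$ and ratios in $E$ via $E\sim n/(1+r)$ is straightforward. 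Your explicit bounds $\rho_l(e)/(1+r)\le \nu_l$ and $\nu_u\le 2^a\rho_u(e)/(1+r)$ are a pleasant byproduct that the paper does not state.
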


Clearly, $\nu_l(a,e;r)\leq \nu_u(a,e;r)$. Whether $\nu_l(a,e;r)=
\nu_u(a,e;r)$ is an open problem. The quantities $\nu_l(a,e;r)$ and 
$\nu_u(a,e;r)$ delineate the phase-transition region.
For QBFs generated from the model $Q(a,e;\lfloor rE\rfloor,E;\lfloor \nu 
n\rfloor)$ (with fixed $n$ and $r$), Chen and Interian experimentally 
observed the easy-hard-easy pattern as $\nu$ grows. They showed that the 
hard region is aligned 
with the phase transition, and that the same behavior emerges no matter what 
concrete $r$ is fixed as the ratio $A/E$.

%%%%%%%%%%%%%%%%%%%%%%%%%%%%%%%%%%%%%%%%%%%%%%%%%%%%%%%%%%%%%%%%%%%%%%%%%%%

\section{New models of random formulas and QBFs}
\label{sec:multicomponent}

We propose several variations of the models described above. They are based 
on two ideas. First, we impose an additional structure on clauses in CNF 
formulas
that serve as matrices of QBFs. Second, we consider disjunctions of CNF 
formulas both in the SAT and QBF setting.

\subsection{The controlled model}\label{sec:controlled}
To describe the model, we define first a version of a model of a random 
CNF formula. In this model, clauses are built 
of variables in a set $X\cup Y$, where $X \cap Y = \emptyset$; % are disjoint;
we set $|X|=A$ and $|Y|=E$. A formula in the model consists of $2A$ $k$-literal
clauses.
Each clause consists of a single literal over $X$ and $k-1$ literals over $Y$,
and for each literal over $X$ there is a single clause in the formula that 
contains it. A formula in this model is generated taking $2A$ $(k-1)$-literal
clauses
over $Y$ and extending each of them by a literal over $X$ (following some fixed
one-to-one mapping between the clauses and the literals over $X$). We denote 
this model (and the corresponding set of
formulas) by $C^{\ctd}(k,A,E)$. We write $Q^{\ctd}(k,A,E)$ for the model 
(and the set) of QBFs whose matrix is a formula from $C^{\ctd}(k,A,E)$. We 
refer to both models as \emph{controlled}. In our work we are primarily 
interested in the controlled model for QBFs. 

Clearly, $Q^\ctd(k,A,E) \subseteq Q(1,k-1;A,E;2A)$. Thus, the controlled 
model is related to the Chen-Interian model. The main difference is that
the clauses, while random with respect to existential variables are not
random with respect to universal variables. For each $x\in X$ there is
exactly one clause involving $x$ and exactly one clause involving $\neg x$.
Consequently, the number of clauses is $2A$ and, moreover, 
for every truth assignment to $X$, once we simplify the matrix accordingly, 
we are left with \emph{exactly} (hence, the term ``controlled'') $A$ 
$(k-1)$-literal clauses 
over $E$ variables. In contrast, in the case of the 
Chen-Interian model $Q(1,k-1;A,E;2A)$, similar simplifications leave us 
with $(k-1)$-CNF formulas with \emph{varying} number of clauses, with the 
\emph{average} number being $A$.
 
Let $q^\ctd(k,A,E)$ denote the probability that a random formula in 
$Q^{\ctd}(k,A,E)$ is true. As before, we define $\mu^\ctd_l(k)$
to be the supremum over all positive real numbers $\rho$ such that
% MT - check; for every $\rho < x$,
%\[
%$\lim_{E\rra\infty} q^\ctd(k,$ $\lfloor \rho E\rfloor,E) =1$,
$\lim_{E\rra\infty} q^\ctd(k,$ $\lfloor \rho E\rfloor,E) =1$,
%\]
and $\mu^\ctd_u(k)$ to be the infimum over all positive real numbers $\rho$ such that 
% MT Check; for every $\rho > x$,
%\[
%$\lim_{E\rra\infty} q^\ctd(k,$ $\lfloor \rho E\rfloor, E) =0$.
$\lim_{E\rra\infty} q^\ctd(k,$ $\lfloor \rho E\rfloor, E) =0$.
%\]

We will now derive bounds on $\mu^\ctd_l(k)$ and $\mu^\ctd_u(k)$ by 
exploiting results on random $(k-1)$-CNF formulas.
\begin{thm}\label{th:controlledThreshold} 
For every $k\geq 2$, $\mu^\ctd_l(k)\geq \frac{\rho_l(k-1)}{2}$ and $\mu^\ctd_u(k)\leq \rho_u(k-1)$.
%where $U_{k-1}$ and $L_{k-1}$ are an upper bound and a lower bound for random $(k-1)$-CNF formulas, respectively. 
\end{thm}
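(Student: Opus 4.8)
The plan is to reduce the truth of a controlled QBF to the satisfiability of ordinary random $(k-1)$-CNF formulas, exploiting the structural observation already highlighted in the text: for every truth assignment $\sigma$ to $X$, simplifying the matrix $F$ by $\sigma$ leaves \emph{exactly} $A$ clauses, each a $(k-1)$-clause over $Y$. Indeed, for each $x\in X$ the unique clause whose $X$-literal is falsified by $\sigma$ survives (with its $X$-literal dropped), while the companion clause is satisfied and vanishes. Hence $\forall X\exists Y\,F$ is true iff every one of the $2^A$ residual formulas $F|_\sigma$ is satisfiable, and false iff at least one $F|_\sigma$ is unsatisfiable. Since the $2A$ $(k-1)$-clauses are drawn i.i.d. uniformly, each individual residual $F|_\sigma$ is distributed exactly as a random formula in $\cnf(k-1,E,A)$, so its satisfiability probability equals $p(k-1,E,A)$.

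For the upper bound I would use only the ``forward'' direction of this equivalence. Truth of the QBF forces every residual to be SAT, so fixing any single $\sigma_0$ gives
\[
q^\ctd(k,A,E)\le p(k-1,E,A).
\]
If $\rho>\rho_u(k-1)$, then with $A=\lfloor\rho E\rfloor$ the clause-to-variable ratio in that residual is essentially $\rho>\rho_u(k-1)$, so by definition of $\rho_u(k-1)$ we have $p(k-1,E,\lfloor\rho E\rfloor)\to 0$ as $E\to\infty$, whence $q^\ctd\to 0$. Thus every $\rho>\rho_u(k-1)$ belongs to the set defining $\mu^\ctd_u(k)$, giving $\mu^\ctd_u(k)\le\rho_u(k-1)$.

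The lower bound is where the real idea lies, and where the naive approach fails. Bounding $1-q^\ctd$ by a union bound over the $2^A$ residuals would require $1-p(k-1,E,A)$ to decay faster than $2^{-A}$, which the definition of $\rho_l$ does not supply. Instead I would observe that every residual $F|_\sigma$ is a sub-multiset of the single $(k-1)$-CNF formula $\hat F$ obtained by deleting the $X$-literal from all $2A$ clauses of $F$. Consequently, if $\hat F$ is satisfiable then so are all $F|_\sigma$ simultaneously, and the QBF is true. Since $\hat F$ is distributed as a random formula in $\cnf(k-1,E,2A)$, this yields
\[
q^\ctd(k,A,E)\ge p(k-1,E,2A).
\]

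To finish, take $\rho<\rho_l(k-1)/2$ and set $\rho'=2\rho<\rho_l(k-1)$. With $A=\lfloor\rho E\rfloor$ we have $2A=2\lfloor\rho E\rfloor\le\lfloor 2\rho E\rfloor=\lfloor\rho' E\rfloor$, so by monotonicity of $p(k-1,E,\cdot)$ in the number of clauses, $p(k-1,E,2A)\ge p(k-1,E,\lfloor\rho' E\rfloor)\to 1$ by definition of $\rho_l(k-1)$. Hence $q^\ctd(k,\lfloor\rho E\rfloor,E)\to 1$ for every $\rho<\rho_l(k-1)/2$, giving $\mu^\ctd_l(k)\ge\rho_l(k-1)/2$. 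The main obstacle, and the crux of the argument, is precisely this lower bound: recognizing that the $2^A$ dependent residuals can be controlled all at once through their common ``parent'' $\hat F$ with $2A$ clauses, which is exactly what produces the factor of $2$ in $\rho_l(k-1)/2$.
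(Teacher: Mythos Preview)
Your proposal is correct and follows essentially the same route as the paper: the upper bound fixes a single assignment $\sigma_0$ and uses that $F|_{\sigma_0}$ is a random $(k-1)$-CNF with $A$ clauses over $E$ variables, and the lower bound observes that satisfiability of the ``parent'' formula $\hat F$ (the paper calls it $F^Y$) with all $2A$ stripped clauses implies truth of the QBF. Your write-up is in fact slightly more careful than the paper's about the floor-function inequality $2\lfloor\rho E\rfloor\le\lfloor 2\rho E\rfloor$ and the monotonicity of $p(k-1,E,\cdot)$, but the underlying argument is identical.
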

\begin{proof}

Let $\Phi\in Q^{\ctd}(k,A,E)$, 
$X=\{x_1,...,x_A\}$, and $Y=\{y_1,...,y_E\}$.
By the definition, $\Phi=\forall X \exists Y F$, where $F=C_1 \wedge\ldots\wedge C_{2A}$ 
is a $k$-CNF formula of $2A$ 
clauses $C_i= l_{i1}\vee\ldots\vee l_{ik}$, and where $l_{i1}$ is a literal 
over $X$ and $l_{i2},\ldots,l_{ik}$ are literals over 
$Y$. We define $C_i^Y=l_{i2}\vee\ldots\vee l_{ik}$ and $F^Y = C_1^Y\land\ldots\land C_{2A}^Y$. Moreover, for every
interpretation $I$ of $X$ we define $F|_I=\bigwedge\lbrace C_i^Y \mid C_i\in F \mbox{ and } I\not\models l_{i1} \}$.

Let us assume that $\Phi$ is selected from $Q^{\ctd}(k,A,E)$ uniformly at random. By the definition of the model $Q^\ctd(k,A,E)$, $F^Y$ can be regarded as selected from $C(k-1,2A,E)$ uniformly at random and, for each interpretation $I$ of $X$, $F|_I$ can be regarded as selected uniformly at random from $C(k-1,A,E)$. 

To derive an upper bound on $\mu^\ctd_u(k)$, let us fix an interpretation $I$ of $X$. Clearly, if $F|_I$ is unsatisfiable, then $\Phi$ is false. % (it is obviously not a necessary condition for $\Phi$ to be false, but it is a sufficient one). 
%Since $F$ is selected uniformly at random from $C^{\ctd}(k,A,E)$, $F|_I$ 
%can be regarded as selected uniformly at random from $C(k-1,E,A)$. 
Let us choose any real $\rho > \rho_u(k-1)$. If $A/E\geq \rho$, the probability that $F|_I$ is unsatisfiable 
tends to 1 with $E$ and, consequently, the probability that $\Phi$ is false tends 
to 0 with $E$, too. It follows that if $\rho > \rho_u(k-1)$ and $A/E\geq \rho$, the probability 
that $\Phi$ is true tends to 0 with $E$.
Since $\rho$ is an arbitrary real such that $\rho > \rho_u(k-1)$, 
$\mu^\ctd_u(k)\leq \rho_u(k-1)$ follows.

To prove the lower bound, we observe that if the formula $F^Y$
is satisfiable, then for every interpretation $I$ of $X$, the formula 
$F|_I$ is satisfiable or, equivalently, $\Phi$ is true.
Let $\rho$ be a positive real number such that $\rho < \frac{\rho_l(k-1)}{2}$. 
By the definition of $\rho_l(k-1)$, if we assume that
$A/E \leq \rho$, that is, $2A/E\leq 2\rho < \rho_l(k-1)$, the probability that $F^Y$
is satisfiable tends to 1 with $E$. Thus, the probability that $\Phi$ is true tends 
to 1 with $E$. It follows that $\mu^\ctd_l(k)\geq \frac{\rho_l(k-1)}{2}$.
\end{proof}

It follows that as $\rho$ grows, the properties of $Q^\ctd(k,\lfloor \rho E\rfloor,
E)$ change. For small values of $\rho$, randomly selected QBFs are almost 
surely true. As $\rho$ grows beyond $\mu^\ctd_l(k)$ the proportion of false 
formulas grows until, eventually, when $\rho$ grows beyond $\mu^\ctd_u(k)$,
the formulas in the model are almost surely false.
Clearly, $\mu^\ctd_l(k) \leq \mu^\ctd_u(k)$. 
As in the other cases, the question whether 
$\mu^\ctd_l(k)=\mu^\ctd_u(k)$ is open.

\subsection{The multi-component models}
Let $\cF$ be a class of propositional formulas (or a model of a random formula).
By $t$-$\cF$ we denote the class of all disjunctions of $t$ formulas from 
$\cF$ (or a model generating disjunctions of random formulas from $\cF$). 
Similarly, if $\cQ$ is a class (model) of QBFs of the form $\forall X\exists Y 
F$, where $F \in \cF$, we write $t$-$\cQ$ for the class (model) of all QBFs 
of the form $\forall X\exists Y F$, where $F \in \mbox{$t$-$\cF$}$. We refer 
to models $t$-$\cF$ and $t$-$\cQ$ as \emph{multi-component}. For QBFs we also 
consider the dual model to $t$-$\cQ$, based on conjunctions of $t$ DNF 
formulas. It gives rise to a multi-component model of disjunctive logic 
programs via the Eiter-Gottlob translation. In all cases, we assume that 
formulas (QBFs, respectively) are equally likely.
 
We first observe that the multi-component model $t$-$\cnf(k,n,m)$ has similar 
satisfiability properties as $\cnf(k,n,m)$, and that the phase transition regions in the two 
models are closely related. Let $p_t(k,n,m)$ be the probability that a 
random formula in $t$-$\cnf(k,n,m)$ is SAT. Clearly, $p_1(k,n,m)=p(k,n,m)$.

\begin{thm}
\label{thm:4}
Let $t\geq 1$ be a fixed integer. Then, for every $\rho < \rho_l(k)$,
%\[
$\lim_{n\rra\infty} p_t(k,n,\lfloor \rho n\rfloor) =1$,
%\]
and for every $\rho>\rho_u(k)$,
%\[
$\lim_{n\rra\infty} p_t(k,n,\lfloor \rho n\rfloor) =0$.
%\]
\end{thm}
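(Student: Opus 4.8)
The plan is to reduce satisfiability of a random $t$-component formula to the satisfiability of its individual components, whose behavior is already controlled by $\rho_l(k)$ and $\rho_u(k)$.

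First I would set up notation. A random formula in $t$-$\cnf(k,n,m)$ is a disjunction $F = F_1 \vee \ldots \vee F_t$, where each $F_j$ is an independent uniformly random formula from $\cnf(k,n,m)$. The key structural observation is that the disjunction $F$ is satisfiable if and only if \emph{at least one} component $F_j$ is satisfiable. Indeed, any satisfying assignment of some $F_j$ satisfies the disjunction, and conversely any satisfying assignment of the disjunction must satisfy at least one disjunct. This gives the exact identity
\[
p_t(k,n,m) = 1 - \bigl(1 - p(k,n,m)\bigr)^t,
\]
since the $t$ components are independent and each is SAT with probability $p(k,n,m)$.

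With this identity in hand, both limits follow by taking $n\to\infty$ along $m=\lfloor \rho n\rfloor$ and applying the corresponding limit for the single-component probability $p(k,n,\lfloor \rho n\rfloor)$. For the first claim, fix $\rho < \rho_l(k)$. By the definition of $\rho_l(k)$ we have $\lim_{n\to\infty} p(k,n,\lfloor \rho n\rfloor)=1$, hence $1-p(k,n,\lfloor \rho n\rfloor)\to 0$, and since $t$ is a fixed integer, $\bigl(1-p(k,n,\lfloor \rho n\rfloor)\bigr)^t\to 0$; therefore $p_t(k,n,\lfloor \rho n\rfloor)\to 1$. For the second claim, fix $\rho>\rho_u(k)$. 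By the definition of $\rho_u(k)$ we have $\lim_{n\to\infty} p(k,n,\lfloor \rho n\rfloor)=0$, so $1-p(k,n,\lfloor \rho n\rfloor)\to 1$, and raising a quantity tending to $1$ to the fixed power $t$ again tends to $1$; hence $p_t(k,n,\lfloor \rho n\rfloor)=1-\bigl(1-p(k,n,\lfloor \rho n\rfloor)\bigr)^t\to 0$.

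I do not expect a serious obstacle here, since the argument is essentially a single clean algebraic identity combined with continuity of $x\mapsto x^t$ at the endpoints. The one point requiring a little care is justifying the independence of the components so that the satisfiability probabilities genuinely multiply; this follows directly from the model's definition, in which the $t$ disjuncts are drawn independently. A secondary subtlety is that the disjunction inherits the same variable set $\{1,\ldots,n\}$ for every component, so no renaming or variable-count bookkeeping is needed, and the definitions of $\rho_l(k)$ and $\rho_u(k)$ apply to each component verbatim. Thus the crux is recognizing the disjunctive satisfiability criterion, after which the limits are immediate.
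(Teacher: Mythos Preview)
Your proposal is correct and follows essentially the same approach as the paper: both derive the identity $p_t(k,n,m)=1-(1-p(k,n,m))^t$ from the independence of the $t$ components and the fact that a disjunction is satisfiable iff some disjunct is, and then pass to the limit using the known behavior of $p(k,n,\lfloor\rho n\rfloor)$ for $\rho<\rho_l(k)$ and $\rho>\rho_u(k)$. Your write-up is in fact slightly more detailed than the paper's, which simply states the identity and notes that the two limit statements follow.
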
 
\begin{proof}
As we discussed earlier, we can assume that our model actually generates 
ordered $t$-tuples of $\cnf(k,n,m)$ formulas (they represent disjunctions 
of $t$ formulas from the model $\cnf(k,n,m)$, where repetitions of disjuncts 
are allowed, and disjunctions differing in the order of disjuncts are viewed
as different). Thus, it is clear that 
\begin{equation}\label{eq:prob}
p_t(k,n,m) = 1 - (1 - p(k,n,m))^t.
\end{equation}
It follows that for every fixed $t$, and every $\rho$,
\[
\lim_{n\rra\infty} p_t(k,n,\lfloor \rho n\rfloor) =0 \quad\mbox{if and only if}\quad \lim_{n\rra\infty} p(k,n,\lfloor \rho n\rfloor) = 0
\]
and 
\[
\lim_{n\rra\infty} p_t(k,n,\lfloor \rho n\rfloor) =1 \quad\mbox{if and only if}\quad \lim_{n\rra\infty} p(k,n,\lfloor \rho n\rfloor) = 1.
\]
Thus, the assertion follows.
\end{proof}

Theorem \ref{thm:4} implies that if the phase transition conjecture holds for the single
component model $\cnf(k,n,m)$, it also holds for the multi-component model
$t$-$\cnf(k,n,m)$, and the threshold value is the same for every $t$. 

Theorem \ref{thm:4} describes the situation when $t$ is fixed and $n$
is large. When $n$ is fixed and $t$ grows, the identity (\ref{eq:prob})
shows that the region of the transition from SAT to UNSAT shifts to the 
right. (Of course, by Theorem \ref{thm:4}, once we stop growing $t$
and start increasing $n$ again, the phase transition region will move back
to the left.) Our experimental study discussed later provides results 
consistent with this theoretical analysis. Moreover, our experiments also
show that 
the phase transition region is where the hard formulas are located, and
that hardness depends significantly on $t$.

We also considered the multi-component model $t$-$Q(a,e;A,E;m)$ of QBFs,
with the Chen-Interian model as its single-component specialization.
Let $q_t(a,e;A,E;m)$ be the probability that a random QBF 
from $t$-$\qbf(a,e;A,E;m)$ is true (in particular, $q_1(a,e;A,E;m)=$
$q(a,e;A,E;m)$). 
Using Theorem \ref{ci1} and reasoning as above, we can prove that
the phase transition regions for different values of $t$ coincide (and 
coincide with the phase transition region in the Chen-Interian model).  

\begin{thm}
\label{thm:4qbf}
For every integer $t\geq 1$ and real $r > 0$, if 
$\nu < \nu_l(a,e;r)$, then
{$\lim_{n\rra\infty} q_t(a,e;A,E;\lfloor \nu n\rfloor) =1$,}
%\]
and if $\nu > \nu_u(a,e;r)$, 
%\[
$\lim_{n\rra\infty} q_t(a,e;A,E;\lfloor \nu n\rfloor) =0$
%\]
(where $A=\lfloor r E\rfloor$ and $n=A+E$).
\end{thm}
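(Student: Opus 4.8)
The plan is to sandwich $q_t(a,e;A,E;m)$ between functions of the single-component probability $q(a,e;A,E;m)$, in the same spirit as the identity (\ref{eq:prob}) was used for the unquantified case in Theorem~\ref{thm:4}, and then invoke Theorem~\ref{ci1} (which guarantees that $\nu_l(a,e;r)$ and $\nu_u(a,e;r)$ are well defined) to pass to the limit. Write $\Phi=\forall X\exists Y F$ with $F=F_1\vee\cdots\vee F_t$, the $F_i$ drawn independently from $\cnf(a,e;A,E;m)$, and for an interpretation $I$ of $X$ let $F_i|_I$ be the induced $e$-CNF over $Y$. Two semantic facts drive everything: $\Phi$ is true iff for \emph{every} $I$ the formula $\bigvee_i F_i|_I$ is satisfiable, and a disjunction of CNFs is satisfiable iff one of its disjuncts is. Throughout, $A=\lfloor rE\rfloor$, $n=A+E$, and $m=\lfloor\nu n\rfloor$, so that $q(a,e;A,E;m)\to1$ when $\nu<\nu_l(a,e;r)$ and $q(a,e;A,E;m)\to0$ when $\nu>\nu_u(a,e;r)$.

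For the first assertion ($\nu<\nu_l$) I would argue as for the lower bound in Theorem~\ref{thm:4}. If at least one component QBF $\Phi_i=\forall X\exists Y F_i$ is true, then for each $I$ the $Y$-witness that satisfies $F_i|_I$ already satisfies $\bigvee_j F_j|_I$, so $\Phi$ is true. Since the $F_i$ are independent, $q_t(a,e;A,E;m)\ge 1-(1-q(a,e;A,E;m))^t$, and because $q(a,e;A,E;m)\to1$ for $\nu<\nu_l(a,e;r)$, the right-hand side tends to $1$; hence $q_t\to1$. This step is routine.

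The second assertion ($\nu>\nu_u$) is where I expect the real difficulty. The naive analogue of the lower-bound step fixes a single interpretation $I_0$: since $\Phi$ true forces $\bigvee_i F_i|_{I_0}$ satisfiable, and the $F_i|_{I_0}$ are independent, one obtains $q_t\le 1-u^t$ with $u=\Prob(F_1|_{I_0}\text{ is unsatisfiable})$ (independent of $I_0$ by the flip symmetry of the model). This bound tends to $0$ only if $u\to1$, i.e.\ only if a \emph{single} restriction is almost surely unsatisfiable. But $\nu>\nu_u(a,e;r)$ merely says that $\Phi_1$ is almost surely false, that is, that \emph{some} falsifying interpretation exists among the $2^A$ candidates; it does \emph{not} say that a fixed one is falsified with probability tending to $1$. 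The genuine requirement for the multi-component model is the existence of an interpretation that falsifies \emph{all $t$ components simultaneously}, equivalently $\Prob(\exists I:\ \forall i,\ F_i|_I\text{ unsatisfiable})\to1$.

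I would attack this via a moment computation on $N_t=\sum_I\prod_{i=1}^t\mathbf{1}[F_i|_I\text{ is unsatisfiable}]$, whose mean is $\E[N_t]=2^A u^t$, followed by a second-moment (Paley--Zygmund) argument to force $N_t\ge1$ with high probability. The crux — and the step I expect to dominate the proof — is therefore not the algebra but controlling $u$ above $\nu_u$: one needs that, for $\nu>\nu_u(a,e;r)$, the falsifying interpretations of a single component are \emph{abundant} enough (so that $2^A u^t\to\infty$ and the $t$-fold intersection survives), rather than merely nonempty. Establishing such abundance, together with the control of the correlations between $F_i|_I$ and $F_i|_{I'}$ needed for the variance bound, appears to require finer information about the Chen--Interian transition than the bare well-definedness supplied by Theorem~\ref{ci1}; this is the part I would expect to be delicate.
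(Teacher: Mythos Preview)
Your handling of the case $\nu<\nu_l(a,e;r)$ is correct and is all that is needed there: the inequality $q_t\ge 1-(1-q)^t$, coming from ``some $\Phi_i$ true $\Rightarrow$ $\Phi$ true'', already forces $q_t\to 1$.

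For $\nu>\nu_u(a,e;r)$ the paper does \emph{not} carry out anything like the analysis you outline. It simply asserts, by direct analogy with the SAT identity $p_t=1-(1-p)^t$ of Theorem~\ref{thm:4}, the \emph{equality}
\[
q_t(a,e;A,E;m)\;=\;1-\bigl(1-q(a,e;A,E;m)\bigr)^t
\]
and concludes immediately. You have correctly put your finger on the problem with that step. In the SAT case the identity holds because a disjunction is satisfiable iff some disjunct is; but $\forall X\exists Y(F_1\vee\cdots\vee F_t)$ is true iff for every assignment $I$ to $X$ \emph{some} $F_i|_I$ is satisfiable, which is strictly weaker than the event that some $\forall X\exists Y\,F_i$ is true, since universal quantification does not distribute over disjunction. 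Already for $a=e=A=E=1$, $m=t=2$ one computes $q=3/4$ and $q_2=31/32\neq 15/16=1-(1-q)^2$; only the inequality $q_t\ge 1-(1-q)^t$ is valid, and it points the wrong way for the upper assertion.

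So you are being more careful than the paper itself. Its argument for $\nu>\nu_u$ rests on an identity that fails in the quantified setting, and your diagnosis---that one needs quantitative information about how \emph{many} assignments $I$ falsify a single component above $\nu_u$, plus second-moment control of the correlations across $I$---is exactly the missing ingredient. That ingredient is not supplied by Theorem~\ref{ci1} alone, so your caution is warranted even though the paper does not share it.
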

\begin{proof}
For the proof, we will assume that the model $t$-$Q(a,e;A,E;m)$ generates 
QBFs with matrices that are ordered $t$-tuples of formulas generated from 
the model $C(a,e;A,E;n)$. As before, we have that for each fixed
positive integer $t$,
\begin{equation}
\label{eq:qbf}
q_t(a,e;A,E;\lfloor \nu n\rfloor) = 1 - (1 - q(a,e;A,E;\lfloor \nu n\rfloor))^t.
\end{equation}
This identity implies the claim in the same way as (\ref{eq:prob}) implied the 
assertion of Theorem \ref{thm:4}.
\nop{START NOP 5
Let $\nu < \nu_l(a,e;r)$.
Then, by definition of $\nu_l(a,e;r)$, $\lim_{n\rra\infty} q(a,e;A,E;\lfloor \nu n\rfloor) =1$.
Therefore, $\lim_{n\rra\infty} q_t(a,e;A,E;$ $\lfloor \nu n\rfloor) = $
$\lim_{n\rra\infty} \big( 1 - (1 - q(a,e;A,E;\lfloor \nu n\rfloor))^t \big) =$
$ \big( 1 - (1 - \lim_{n\rra\infty} q(a,e;A,E;\lfloor \nu n\rfloor))^t \big) = $
$ 1 - (1-1)^t = 1$.
Let $\nu > \nu_u(a,e;r)$.
Then, by definition of $\nu_u(a,e;r)$, $\lim_{n\rra\infty} q(a,e;A,E;\lfloor \nu n\rfloor) =0$.
Therefore, 
$\lim_{n\rra\infty} $ $q_t(a,e;A,E;\lfloor \nu n\rfloor) = $
$\big( 1 - (1 - \lim_{n\rra\infty} q(a,e;A,E;\lfloor \nu n\rfloor))^t \big) =  $
$1 - (1-0)^t = 0$. END NOP 5}
\end{proof}

The experimental results on satisfiability of QBFs from $t$-$Q(a,e;A,E;m)$,
which we present in Section \ref{sec:experiments}, agree with our theoretical analysis; we will also see 
there the easy-hard-easy 
pattern and a strong dependence of hardness on $t$. 

Finally, we considered the multi-component model $t$-$Q^\ctd(k,A,E)$, which 
incorporates both ideas we proposed in the paper. As in the other two cases, 
it is easy to derive the existence of the phase transition region and its 
invariance with respect to $t$ from the results on the underlying 
single-component model which, for the controlled model are given in
Theorem \ref{th:controlledThreshold}. Let $q_t^\ctd(k,A,E)$ denote the 
probability that a random formula in $t$-$Q^{\ctd}(k,A,E)$ is true.

\begin{thm}
\label{thm:t-ctd}
For every integer $t\geq 1$, if $\rho < \mu_l^\ctd(k)$, then
$\lim_{E\rra\infty} q_t^\ctd(k,\lfloor \rho E\rfloor,E)=1$,
and if $\rho > \mu_u^\ctd(k)$,
$\lim_{E\rra\infty} q_t^\ctd(k,\lfloor \rho E\rfloor,E)=0$,
\end{thm}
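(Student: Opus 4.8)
The plan is to reduce everything to the single-component controlled model and Theorem~\ref{th:controlledThreshold}, exactly as Theorems~\ref{thm:4} and~\ref{thm:4qbf} reduce to their single-component counterparts through identities~(\ref{eq:prob}) and~(\ref{eq:qbf}). As there, I would first fix a representation: view a sample from $t$-$Q^{\ctd}(k,A,E)$ as a QBF $\Phi=\forall X\exists Y\,F$ with matrix $F=F_1\vee\cdots\vee F_t$, where $(F_1,\dots,F_t)$ is an ordered tuple of independent uniform samples from $C^{\ctd}(k,A,E)$ and $A=\lfloor\rho E\rfloor$. Writing $q^{\ctd}=q^{\ctd}(k,A,E)$ for the single-component truth probability, the aim is to relate $q_t^{\ctd}$ to $q^{\ctd}$ and then read off both limits from the definitions of $\mu_l^{\ctd}(k)$ and $\mu_u^{\ctd}(k)$.

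For the satisfiable side ($\rho<\mu_l^{\ctd}(k)$) the useful fact is monotonicity in the disjuncts. Since each $F_j$ occurs as a disjunct of the matrix, a $Y$-witness for $F_j|_I$ is also a witness for $F|_I$ for every interpretation $I$ of $X$; hence truth of a single-component QBF $\forall X\exists Y\,F_j$ already forces $\Phi$ to be true. Therefore $\{\Phi\text{ true}\}\supseteq\bigcup_{j=1}^{t}\{\forall X\exists Y\,F_j\text{ true}\}$, and by independence of the components $q_t^{\ctd}\ge 1-(1-q^{\ctd})^t$. Because $\rho<\mu_l^{\ctd}(k)$ gives $q^{\ctd}\to 1$ by the definition of $\mu_l^{\ctd}(k)$, we get $(1-q^{\ctd})^t\to 0$ and hence $q_t^{\ctd}\to 1$.

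The unsatisfiable side ($\rho>\mu_u^{\ctd}(k)$) is the main obstacle, and it is genuinely more delicate than in Theorems~\ref{thm:4} and~\ref{thm:4qbf}. The reason is that existential quantification distributes over the disjunction, so $\Phi$ is false iff there is a \emph{single} interpretation $I$ of $X$ at which all of $F_1|_I,\dots,F_t|_I$ are simultaneously unsatisfiable; single-component falsity, by contrast, only supplies a (possibly different) falsifying $I$ for each component separately, so the product relation used in~(\ref{eq:qbf}) gives only a lower bound on $q_t^{\ctd}$ and cannot be used here. The clean first attempt is the dual of the lower-bound argument in Theorem~\ref{th:controlledThreshold}: fix one interpretation $I_0$, observe that $F_1|_{I_0},\dots,F_t|_{I_0}$ are independent random $(k-1)$-CNF formulas from $C(k-1,A,E)$ with ratio $A/E\to\rho$, and conclude that once $\rho$ exceeds the unsatisfiability threshold of $(k-1)$-CNF the probability that all $t$ of them are unsatisfiable at $I_0$ tends to $1$, whence $q_t^{\ctd}\to 0$. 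The delicate point I expect to have to address is that this fixed-$I_0$ estimate forces $q_t^{\ctd}\to 0$ only for $\rho>\rho_u(k-1)$; pushing the conclusion down to the stated threshold $\mu_u^{\ctd}(k)$ requires controlling a \emph{common} falsifying interpretation across the exponentially many choices of $I$, for instance via a first/second-moment argument over interpretations rather than the single-component relation alone.
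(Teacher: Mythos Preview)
Your satisfiable side matches the paper exactly in spirit: the paper writes the product relation
\[
q_t^{\ctd}(k,\lfloor \rho E\rfloor,E) \;=\; 1 - \bigl(1 - q^{\ctd}(k,\lfloor \rho E\rfloor,E)\bigr)^t
\]
and reads off both limits from it, just as in Theorems~\ref{thm:4} and~\ref{thm:4qbf}. Your lower-bound argument $q_t^{\ctd}\ge 1-(1-q^{\ctd})^t$ is the part of that identity that is actually justified, and it suffices for $\rho<\mu_l^{\ctd}(k)$.

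Where you diverge from the paper is precisely where you have put your finger on a real issue. The paper asserts the displayed relation as an \emph{equality} and uses it for the unsatisfiable side as well. But your observation is correct: for a $\forall X\exists Y$ prefix with a disjunctive matrix, truth of $\Phi=\forall X\exists Y\,(F_1\vee\cdots\vee F_t)$ requires only that for every $I$ some $F_j|_I$ be satisfiable, whereas falsity of every single-component $\forall X\exists Y\,F_j$ merely supplies possibly different falsifying assignments $I_j$. Hence the event ``$\Phi$ true'' strictly contains $\bigcup_j\{\forall X\exists Y\,F_j\ \text{true}\}$ in general, and the product relation is only the inequality $q_t^{\ctd}\ge 1-(1-q^{\ctd})^t$, which gives no upper bound on $q_t^{\ctd}$ and therefore does not yield $q_t^{\ctd}\to 0$. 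In other words, the paper's one-line proof for the unsatisfiable side relies on an identity that fails in the quantified setting (the same remark applies verbatim to the paper's proof of Theorem~\ref{thm:4qbf}).

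Your proposed repair via a fixed interpretation $I_0$ is sound and is exactly the mechanism behind the upper bound in Theorem~\ref{th:controlledThreshold}: for $\rho>\rho_u(k-1)$ the $t$ independent $(k-1)$-CNF formulas $F_j|_{I_0}$ are each unsatisfiable with probability tending to~$1$, hence all of them simultaneously, and $\Phi$ is false. You are also right that this only reaches the threshold $\rho_u(k-1)$, not the stated $\mu_u^{\ctd}(k)$; since Theorem~\ref{th:controlledThreshold} gives only $\mu_u^{\ctd}(k)\le\rho_u(k-1)$, closing that gap would indeed require an argument that produces a \emph{common} falsifying $I$ from single-component falsity (or a direct moment computation over the $2^A$ assignments), and neither your proposal nor the paper supplies one. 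So your proposal does not yet prove the theorem as stated, but the shortfall you flag is a genuine gap in the paper's own argument, not an artifact of your approach.
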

\begin{proof}
For the proof, we will assume that the model $t$-$Q^\ctd(k,\lfloor \rho E\rfloor,E)$ generates
QBFs with matrices that are ordered $t$-tuples of formulas generated from
the model $C^\ctd(k,\lfloor \rho E\rfloor,E)$ (disjunctions of $t$ formulas 
from the model, where repetitions of disjuncts are allowed and the order 
matters). As in the two classes of multi-component models we considered above,
we have that for each fixed positive integer $t$,
\begin{equation}
\label{eq:qbf}
q_t^\ctd(k,\lfloor \rho E\rfloor,E) = 1 - (1 - q^\ctd(k,\lfloor \rho E\rfloor,E))^t.
\end{equation}
This identity, when combined with Theorem \ref{th:controlledThreshold}, 
implies the assertion.
\end{proof}

We also studied the model $t$-$Q^\ctd(k,A,E)$ experimentally. The results are 
reported in Section \ref{sec:experiments}. As in other cases, they agree with
the predictions of the theoretical anaylysis above. Importantly, they show
that formulas from the model $t$-$Q^\ctd(k,A,E)$ can be ``super-hard.'' That
is, using ``controlled form'' CNF formulas in the disjunctions of a 
multi-component model, yields a way to generate formulas that are much 
harder than those generated from any other model considered before.

%%%%%%%%%%%%%%%%%%%%%%%%%%%%%%%%%%%%%%%%%%%%%%%%%%%%%%%%%%%%%%%%%%%%%%%%%%%
\section{Random Disjunctive Programs}\label{sec:encoding}

Our results on QBFs imply models of random disjunctive logic programs. This
is important as disjunctive logic programs increase the expressive power of 
answer set programming posing, at the same time, a computational challenge
\cite{BrewkaET11,KRHBook08}. 

Our approach to design models of random disjunctive programs is based on 
the translation from QBFs to programs due to Eiter and Gottlob 
\cite{EiterG95}. The Eiter-Gottlob translation works on QBFs $\Phi=
\exists X\forall Y G$, where $G$ is a DNF formula. 

To describe the translation, let us assume that $X=\{x_1,\ldots,x_E\}$,
$Y=\{y_1,\ldots,y_A\}$ and $G=D_1 \lor\ldots \lor D_m$, where 
%$D_i= L_{i,1} \land\ldots \land L_{i,a+e}$
\textcolor{black}{$D_i= L_{i,1} \land\ldots \land L_{i,k_i}$}
and $L_{i,j}$ are literals over $X\cup Y$.
For every atom $z\in X\cup Y$ we introduce a fresh atom $z'$. For every
$z\in X\cup Y$, we set $\sigma(z)=z$ and $\sigma(\neg z)=z'$. Finally,
we introduce one more fresh atom, say $w$, and define a disjunctive logic
program $P_\Phi$ to consist of the following rules:

\begin{tabbing}
xxx\=xxxxxxxxxxxxxxxxxxxxxxxxxx\= \kill
\>$z\lor z'$\>for each $z\in X\cup Y$\\
\>$y\lla w\ \ \mbox{and}\ \ y'\lla w$\>for each $y\in Y$\\
%\>$w \lla \sigma(L_{i,1}),\ldots,\sigma(L_{i,a+e})$
\>\textcolor{black}{$w \lla \sigma(L_{i,1}),\ldots,\sigma(L_{i,k_i})$}
\>for each $D_i$, $i=1,\ldots,m$\\
\>$w \lla \n w$
\end{tabbing}

\begin{thm}[Eiter and Gottlob \cite{EiterG95}]
Let $\Phi$ be a QBF $\exists X\forall Y G$, where $G$ is a DNF formula over
$X\cup Y$. Then $\Phi$ is true if and only if $P_\Phi$ has an answer set.
\end{thm}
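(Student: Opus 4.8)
The plan is to establish the biconditional by analyzing the structure of answer sets of $P_\Phi$ in terms of truth assignments to $X$ and $Y$. The key observation is that the disjunctive rules $z \lor z'$ (for each $z \in X \cup Y$) force any answer set to commit, for each atom $z$, to including $z$ or $z'$ (or possibly both, but minimality will restrict this). I would read an interpretation $M$ of the program as encoding a truth assignment: $z \in M$ means $z$ is true, $z' \in M$ means $z$ is false. The atom $w$ plays a special role: the constraint-like rule $w \lla \n w$ cannot be satisfied in any stable model containing $w$ unless $w$ is \emph{supported} independently of itself, and it forbids any stable model \emph{not} containing $w$. So the crux is showing that $w$ must belong to every answer set, and that $w$ can be derived (supported) exactly when the current assignment to $X \cup Y$ falsifies $G$.

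First I would prove the forward direction in its contrapositive/direct form by leveraging the semantics of the $\exists X \forall Y$ quantification. Suppose $\Phi$ is true, so there is an assignment to $X$ under which $G$ is false for \emph{every} assignment to $Y$. I would construct a candidate answer set $M$: pick $z$ or $z'$ according to the witnessing assignment to $X$, include \emph{both} $y$ and $y'$ for every $y \in Y$ (justified by the rules $y \lla w$ and $y' \lla w$ once $w$ is in), and include $w$. The rules $w \lla \sigma(L_{i,1}),\ldots,\sigma(L_{i,k_i})$ let $w$ be derived whenever some disjunct $D_i$ is ``satisfied'' by the $\sigma$-image; I must check that with both $y, y'$ present, the derivation of $w$ corresponds precisely to $G$ being true under \emph{some} completion of the $X$-assignment to $Y$. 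Here the interplay is subtle: including both $y$ and $y'$ makes every literal over $Y$ ``available,'' so $w$ is derivable iff $G$ is satisfiable over the fixed $X$-part, i.e. iff the $\forall Y$ fails. I then verify $M$ is a \emph{minimal} model of the reduct $P_\Phi^M$, which is where the disjunctive minimality check and the role of $w \lla \n w$ (dropped from the reduct when $w \in M$) must be handled carefully.

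For the converse, I would assume $P_\Phi$ has an answer set $M$ and extract from it a witnessing assignment to $X$. Since $w \lla \n w$ is in the program, no stable model can omit $w$ (omitting $w$ leaves the rule $w \lla \n w$ firing, forcing $w$ in, a contradiction), so $w \in M$; consequently the reduct contains $y \lla w$ and $y' \lla w$, and by minimality/support $M$ contains the full $\{y, y'\}$ for each $y \in Y$ and a definite choice for each $x \in X$. The support requirement for $w$ then says some disjunct $D_i$ has all its $\sigma$-images true in $M$, which by the above correspondence means the $X$-assignment read off from $M$ makes $\forall Y\, \neg G$ hold — that is, $\Phi$ is true. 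I expect the main obstacle to be the minimality argument under disjunction: because answer sets of disjunctive programs are \emph{minimal} models of the reduct (not merely supported models), I must argue that an answer set cannot contain \emph{both} $x$ and $x'$ for a universal-free choice atom in $X$ while remaining minimal, yet \emph{must} contain both $y$ and $y'$ for $Y$; disentangling why the $X$-atoms behave as a genuine existential choice (minimality forces exactly one) while the $Y$-atoms are forced to be fully present (via the $w$-rules, encoding the universal quantifier as ``all valuations simultaneously'') is the delicate heart of the proof, and is exactly the mechanism by which the translation encodes the $\exists\forall$ alternation.
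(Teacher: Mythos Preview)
The paper does not give its own proof of this theorem; it is stated as a result of Eiter and Gottlob and simply cited. So there is nothing in the paper to compare against beyond the statement itself. That said, your sketch contains two genuine errors that would prevent it from going through.

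First, a polarity slip. You write ``Suppose $\Phi$ is true, so there is an assignment to $X$ under which $G$ is \emph{false} for every assignment to $Y$.'' But $\Phi=\exists X\forall Y\,G$ being true means there is an $X$-assignment under which $G$ is \emph{true} for every $Y$-assignment. The same flip recurs in your converse (``makes $\forall Y\,\neg G$ hold --- that is, $\Phi$ is true''). This is not cosmetic: it propagates into which direction of the minimality check you try to use.

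Second, and more seriously, in the converse you rely on ``support'' of $w$: from $w\in M$ you conclude that some body $\sigma(D_i)$ lies in $M$. Since $M$ contains \emph{both} $y$ and $y'$ for every $y\in Y$, this only says that the $X$-part of some $D_i$ is consistent with the $X$-assignment $\alpha$ read off from $M$, i.e.\ at best $\exists Y\,G(\alpha,Y)$. That is far weaker than the $\forall Y\,G(\alpha,Y)$ you need. The mechanism that actually encodes the universal quantifier is the \emph{full} minimality check: for every $Y$-assignment $\beta$, the set $M_\beta\subsetneq M$ obtained by dropping $w$ and keeping exactly one of $y,y'$ according to $\beta$ (and the same $X$-choice) must \emph{fail} to be a model of the reduct. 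That failure forces some rule $w\lla \sigma(D_i)$ to have its body true in $M_\beta$, i.e.\ $G(\alpha,\beta)$ holds. Ranging over all $\beta$ gives $\forall Y\,G(\alpha,Y)$. Your closing paragraph hints that the saturation of the $Y$-atoms is where the universal quantifier lives, but the argument you actually wrote does not exploit it; you need to test minimality against the whole family $\{M_\beta\}$, not just against derivability of $w$ inside $M$.
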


We will use this result to derive models of disjunctive logic programs 
from the models of QBFs that we considered above. We recall that these 
models consist of formulas of the form $\forall X \exists Y F$, where 
$F$ is a CNF formula. Before we can apply the Eiter-Gottlob translation,
we have to transform these models (their formulas) into their dual 
counterparts. 

To this end, for a CNF formula $F$, we denote by $\overline{F}$ the formula
obtained from $\neg F$ by applying the De Morgan laws (thus, transforming 
$\neg F$ into DNF). Extending the notation, for each QBF $\Phi =\forall X
\exists Y F$, where $F$ is a CNF formula, we write $\overline{\Phi}$ for 
the QBF $\exists X\forall Y \overline{F}$. Clearly, $\Phi$ is true if and 
only if $\overline{\Phi}$ is false (or equivalently, $\Phi$ is false if 
and only if $\overline{\Phi}$ is true). 

\textcolor{black}{
\begin{corollary}
\label{cor:added}
Let $\Phi$ be a QBF $\forall X\exists Y G$, where $G$ is a CNF formula over
$X\cup Y$. Then $\Phi$ is false if and only if $P_{\overline{\Phi}}$ has 
an answer set.
\end{corollary}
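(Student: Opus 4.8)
The plan is to obtain the statement as an immediate consequence of two facts already available in the excerpt: the Eiter--Gottlob theorem and the duality that $\Phi$ is true if and only if $\overline{\Phi}$ is false. Concretely, I would chain these so that ``$\Phi$ is false'' is first translated into ``$\overline{\Phi}$ is true'' and then into ``$P_{\overline{\Phi}}$ has an answer set.''

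First I would verify that $\overline{\Phi}$ has exactly the syntactic shape to which the Eiter--Gottlob theorem applies. By definition, $\overline{\Phi}$ is the QBF $\exists X\forall Y\,\overline{G}$, where $\overline{G}$ is produced from $\neg G$ by applying the De Morgan laws; since $G$ is a CNF formula, $\overline{G}$ is a DNF formula. Hence $\overline{\Phi}$ is a QBF of the form $\exists X\forall Y\,(\text{DNF})$, which is precisely the class handled by the Eiter--Gottlob construction, so $P_{\overline{\Phi}}$ is well defined. Applying the theorem to $\overline{\Phi}$ then yields that $\overline{\Phi}$ is true if and only if $P_{\overline{\Phi}}$ has an answer set.

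Second, I would invoke the duality established immediately before the statement, namely that $\Phi$ is false if and only if $\overline{\Phi}$ is true. Combining the two equivalences gives that $\Phi$ is false iff $\overline{\Phi}$ is true iff $P_{\overline{\Phi}}$ has an answer set, which is the claim.

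There is no substantive obstacle here; the corollary is essentially a repackaging of the Eiter--Gottlob theorem through negation. The only point that warrants care is the bookkeeping of the quantifier blocks: in $\Phi$ the set $X$ is universally and $Y$ existentially quantified, whereas in $\overline{\Phi}$ these roles are swapped, so that $\overline{\Phi}$ matches the $\exists\forall$ template of the Eiter--Gottlob theorem with $X$ now serving as the existential block and $Y$ as the universal one. I would note explicitly that this swap is exactly what makes the theorem applicable, and that the differing size conventions (here $|X|=A$ and $|Y|=E$, versus the $E$/$A$ naming used in the theorem statement) are purely nominal and do not affect the argument.
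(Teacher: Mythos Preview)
Your proposal is correct and matches the paper's approach: the corollary is stated in the paper without proof, as an immediate consequence of the Eiter--Gottlob theorem applied to $\overline{\Phi}$ together with the observation that $\Phi$ is false iff $\overline{\Phi}$ is true. Your bookkeeping about the quantifier swap and the naming of the blocks is accurate and is exactly the small check one needs to see that the theorem applies.
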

}

Given a model (set) of QBFs of the form $\forall X \exists Y F$, where $F$
is a CNF formula, the mapping $\Phi \mapsto \overline{\Phi}$ transforms
the model into its dual, consisting of QBFs with a DNF formula in the matrix.
To these formulas we can apply the Eiter-Gottlob translation, thus 
obtaining a model (set) of disjunctive logic programs. 
\textcolor{black}{By Corollary \ref{cor:added}, this} model has the
same satisfiability properties as the original QBF model modulo
%, obviously,
the switch between true and false.   

%\subsection{Encoding the (dual) Chen-Interian model}

\textcolor{black}{
We now define $\overline{Q}(e,a;E,A;m)=\{\overline{\Phi}\colon\Phi\in 
Q(e,a;E,A;m)\}$. The model (set) $\overline{Q}(e,a;E,A;m)$ is the dual
to the Chen-Interian model $Q(e,a;E,A;m)$. Applying the Eiter-Gottlob 
translation $\Psi\mapsto P_\Psi$ to QBFs $\Psi \in \overline{Q}(e,a;E,A;m)$,
yields a model (set) of disjunctive logic programs, which we denote by 
$\dnf_{\dlp}(e,a;E,A;m)$. It follows from our comments after 
Corollary \ref{cor:added}
that the theoretical results we obtained for the Chen-Interian model 
$Q(e,a;E,A;m)$ apply directly to the model $\dnf_{\dlp}(e,a;E,A;m)$ 
(modulo the switch between true and false). }

%The Eiter-Gottlob theorem implies the following
%result.
%%, a program $P_\Psi
%%\in\dnf_{\dlp}(e,a;E,A;m)$ has an answer set if and only if the QBF formula
%%$\Psi\in \overline{Q}(e,a;E,A;m)$ is true, that is, if and only if the 
%%formula $\Phi \in Q(e,a;E,A;m)$ such that $\Psi=\overline{\Phi}$ is false.
%%The following theorem recaps these observations.
%
%\begin{thm}
%Let $\Phi\in Q(e,a;E,A;m)$ and $\Psi =\overline{\Phi}$. Then, $\Psi \in
%\overline{Q}(e,a;E,A;m)$ and the disjunctive logic program $P_\Psi \in 
%\dnf_{\dlp}(e,a;E,A;m)$ has an answer set if and only if $\Psi$ is true, 
%that is, if and only if $\Phi$ is false. 
%\end{thm}
%
%This result implies that the theoretical results we obtained for the 
%Chen-Interian model $Q(e,a;E,A;m)$ apply directly to the model 
%$\dnf_{\dlp}(e,a;E,A;m)$ (modulo the switch between true and false).

%\subsection{Encoding the controlled model}

\textcolor{black}{Next, we define $\overline{Q^\ctd}(k,E,A) =
\{\overline{\Phi}\colon \Phi\in
Q^{\ctd}(k,E,A)\}$. The model $\overline{Q^\ctd}(k,E,A)$ is dual to
our controlled model of QBFs. By applying the Gottlob-Eiter translation 
to QBFs in $\overline{Q^\ctd}(k,E,A)$, we obtain the model (set) of 
disjunctive logic programs, which we denote by $\dnf^\ctd_\dlp(k,E,A)$.
As before, by our comments following Corollary \ref{cor:added},
the models $Q^{\ctd}(k,E,A)$ and $\dnf^\ctd_\dlp(k,E,A)$ have the same 
satisfiability properties (modulo the switch between true and false).}
%The following result gathers the key properties of the programs in 
%$\dnf^\ctd_\dlp(k,E,A)$. 
% 
%\begin{thm}
%Let $\Phi\in Q^\ctd(k,E,A)$ and $\Psi =\overline{\Phi}$. Then, 
%$\Psi \in \overline{Q^\ctd}(k,E,A)$, and the disjunctive logic program 
%$P_\Psi \in \dnf_{\dlp}(k,E,A)$ has an answer set if and only if 
%$\Psi$ is true, that is, if and only if $\Phi$ is false.
%\end{thm}

\subsection{Multi-component models of disjunctive logic programs} 

The translation proposed by Eiter and Gottlob can be extended to QBFs of 
the form $\Phi= \exists X \forall Y G$, where $G =G_1\land\ldots\land G_t$
and each $G_i$ is a DNF formula. The translation is similar, except that we 
need $t$ additional variables $w_1,\ldots,w_t$ to represent DNF formulas 
$G_i$. The translation consists of rules 
\begin{tabbing}
xxx\=xxxxxxxxxxxxxxxxxxxxxxxxxx\= \kill
\>$z \lor z'$\>for each $z\in Z$\\
\>$y \lla w$\ \ \mbox{and}\ \ $y' \lla w$\>for each $y\in Y$\\
\>$w \lla w_1, \ldots, w_t$\ \ \mbox{and}\ \ $w\lla \n w$
\end{tabbing}
that form the \emph{fixed} part of the translation, and its \emph{core}
consisting of Horn rules 
\begin{tabbing}
xxx\=xxxxxxxxxxxxxxxxxx\= \kill
\>$w_h \lla z_1,\ldots,z_\ell$
\end{tabbing}
where $h=1,\ldots,t$, and the rules with the head $w_h$
are obtained from the formula $G_h$ just as in the original Eiter-Gottlob
translation (except that $w_h$ is now used as the head and not $w$). 
In fact, in the case when $t=1$ the program above coincides with the 
result of the Eitr-Gottlob translation modulo a rewriting, in which we 
eliminate the rule $w\lla w_1$ and replace $w_1$ in the head of each rule 
in the core with $w$.

Extending the earlier notation, we denote the program described above by 
$P_\Phi$. The following result can be derived by an argument similar to 
that Eiter and Gottlob used to prove their theorem.

\begin{thm}
Let $\Phi = \exists X \forall Y (G_1\land\ldots\land G_t)$, where each $G_i$
is a DNF formula. Then $\Phi$ is true if and only if $P_\Phi$ has an answer 
set.
\end{thm}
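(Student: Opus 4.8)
The plan is to show that $\Phi = \exists X \forall Y (G_1 \land \ldots \land G_t)$ is true if and only if the constructed program $P_\Phi$ has an answer set, following the same structure that Eiter and Gottlob used for the single-component case, and reducing to it where possible. The key observation is that the rules $w_h \lla z_1,\ldots,z_\ell$ in the core, together with the fixed rule $w \lla w_1,\ldots,w_t$, are designed precisely so that $w$ is derivable exactly when \emph{all} of $G_1,\ldots,G_t$ are satisfied. First I would recall the semantics of the original translation: for a fixed truth assignment to $X \cup Y$ (encoded by a choice among the disjunctive facts $z \lor z'$), the single atom $w$ becomes derivable iff the DNF $G$ is satisfied by that assignment; the constraint $w \lla \n w$ then eliminates any candidate model in which $w$ must be true. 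The rules $y \lla w$, $y' \lla w$ force $w$ to be underivable in any answer set, which in turn forces the assignment to $Y$ (via the disjunctive facts) to be one for which $G$ is \emph{false}. An answer set therefore corresponds to a choice of $X$ such that \emph{no} completion of $Y$ makes $G$ true, i.e. to a witness that $\forall Y\, G$ holds, which is the meaning of truth of $\Phi$.

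The main technical step is to verify that introducing the auxiliary atoms $w_1,\ldots,w_t$ and the conjunctive rule $w \lla w_1,\ldots,w_t$ faithfully encodes $G = G_1 \land \ldots \land G_t$ with the same derivability behavior for $w$. I would argue this at the level of the reduct: fix a candidate interpretation $M$ and examine its Gelfond-Lifschitz reduct with respect to the program. Since each core rule is Horn and the head atoms $w_h$ are fresh, the minimal model of the reduct makes $w_h$ true iff the disjunct structure of $G_h$ is satisfied by the $X \cup Y$-part of $M$ (this is exactly the original Eiter-Gottlob argument applied componentwise to each $G_h$). The rule $w \lla w_1,\ldots,w_t$ then makes $w$ true in the minimal model iff every $w_h$ is true, i.e. iff every $G_h$ is satisfied, which is iff $G$ is satisfied. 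Thus $w$ has precisely the derivability semantics it had for the monolithic $G$, and the rest of the argument — the role of $y \lla w$, $y' \lla w$ and the constraint $w \lla \n w$ — carries over verbatim.

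From there the proof concludes exactly as in the single-component theorem. For the forward direction, if $\Phi$ is true, pick an $X$-assignment witnessing $\forall Y\, G$'s falsity-free behavior and build a candidate set $M$ from the corresponding disjunctive facts together with all $z'$ for false atoms; one checks that $M$ is a minimal model of the reduct, that $w \notin M$ (since $G$ is false under every completion), so the constraint $w \lla \n w$ is satisfied, and hence $M$ is an answer set. For the converse, given an answer set $M$, the presence of $y \lla w$, $y' \lla w$ forces $w \notin M$ (otherwise both $y$ and $y'$ would be forced, collapsing minimality), which by the derivability characterization above means $G$ is false under the $X$-assignment encoded in $M$ for every extension to $Y$; this $X$-assignment witnesses the truth of $\Phi$.

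The step I expect to be the main obstacle is the componentwise reduct analysis establishing that $w$ is derivable iff all $G_h$ hold: one must be careful that the auxiliary atoms $w_h$ do not interfere with the minimality of answer sets or open spurious ways to derive $w$, and that the $t=1$ rewriting noted in the text (eliminating $w \lla w_1$ and merging $w_1$ into $w$) genuinely recovers the original translation. Since all added rules are Horn with fresh heads, I do not anticipate a genuine difficulty, but the bookkeeping connecting minimal models of the reduct to satisfaction of the conjunction $G_1 \land \ldots \land G_t$ is where the argument must be made precise rather than merely asserted ``similar to Eiter and Gottlob.''
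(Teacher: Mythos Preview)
Your high-level plan --- show that the auxiliary atoms $w_1,\ldots,w_t$ together with the rule $w\lla w_1,\ldots,w_t$ make $w$ behave exactly as it would for the monolithic DNF $G=G_1\land\cdots\land G_t$, and then invoke the Eiter--Gottlob argument --- is precisely what the paper intends when it says the result ``can be derived by an argument similar to that Eiter and Gottlob used.'' The paper gives no further detail, so at the level of strategy you are aligned with it.

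However, your account of \emph{how} the Eiter--Gottlob translation works is reversed, and this infects both directions of your argument. The rule $w\lla \n w$ does not ``eliminate any candidate model in which $w$ must be true''; it \emph{forces} $w$ into every answer set (if $w\notin M$, the reduct contains $w$ as a fact, so $M$ cannot be a model of its own reduct). With $w\in M$, the saturation rules $y\lla w$ and $y'\lla w$ then force \emph{all} of $Y\cup Y'$ into $M$; they do not ``force $w$ to be underivable.'' Thus every answer set has the saturated shape $M=I_X\cup\{y,y':y\in Y\}\cup\{w,w_1,\ldots,w_t\}$ for some choice $I_X$ over $X$. The minimality test then asks whether the reduct admits a strictly smaller model, and the relevant smaller candidates are sets $I_X\cup I_Y\cup W'$ with $w\notin W'$ and $W'\subsetneq\{w_1,\ldots,w_t\}$; such a candidate is a model of the reduct exactly when, under the assignment encoded by $I_X\cup I_Y$, \emph{some} $G_h$ is false, i.e.\ $G$ is false. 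Hence $M$ is an answer set iff for \emph{every} $Y$-assignment all $G_h$ are true, that is, iff $I_X$ witnesses $\forall Y\,(G_1\land\cdots\land G_t)$. Your statements that ``$w\notin M$'' in an answer set, and that an answer set corresponds to an $X$-assignment under which ``no completion of $Y$ makes $G$ true,'' are therefore the wrong way round; once you flip them, the componentwise reduct analysis you describe goes through cleanly.
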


We can now derive multi-component models of disjunctive logic programs 
from the multicomponent models of QBFs. The basic idea is the same as before.
A multi-component model of QBFs gives rise to its dual via a transformation
$\Phi \mapsto \overline{\Phi}$ (it consists of negating $\Phi$ and applying
De Morgan laws). Next, the translation above transforms QBFs from that dual 
model into disjunctive programs, yielding the corresponding multi-component
model of programs. We apply this approach to two multi-component models of
QBFs we considered in this paper: $t$-$Q(e,a;E,A;m)$ and $t$-$Q^\ctd(k,E,A)$.
We denote the corresponding models of disjunctive logic programs by 
$t$-$\dnf_\dlp(e,a;E,A;m)$ and $t$-$\dnf_\dlp^\ctd(k,E,A)$. 
%\mc{$t$-$\dnf_\dlp^\ctd(k,E,A)$? see above}

%\begin{thm}
%Let $\Phi = \exists X\forall Y F$, where $F\in \mbox{$t$-$Q(e,a;E,A;,m)$}$
%or $F\in \mbox{$t$-$\dnf_\dlp(k,E,A)$}$ and let $\Psi=\overline{\Phi}$ be 
%the dual of $\Phi$. Then $P_\Psi$ has an answer set if and only if $\Psi$
%is true, that is, if and only if $\Phi$ is false.
%\end{thm}

\begin{corollary}
\label{cor:added2}
\textcolor{black}{Let $\Phi = \exists X\forall Y F$, where $F\in \mbox{$t$-$Q(e,a;E,A;,m)$}$
or $F\in \mbox{$t$-$Q^\ctd(k,E,A)$}$. Then, $\Phi$ is false ($\overline{\Phi}$ 
is true) if and only if $P_{\overline{\Phi}}$ has an answer set. }
\end{corollary}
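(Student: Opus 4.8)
The plan is to reduce the statement directly to the multi-component Eiter--Gottlob theorem proved immediately above, so that no fresh combinatorics are required: the only work is to verify that the matrix of $\Phi$ has the shape that theorem demands and then to track the dual operation $\Psi\mapsto\overline{\Psi}$ with care. Since $\Phi$ is presented as an $\exists X\forall Y$ QBF, I would apply the preceding theorem to $\Phi$ itself, without first setting up any $\forall X\exists Y$ formula.

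First I would pin down the form of the matrix $F$. Both models named in the statement, $t$-$Q(e,a;E,A;m)$ and $t$-$Q^\ctd(k,E,A)$, are multi-component models whose matrices are disjunctions $H_1\lor\cdots\lor H_t$ of $t$ CNF formulas. Since the QBF relevant to the program translation is the dual, its matrix is obtained by De Morgan: $\overline{H_1\lor\cdots\lor H_t}=\overline{H_1}\land\cdots\land\overline{H_t}$, where each $\overline{H_i}$ is a DNF formula. Thus $\Phi=\exists X\forall Y\,(G_1\land\cdots\land G_t)$ with each $G_i=\overline{H_i}$ a DNF, which is exactly the hypothesis of the preceding theorem. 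I would stress here that neither the Chen--Interian clause structure (the split into $e$ universal and $a$ existential literals per clause) nor the controlled structure of $t$-$Q^\ctd(k,E,A)$ plays any role: the theorem requires only that each component $G_i$ be a DNF and imposes no condition on how literals are distributed inside it, so a single argument covers both $t$-$\dnf_\dlp(e,a;E,A;m)$ and $t$-$\dnf_\dlp^\ctd(k,E,A)$.

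Second, I would apply the preceding theorem to $\Phi$ verbatim. Because $\Phi$ is already in the $\exists X\forall Y$ form with a conjunction of $t$ DNF components as its matrix, the theorem yields at once a biconditional between the truth of $\Phi$ and the existence of an answer set of the associated program $P_\Phi$ --- the program whose fixed part contains the variables $w,w_1,\ldots,w_t$ together with the rule $w\lla w_1,\ldots,w_t$ and the constraint $w\lla \n w$, and whose core gathers, for each component $G_h$, the Horn rules with head $w_h$ produced exactly as in the original Eiter--Gottlob translation.

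Finally, I would convert this into the wording of the corollary through the dual operation. By definition of the bar (negation followed by pushing the negation inward via De Morgan), $\Phi$ and $\overline{\Phi}$ are logically complementary, so $\Phi$ is false precisely when $\overline{\Phi}$ is true; combining this equivalence with the biconditional from the preceding theorem delivers the stated correspondence between the truth value of the QBF and the existence of an answer set of the translated program, uniformly for both families. The one point that needs genuine care --- and the only real obstacle --- is this dual bookkeeping: one must fix unambiguously which member of the complementary pair $\{\Phi,\overline{\Phi}\}$ is the $\exists X\forall Y$ QBF actually fed to the translation, and hence which program is the object whose answer sets are counted, so that the true/false switch is applied in the correct direction. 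Once that correspondence is fixed, everything else is mechanical and the conclusion follows immediately from the preceding theorem.
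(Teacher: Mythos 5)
Your two ingredients are the right ones---the multi-component Eiter--Gottlob theorem plus the fact that dualization flips truth values---and you are also right that the internal clause structure of the two models (the Chen--Interian literal split, the controlled structure) plays no role. But you apply the theorem to the wrong member of the dual pair, and this is exactly the bookkeeping that you yourself call ``the only real obstacle'' and then leave unresolved. In this corollary, as in its single-component counterpart (Corollary \ref{cor:added}), $\Phi$ must be the formula from the model $t$-$Q(e,a;E,A;m)$ or $t$-$Q^\ctd(k,E,A)$, i.e., a $\forall X\exists Y$ QBF whose matrix is a disjunction $H_1\lor\cdots\lor H_t$ of CNF formulas (the prefix ``$\exists X\forall Y$'' in the statement is a typo, as the surrounding text and Corollary \ref{cor:added} make clear). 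The bar operation is defined only in that direction: it produces $\overline{\Phi}=\exists X\forall Y\,(\overline{H_1}\land\cdots\land\overline{H_t})$, and it is to $\overline{\Phi}$---not to $\Phi$---that the preceding theorem applies, giving ``$\overline{\Phi}$ is true iff $P_{\overline{\Phi}}$ has an answer set''; chaining this with ``$\Phi$ is false iff $\overline{\Phi}$ is true'' yields the corollary verbatim.

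Under your reading, where $\Phi$ itself is the $\exists X\forall Y$ formula with matrix $G_1\land\cdots\land G_t$, the theorem gives ``$\Phi$ is true iff $P_\Phi$ has an answer set,'' a statement about $P_\Phi$. This cannot be converted into the stated claim about $P_{\overline{\Phi}}$ by invoking the truth-value flip: on your reading, $\overline{\Phi}$ would be a $\forall X\exists Y$ formula whose matrix is a disjunction of CNF formulas, and the Eiter--Gottlob translation---hence the program $P_{\overline{\Phi}}$ that the corollary actually refers to---is not even defined for formulas of that shape. So what your argument establishes is a biconditional different from the one stated, and your closing paragraph explicitly defers, rather than fixes, the direction of the switch. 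The repair is a one-line renaming (take the model formula to be $\Phi$ and feed $\overline{\Phi}$ to the translation), but as written the proof does not reach the statement.
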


\textcolor{black}{By Corollary \ref{cor:added2}, the models $t$-$Q(e,a;E,A;,m)$ 
($t$-$Q^\ctd(k,E,A)$, respectively) and $t$-$\dnf_\dlp(e,a;E,A;m)$ 
($t$-$\dnf_\dlp^\ctd(k,E,A)$, respectively) have the same satisfiability 
properties (modulo the switch between true and false).}
%%%%%%%%%%%%%%%%%%%%%%%%%%%%%%%%%%%%%%%%%%%%%%%%%%%%%%%%%%%%%%%%%%%%%%%%%%%
\section{Empirical analysis}\label{sec:experiments}

We now describe an experimental analysis of the behavior of our models 
and discuss their properties.

%%%%%%%%%%%%%%%%%%%%%%%%%%%%%%%%%%%%%%
\subsection{Experiment Setup}

\newcommand{\myOneCol}{0.83\columnwidth}
\newcommand{\vsFig}{\vspace*{-1cm}}

\begin{figure*}[t!]%\captionsetup[subfloat]{farskip=-5em,captionskip=-1em}
    \centering
    \captionsetup[sub]{font=normal,labelfont={sf,sf}} 
    \begin{subfigure}[t]{\columnwidth}    \centering 
        \includegraphics[page=1,width=\myOneCol]{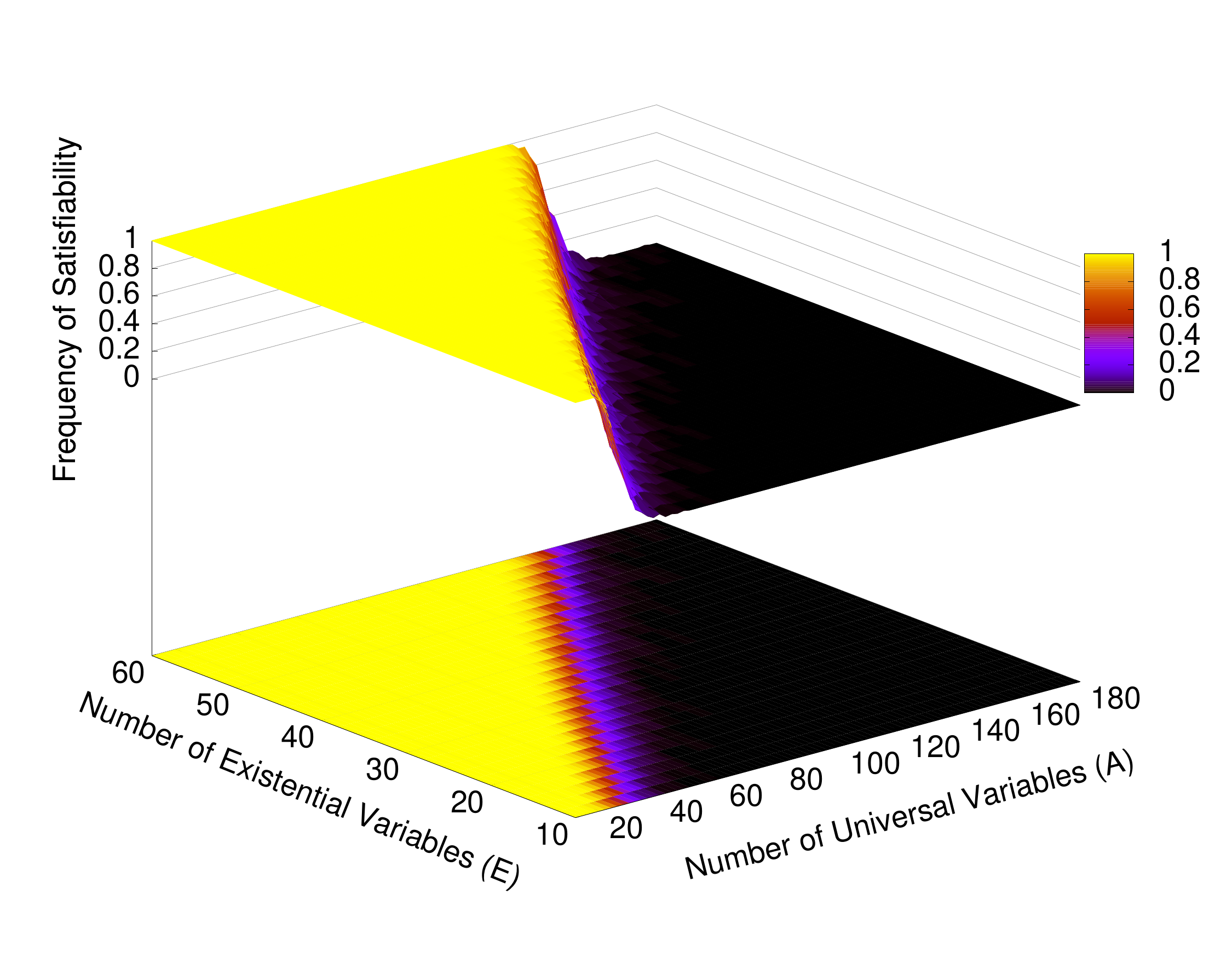}
    \caption{Phase transition (Controlled)}
%    \subfloat[Phase transition (Controlled)]{\includegraphics[page=1,width=\myOneCol]{plots/CTRL-BehaviorAndThreshold/ctrl-s-128-E-10-300-2-A-10-60-2-w-1-1-1-3D-plane.pdf}}
	\label{fig:CTRL:Sat}
	\end{subfigure}
\vspace*{0.5cm}
    \begin{subfigure}[t]{\columnwidth}    \centering 
        \includegraphics[page=2,width=\myOneCol]{ctrl-s-128-E-10-300-2-A-10-60-2-w-1-1-1-3D-plane.pdf}
    \caption{Hardness (Controlled)}
	\label{fig:CTRL:Time} 
	\end{subfigure}

%    \subfigure[Phase transition (Controlled)]{\label{fig:CTRL:Sat}
%    \includegraphics[page=1,width=\myOneCol]{plots/CTRL-BehaviorAndThreshold/ctrl-s-128-E-10-300-2-A-10-60-2-w-1-1-1-3D-plane.pdf}
%    }
%    
%    \subfigure[Hardness (Controlled)]{\label{fig:CTRL:Time} 
%        \includegraphics[page=2,width=\myOneCol]{plots/CTRL-BehaviorAndThreshold/ctrl-s-128-E-10-300-2-A-10-60-2-w-1-1-1-3D-plane.pdf}
%    }
%    
%%    \subfigure[Bounds on Satisfiability (Controlled)]{\label{fig:CTRL:Bounds}
%%    \includegraphics[width=\myOneCol]{plots/CTRL-BehaviorAndThreshold/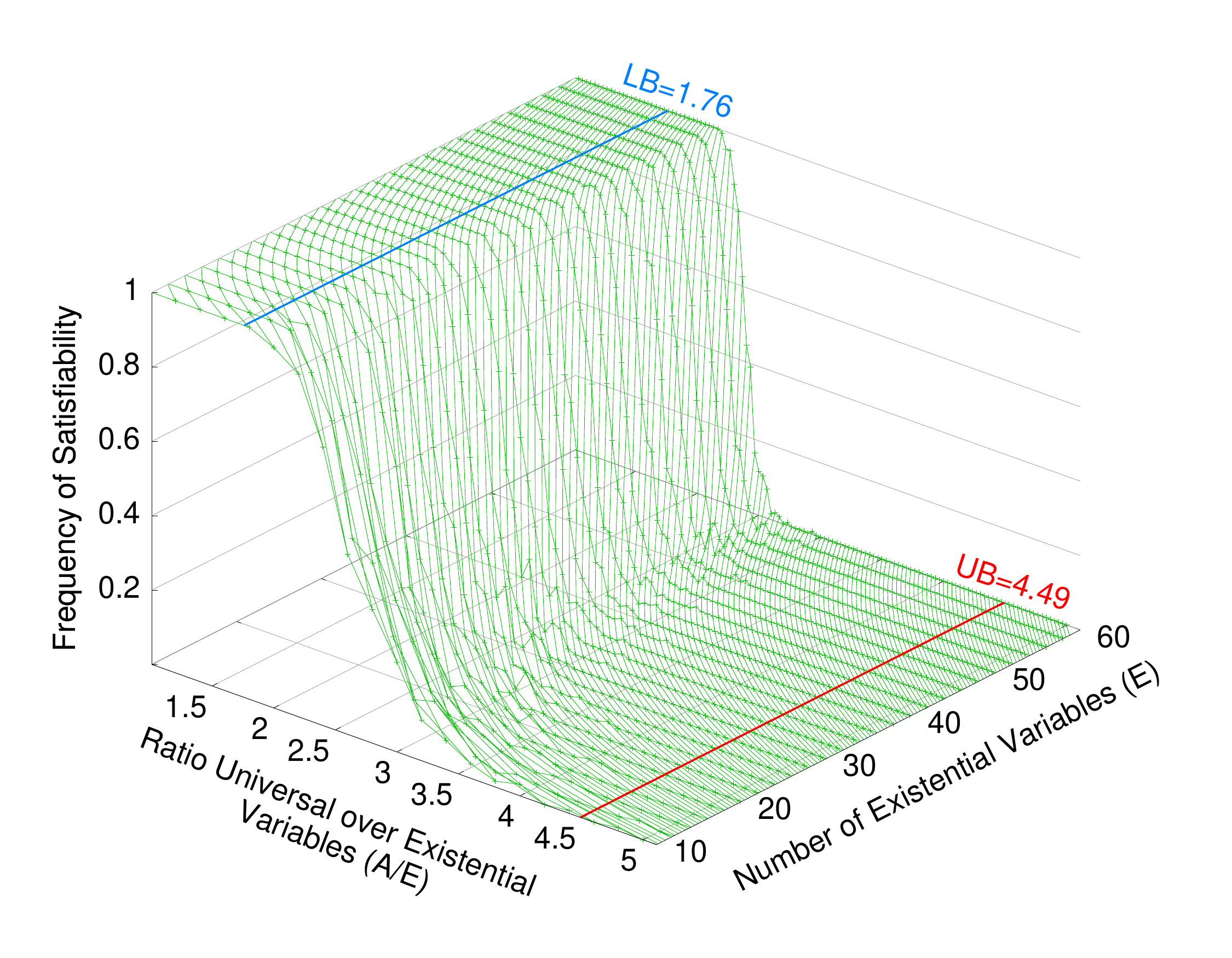}
%%    }
    \caption{Behavior of controlled model: phase transition and hardness.}\label{fig:CTRL}
\end{figure*}

\renewcommand{\myOneCol}{0.93\columnwidth}
\begin{figure*}[t!] %\captionsetup[subfloat]{farskip=-5em,captionskip=-1em}
    \centering
    \captionsetup[sub]{font=normal,labelfont={sf,sf}} 

    \begin{subfigure}[t]{\columnwidth}    \centering 
        \includegraphics[width=\myOneCol]{ctrl-s-128-E-10-300-2-A-10-60-2-w-1-1-1-3D-soglia.pdf}
    \caption{Bounds on Satisfiability (Controlled)}\label{fig:CTRL:Bounds}
	\end{subfigure}    
	\vspace*{0.5cm}
	
    \begin{subfigure}[t]{\columnwidth}    \centering 
        \includegraphics[width=\myOneCol]{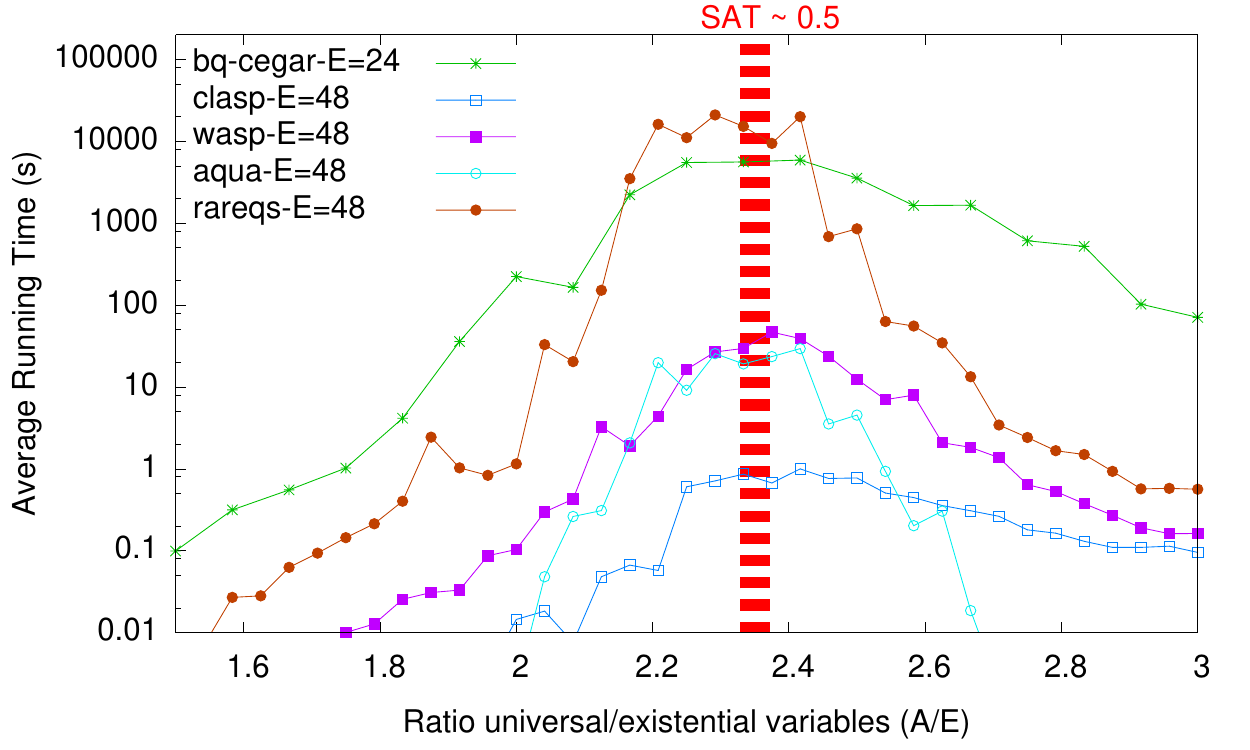}
    \caption{Solver Independence (Controlled)}\label{fig:CTRL:Solvers} 
	\end{subfigure}    
    
%    \subfigure[Bounds on Satisfiability (Controlled)]{\label{fig:CTRL:Bounds}
%    \includegraphics[width=\myOneCol]{plots/CTRL-BehaviorAndThreshold/ctrl-s-128-E-10-300-2-A-10-60-2-w-1-1-1-3D-soglia.pdf}
%    }
%    
%    \subfigure[Solver Independence (Controlled)]{\label{fig:CTRL:Solvers} 
%    \includegraphics[width=\myOneCol]{plots/CTRL-SolverInvariance/CTRL-SolverInvariance.pdf}
%    }
    \caption{Behavior of controlled model: bounds and solver independence.}\label{fig:CTRL2}
\end{figure*}

%\textbf{Experiment Setup.}
To claim that properties and patterns are inherent to a model and not an
artifact of a solver used, we performed our experiments with several 
well-known SAT, QBF and ASP solvers. The SAT solvers included
\glucose 4.0~\cite{AudemardLS13};
\lingeling, version of 2015~\cite{Biere14};
and \kcnfs, version of SAT'07 competition~\cite{DBLP:journals/jar/DequenD06}.
The QBF solvers included \bqcegar (a combination of \textit{bloqqer} 
preprocessor~\cite{HeuleJLSB15} and
\textit{ghostq}~\cite{KlieberJMC13} solver from QBF gallery 2014);
\aigsolve~\cite{PigorschS10};
\rareqs~\cite{DBLP:journals/ai/JanotaKMC16}, version 1.2 from QBF competition 2016;
and \aqua \footnote{\url{www.qbflib.org/DESCRIPTIONS/aqua16.pdf}}.
Finally, the two ASP solvers we used in experiments were
\clasp 3.1.3~\cite{GebserKNS07a}
and \wasp 2.1~\cite{AlvianoDLR15}, both paired with
\textit{gringo} 4.5.3~\cite{GebserKKS11}. 
All solvers were run in their default configurations.
We stress that we did not aim at comparing solver performance, instead
\textit{our goal was to identify solver-independent properties inherent to a model}.

To support experiments, we developed a tool in Java to generate random
CNF formulas from $\cnf(k,n,m)$, QBFs from $Q(a,e;A,E;m)$ and $Q^\ctd(k,A,E)$, 
and programs from $\dnf_\dlp(e,a;E,A;m)$ and $\dnf_\dlp(k,E,A)$ (``dual''
to QBFs from $Q(e,a;E,A;m)$ and $Q^\ctd(k,E,A)$). For each
class $\mathcal{C}$ of formulas and programs listed, our tool generates 
also formulas (programs) from the corresponding multicomponent model
$t$-$\mathcal{C}$. 

\textcolor{black}{
Formulas and QBFs generated according to the multi-component models 
$t$-$\cnf(k,n,m)$, $t$-$Q(a,e;A,E;m)$ and $t$-$Q^\ctd(k,A,E)$, where 
$t>1$, are non-clausal or have non-clausal matrices (in the case of QBFs). 
As they do not adhere to the (Q)DIMACS format required by SAT/QBF solvers, the generator
transforms non-clausal formulas to CNF using the Tseitin transformation
\cite{Tseitin1983}.
That transformation introduces fresh auxiliary variables (while replacing binary subformulas) and new clauses (modeling the equivalence of each replacement) to obtain a CNF formula that is equisatisfiable to the original one.
%As a consequence, a satisfying assignment for the original formula can be obtained by simply discarding assignments to the auxiliary variables.
The Tseiting transformation is efficient, since it only causes a linear growth in size (whereas doing the same normalization via distributivity laws may lead to an exponential blow-up).%
\footnote{For this reason the Tseitin transformation is employed very often in real-world applications of SAT/QBF. Actually, many formulas used in SAT and QBF competitions~\cite{JarvisaloBRS12,NarizzanoPT06} come from applying it to non-normal form inputs suggested by problem statements.}
Interestingly, the logic programs in the models $t$-$\dnf_\dlp(e,a;E,A;m)$ and 
$t$-$\dnf_\dlp(k,E,A)$ have a much simpler structure than the corresponding
Tseitin-transformed formulas from the ``dual'' models $t$-$Q(e,a;E,A;m)$ and
$t$-$Q^\ctd(k,E,A)$). As can be seen from the translation, these programs need 
new variables only to represent each of the $t$ components (disjuncts) of 
the matrix formula.
}

Once a formula $\Phi$ is generated, it is stored in two files:
one with an encoding of $\Phi$ in the (Q)DIMACS numeric
format of (Q)SAT solvers~\cite{JarvisaloBRS12,NarizzanoPT06}, and the 
other one with the disjunctive logic program corresponding to $\Phi$
in the ASPCore 2.0 syntax~\cite{CalimeriGMR16}. 
As discussed in the previous section, since the programs are generated 
from the negations of the QBFs in our random QBF models, they have answer
sets if and only if the original formulas are false. Thus, when we analyze 
satisfiability we plot only the curves obtained by evaluating either the 
formulas or the corresponding logic programs (the plots are symmetric to 
each other). In all the experiments the results are averaged over 128 
samples of the same size.

Experiments were run on a Debian Linux with 2.30GHz Intel Xeon E5-4610 v2 CPUs and 128GB of RAM.
Each execution was constrained to one single core by using the \textit{taskset} command.
Time measurements were performed by using the \textit{runlim} tool.
The generator used in the experiments is publicly available 
at~\url{https://www.mat.unical.it/ricca/RandomLogicProgramGenerator}.

\renewcommand{\myOneCol}{0.83\columnwidth}
\begin{figure}[t!]
    \centering  
    \captionsetup[sub]{font=normal,labelfont={sf,sf}} 

    \begin{subfigure}[t]{\columnwidth}    \centering 
        \includegraphics[page=1,width=\myOneCol]{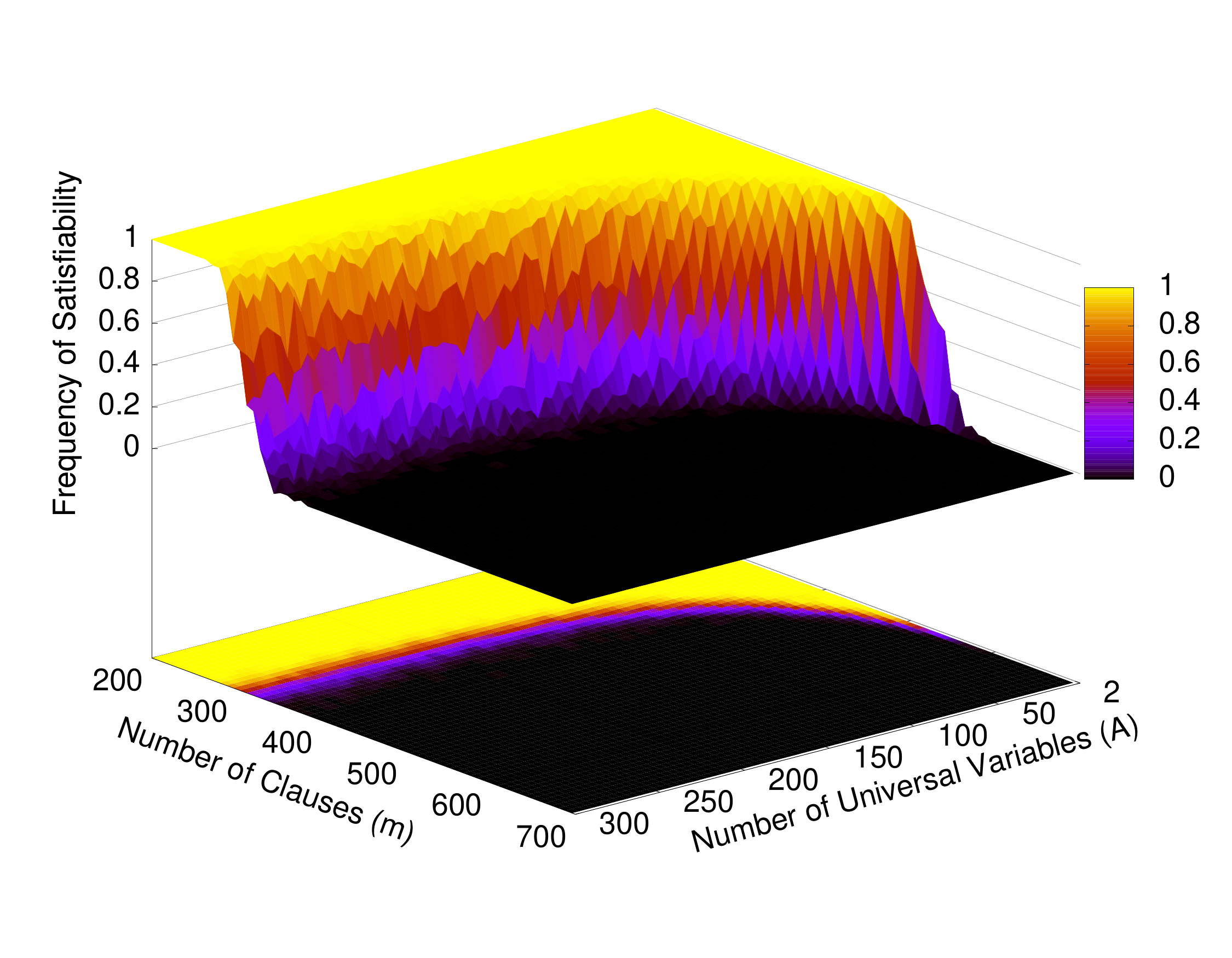}
    \caption{Phase transition (Chen-Interian)}\label{fig:CIvsCTRL:CISat}
	\end{subfigure}
	\vspace*{0.5cm}    
    
    \begin{subfigure}[t]{\columnwidth}    \centering 
    		\includegraphics[page=2,width=\myOneCol]{ci-s-64-E-4-300-4-A-70-70-1-c-200-700-8-w-1-1-1-3D.pdf}
    \caption{Hardness (Chen-Interian)}\label{fig:CIvsCTRL:CITime}
	\end{subfigure}

%    \subfigure[Phase transition (Chen-Interian)]{\label{fig:CIvsCTRL:CISat}
%    \includegraphics[page=1,width=\myOneCol]{plots/CI-vs-CTRL/ci-s-64-E-4-300-4-A-70-70-1-c-200-700-8-w-1-1-1-3D.pdf}
%    }
%    \subfigure[Hardness (Chen-Interian)]{\label{fig:CIvsCTRL:CITime}
%        \includegraphics[page=2,width=\myOneCol]{plots/CI-vs-CTRL/ci-s-64-E-4-300-4-A-70-70-1-c-200-700-8-w-1-1-1-3D.pdf}
%    }

%    \subfigure[Phase transition and hardness comparison; controlled vs. Chen-Interian models]{\label{fig:CIvsCTRL:Comparison}
%    \includegraphics[width=\myOneCol]{plots/CI-vs-CTRL/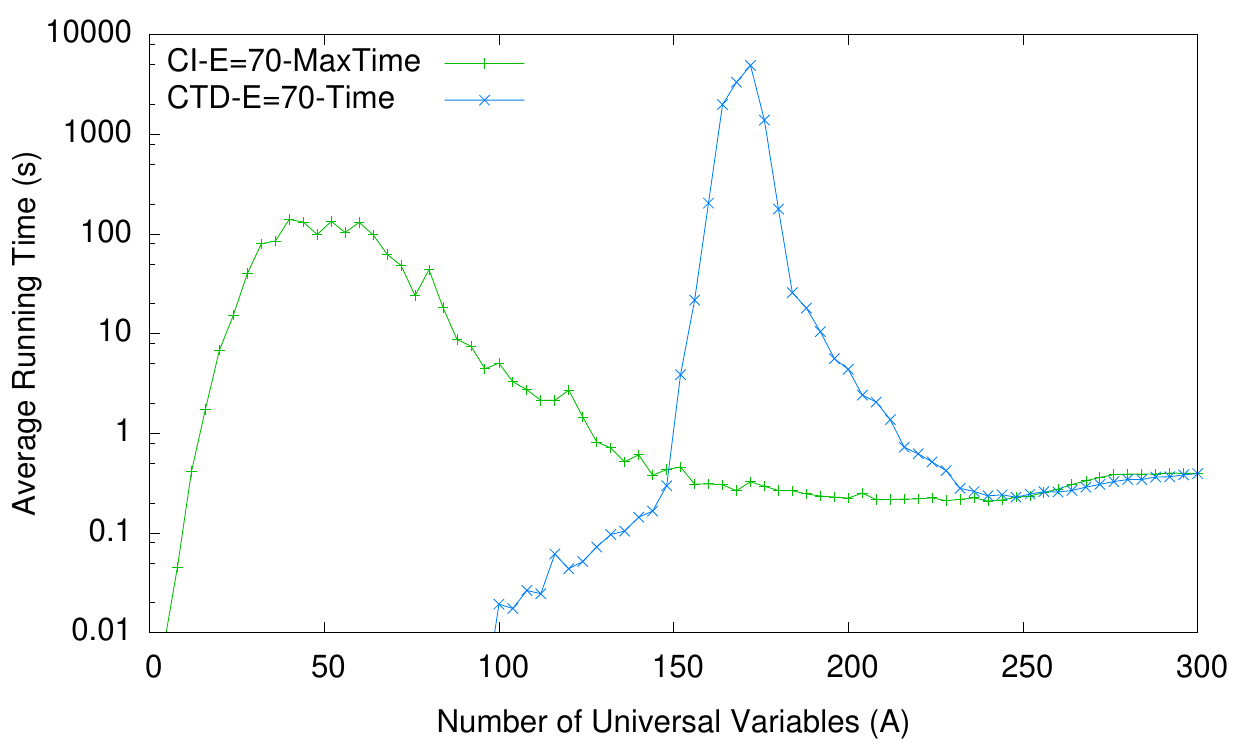}
%    }
%    \subfigure[Effect of components; controlled vs. Chen-Interian models]{\label{fig:CIvsCTRL:Histogram}
%    \includegraphics[width=\myOneCol]{plots/ControlledWithComponentHistogram/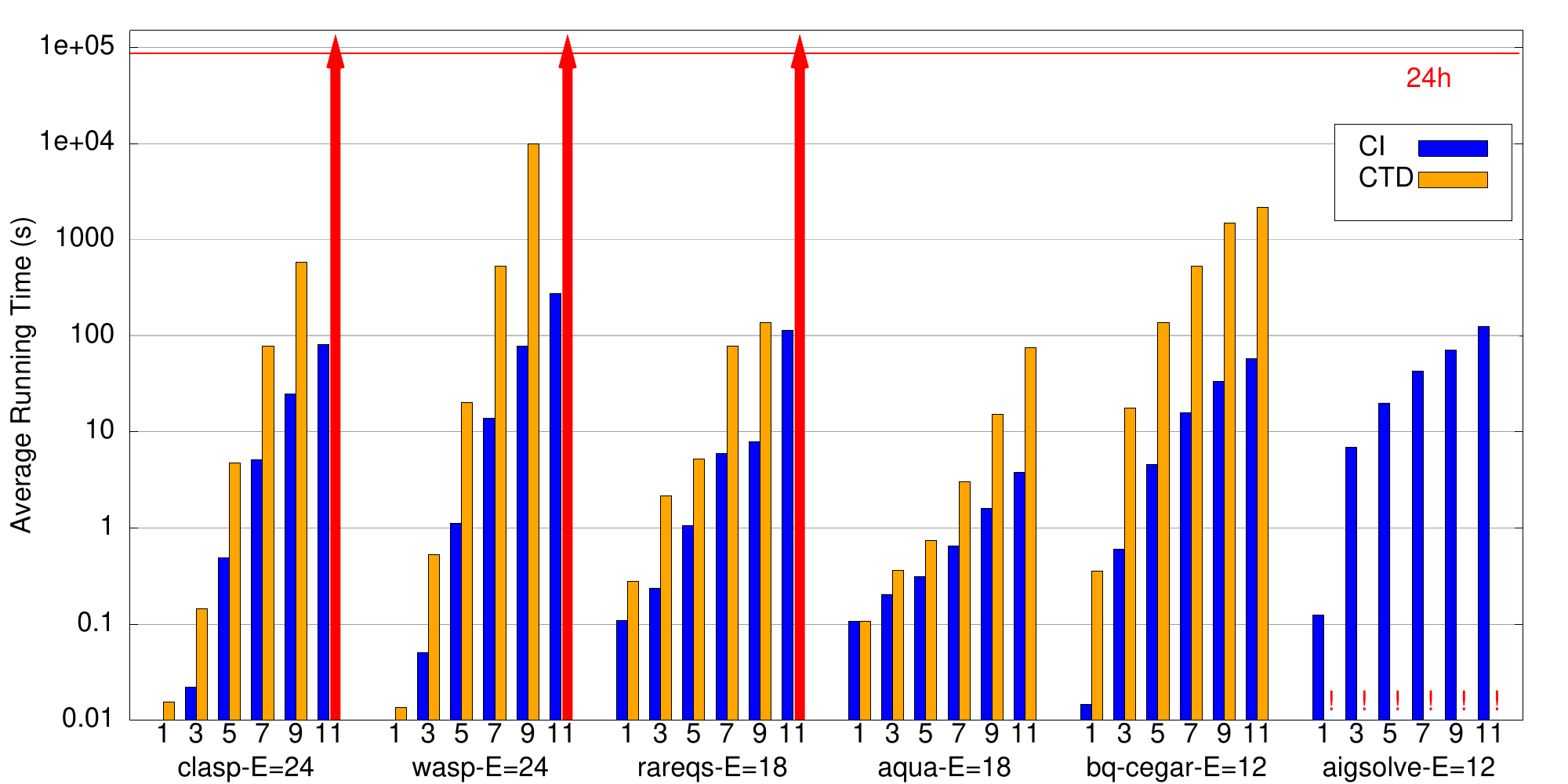}
%    }
%    \caption{Comparing Chen-Interian and Controlled model: Phase transition and Hardness.}\label{fig:CIvsCTRL}
    \caption{Chen-Interian: Phase transition and Hardness.}\label{fig:CIvsCTRL:CI}
\end{figure}

\begin{figure}[t!]
    \centering  
    \includegraphics[width=\myOneCol]{}
    \caption{Comparing Chen-Interian and Controlled model: hardness comparison.}\label{fig:CIvsCTRL:Comparison}
\end{figure}

%%%%%%%%%%%%%%%%%%%%%%%%%%%%%%%%%%%%%%
\subsection{Behavior of the controlled model}
We first study the satisfiability and hardness of formulas and corresponding 
programs generated according to the controlled model.
We generated QBF instances from the model $Q^{\ctd}(4,A,E)$ and program 
instances from the dual model $\dnf_\dlp(4,A,E)$) for the parameters $E$ 
and $A$ ranging over $[10..60]$ and $[20..180]$, respectively (consequently,
the number of clauses ranges from $40$ to $360$).

Figure~\ref{fig:CTRL:Sat} shows the satisfiability results for the model 
$Q^{\ctd}(4,A,E)$. The picture for $\dnf_\dlp(4,A,E)$ is dual (symmetric 
with respect to the plane given by the frequency of satisfiability equal 
to $1/2$); the results we show were in fact obtained by running \clasp 
on programs from $\dnf_\dlp(4,A,E)$ and adapted to the case of 
$Q^{\ctd}(4,A,E)$). 
%\mc{The graphs need to be flipped\\
%I think}
The gradient of colors ranging from yellow (QBF true) 
to black (QBF false) helps to identify the phase transition region, which 
is also projected on the $A$-$E$ plane below. We observe that phase transitions occur for a specific value of the ratio 
between universal and existential variables, specifically, for $A/E\simeq 
2.37$. A different perspective on the same data is presented in 
Figure~\ref{fig:CTRL:Bounds}, where the frequency of satisfiability is
depicted with respect to the ratio $A/E$, and where the two straight lines 
show the bounds predicted by Theorem~\ref{th:controlledThreshold}, assuming 
the bounds for satisfiability and unsatisfiability of 3-CNF formulas~\cite{KaporisKL03,DiazKMP09}
%(i.e., $L_{3}=3.52$ and $U_{3}=4.49$ thus $\mu^\ctd_l(4)\geq \frac{L_{3}}{2}=\frac{3.52}{2}=1.76$ and $\mu^\ctd_u(4)\leq U_{3} = 4.49$).
(i.e., $\mu^\ctd_l(4)\geq \frac{3.52}{2}=1.76$ and $\mu^\ctd_u(4)\leq 4.49$).
We observe that the transition sharpens when the number of variables grows, and the transition occurs within the bounds predicted by the theoretical results.

\nop{
Figure~\ref{fig:CTRL:Sat} shows the satisfiability of formulas and corresponding
programs from the controlled model. We generated QBF instances from the
model $Q^{\ctd}(4,A,E)$ (and program instances from the dual model
$\dnf_\dlp(4,A,E)$)) by varying $E$ and $A$ over $[10..60]$ and $[20..180]$
(consequently, the number of clauses ranges from 40 to 360). The gradient of 
colors ranging from black (no answer set) to yellow (answer set found) helps to identify the phase
transition region, which is also projected on the $A$-$E$ plane below.
We observe that phase transitions occur for a specific value of the ratio
between universal and existential variables, specifically, for $A/E\simeq
2.37$. A different perspective on the same data is presented in
Figure~\ref{fig:CTRL:Bounds}, where the frequency of satisfiability is
depicted with respect to the ratio $A/E$, and where the two straight lines
show the bounds predicted by Theorem~\ref{th:controlledThreshold}, assuming
the bounds for satisfiability and unsatisfiability of 3-CNF formulas~\cite{KaporisKL03,DiazKMP09}
%(i.e., $L_{3}=3.52$ and $U_{3}=4.49$ thus $\mu^\ctd_l(4)\geq \frac{L_{3}}{2}=\frac{3.52}{2}=1.76$ and $\mu^\ctd_u(4)\leq U_{3} = 4.49$).
(i.e., $\mu^\ctd_l(4)\geq \frac{3.52}{2}=1.76$ and $\mu^\ctd_u(4)\leq 4.49$).
We observe that the transition sharpens when the number of variables grows, and the transition occurs within the bounds predicted by the theoretical results.
}

To study the hardness of formulas, the average running times are plotted in Figure~\ref{fig:CTRL:Time}.
Here the gradient of colors ranging from black (basically instantaneous execution) to yellow (the maximum average
running time) helps to identify the hardness region.
As before the region is also projected on the $A$-$E$ plane below.
As expected hardness arises around the phase transition region and grows 
with the number of variables.
To provide evidence that the hardness of the controlled model is independent of the solver used, 
Figure~\ref{fig:CTRL:Solvers} plots the average execution times when
running two QBF solvers (\rareqs and \aqua) and two ASP solvers (\clasp and \wasp) 
on formulas/programs implied by formulas
from $Q^{\ctd}(4,A,E)$ with 48 existential variables, and %when running
the QBF solver \bqcegar on formulas with 24 existential variables (for
that solver, we had to decrease the size of formulas to ensure termination
within a reasonable time). 
We note that all solvers find hard formulas in the same region, and the maximum hardness
coincides with the transition zone marked by the red vertical strip.
No data is reported in Figure~\ref{fig:CTRL:Solvers} for \aigsolve because it terminated abruptly in some instances 
(throwing \textit{std::bad\_alloc}) and in some other we had to kill the process after 15 days of execution. 
(This behavior is probably due to a memory access problem.)

%%%%%%%%%%%%%%%%%%%%%%%%%%%%%%%%%%%%%%
\subsection{Controlled vs Chen-Interian model}
\label{cvci}
We now compare the controlled and the Chen-Interian models with respect to
the hardness of formulas having the same number of variables. 

We start by presenting results on the behavior of the 
Chen-Interian model $Q(a,e;A,E;m)$, where we set $a=1$, $e=3$, and $E=70$, 
and vary the number $A$ of universal variables over the range $[2..300]$
and the number $m$ of clauses over the range $[200..700]$.
These results are shown in Figure~\ref{fig:CIvsCTRL:CI}. They confirm and 
extend the findings by Chen and Interian \cite{ChenI05}. As before the 
gradient of colors in Figure~\ref{fig:CIvsCTRL:CI}, ranging from black to 
yellow, outlines the phase transition and the easy-hard-easy pattern. The 
surface is also projected onto the $A$-$m$ plane for an alternative 
visualization. For every value of $A$ 
(in fact, for every value of the ratio $A/E$; indeed, we recall that in 
our experiment the we fixed the value of $E$ to $70$), as we grow $m$ we 
observe the phase 
transition. The place where this phase transition occurs depends on $A$ 
(more generally, on the ratio $A/E$; but in our experiments $E$ is fixed). 
For each value of $A$ (more precisely, for each value of $A/E$), the 
hardest formulas are located around the phase transition area, as evidenced 
by Figure \ref{fig:CIvsCTRL:CI}(b). The behavior presents there only 
for the values of $A$ of up to about $85$; for higher values of $A$, the 
running times even on the formulas from the phase transition region are very
small. Figure \ref{fig:CIvsCTRL:CI}(b) also shows that the overall peak of
hardness occurs in the phase transition region for a specific value of $A$ 
or, as explained earlier, for a specific value of the ratio $A/E$.

Next, we compare the hardness properties of the controlled and the 
Chen-Interian models with the same number of existential variables, which 
can be viewed as a measure of the hardness of individual SAT instances that 
arise while solving a QBF of the form $\forall\exists F$. The graphs in 
Figure~\ref{fig:CIvsCTRL:Comparison} capture the behavior of the hardness 
for the two models under this constraint. For the controlled model, for 
each value of $A$, the value on the corresponding hardness graph (the blue 
line) is obtained by averaging the solve times on formulas generated from 
the model $Q^{\ctd}(4,A,70)$. The matrices of these formulas are 4-CNF 
formulas over $A+70$ variables and with $2A$ clauses. The corresponding 
point on the hardness graph for the Chen-Interian model is obtained by 
averaging the solve times on formulas generated from the model 
$Q(1,3;A,70;max)$, where for each $A$ (and $E=70$), $max$ is selected to 
maximize the solve times (in particular, $max$ falls in the phase transition 
region for the combination of the values $A$ and $E=70$). The matrices of
these formulas are 4-CNF formulas over $A+70$ variables and $max$ clauses.

The results show that the peak hardness regions for the
two models are not aligned. The hardest formulas over 70 existential
variables from the Chen-Interian models have $A\approx 50\mbox{-}55$
universal variables and $m=350$ clauses. The hardest formulas over 70
existential variables from the controlled model have $A\approx 170$ and
$m\approx 340$. Our results show that the hardest formulas from the
controlled model are almost two orders of magnitude harder than the hardest
formulas from the Chen-Interian model. On the other hand, while the hardest 
formulas (for a fixed value of $E$, here $E=70$) in the two models have 
similar numbers of clauses (about 340-350), the Chen-Interian model formulas
have fewer universal variables (about 50-55 versus 170 in the controlled model).

It is also useful to look at the point where the hardness of one model meets 
the other. It happens for $A\approx 150$. At this point, the CNF formulas 
that are the matrices of QBFs from the controlled model have 70 existential 
and about 150 universal variables, and about 300 clauses. The corresponding 
parameters for the formulas from the Chen-Interian model have very similar 
values. Indeed, the hardest formulas for the Chen-Interian model when $E=70$ 
and $A=150$ have about 300 clauses (cf. Figure \ref{fig:CIvsCTRL:CI}).

To summarize, a direct comparison for the hardness of the two models is not 
clear cut. On the one hand, our results show that if we make the comparison 
for models with the same number of existential variables the points, in terms 
of $A$, in which the two model generate their hardest instances are very 
different. On the other hand, there is a setting (corresponding to the phase 
transition for the controlled model) in which the controlled model generates 
much harder formulas than any other setting (corresponding to a phase 
transition) for the Chen-Interian model.

For the sake of completeness, we report that we obtained 
results consistent with those discussed above experimenting with other settings
of existential variables and clause lengths.

\renewcommand{\myOneCol}{0.70\columnwidth}
\begin{figure*}[t!]
    \centering
    \captionsetup[sub]{font=normal,labelfont={sf,sf}} 

    \begin{subfigure}[t]{\columnwidth}\centering
	\includegraphics[width=\myOneCol]{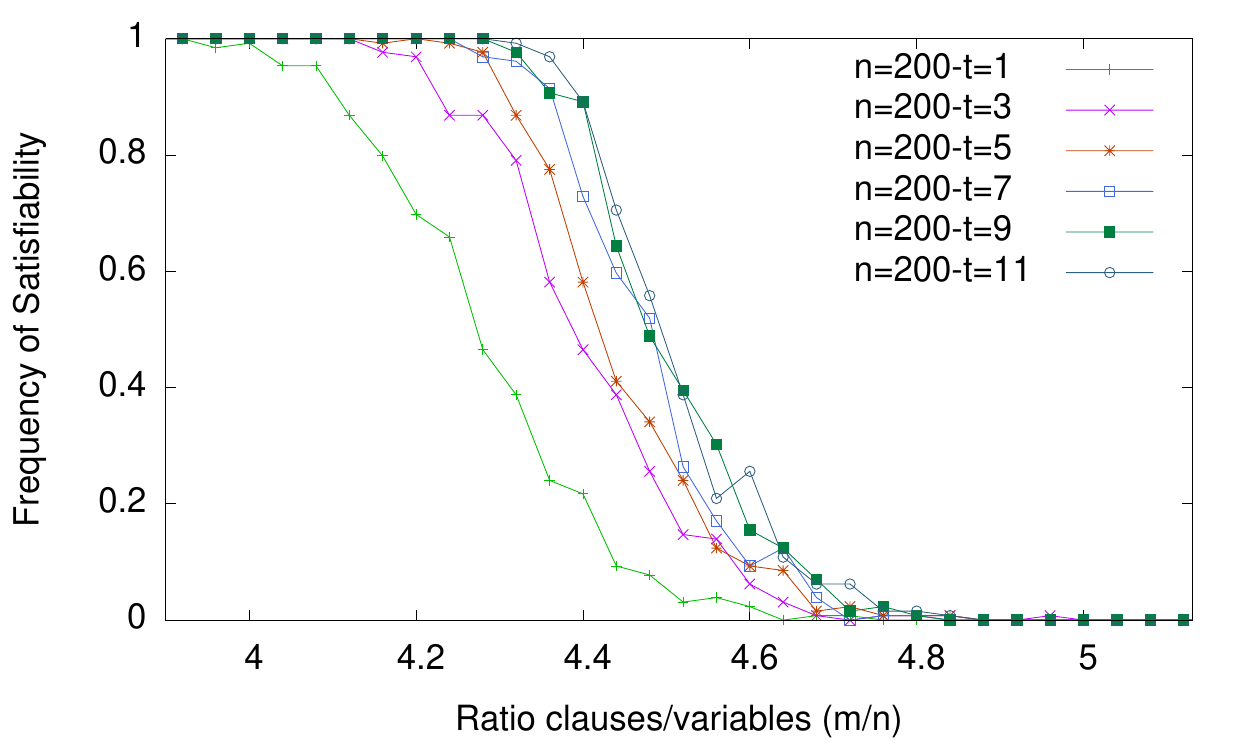}
    \caption{Phase transition shift (SAT)}\label{fig:growing:sat}
	\end{subfigure}
	
    \begin{subfigure}[t]{\columnwidth}\centering
    \includegraphics[width=\myOneCol]{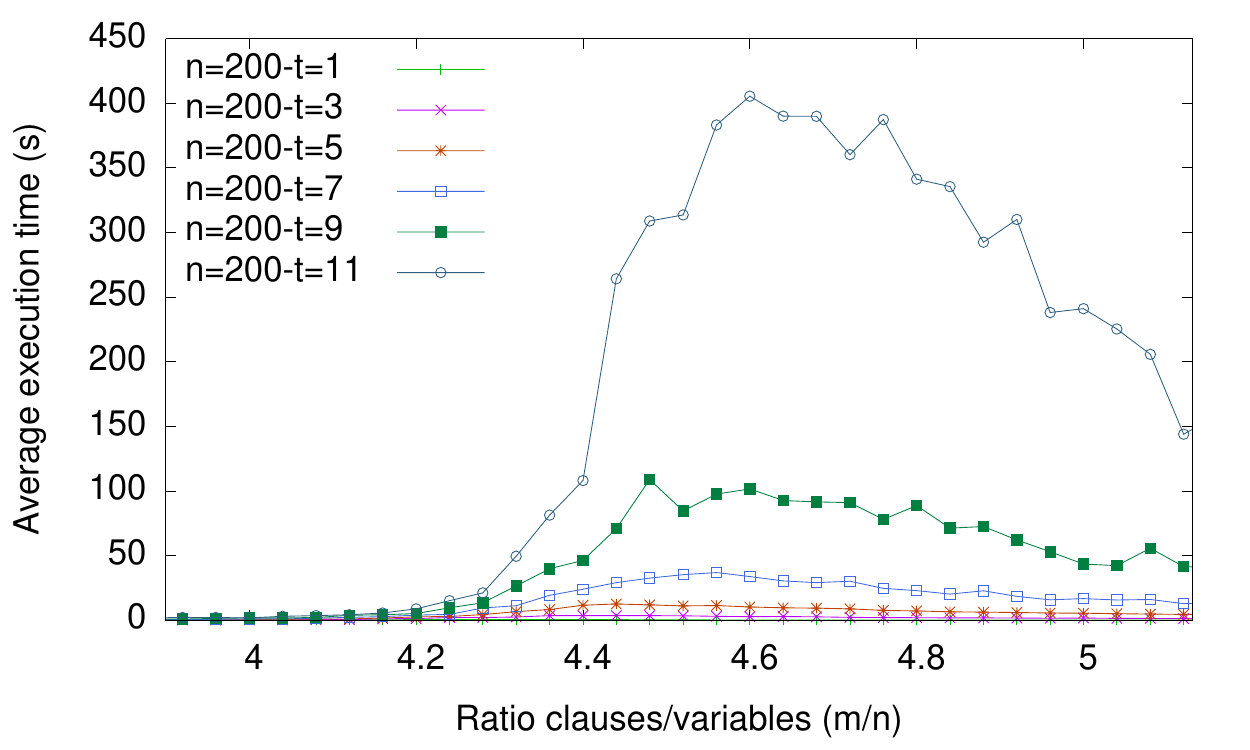}
    \caption{Easy-hard-easy pattern (SAT)}\label{fig:growing:onesolver:sat}
	\end{subfigure}
	
    \begin{subfigure}[t]{\columnwidth}\centering
    \includegraphics[width=\myOneCol]{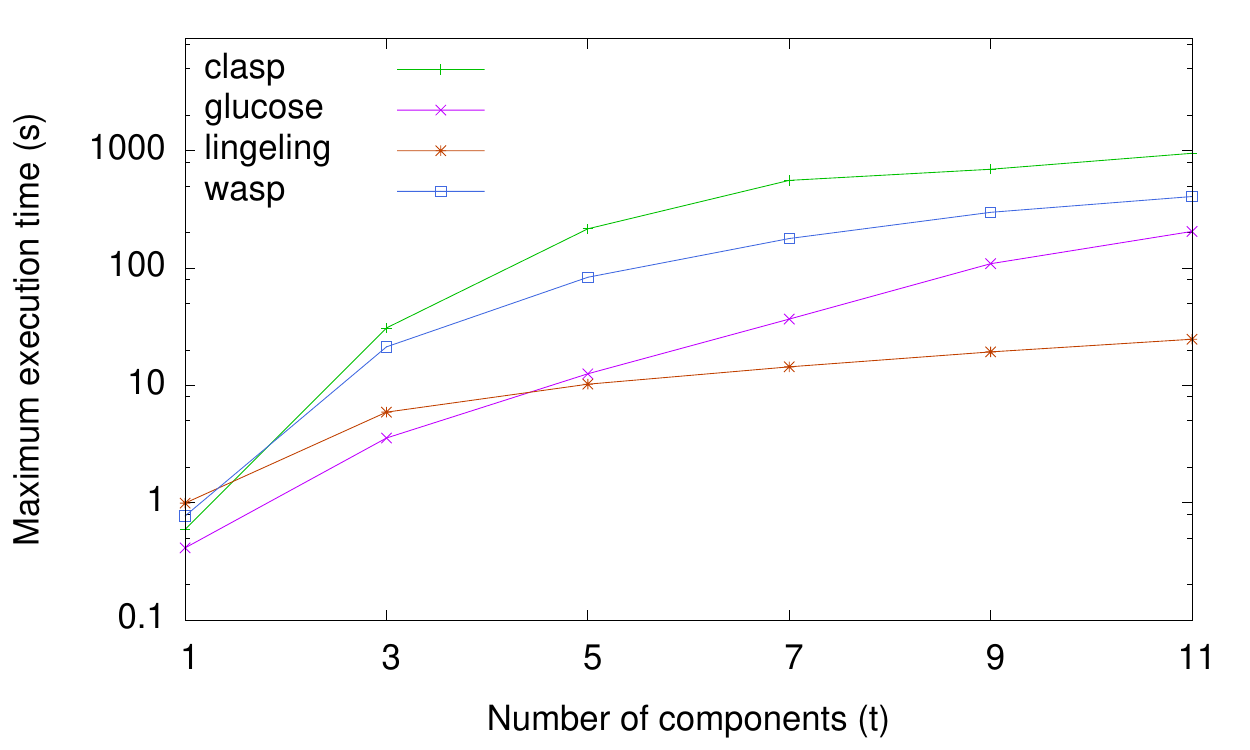} 
    \caption{Hardness of multi-component model (SAT)}\label{fig:growing:sat:maxtime}
	\end{subfigure}
	 
%    \subfigure[Phase transition shift (SAT)]{\label{fig:growing:sat}
%	\includegraphics[width=\myOneCol]{plots/SAT-W-growing/SAT-W-growing-SAT-n200.pdf}
%	}
%    \subfigure[Easy-hard-easy pattern (SAT)]{\label{fig:growing:onesolver:sat}
%	\includegraphics[width=\myOneCol]{plots/SAT-W-growing/SAT-W-growing-Time-Onesolver-n200.pdf}
%    }
%    \subfigure[Hardness of multi-component model (SAT)]{\label{fig:growing:sat:maxtime}
%    \includegraphics[width=\myOneCol]{plots/SAT-W-growing/SAT-W-growing-MaxTime-AllSolvers-n200.pdf} 
%    }
    \caption{Behavior of multi-component model (SAT): phase transition and hardness.}\label{fig:multi}
\end{figure*}

\begin{figure*}[t!]
    \centering
    \captionsetup[sub]{font=normal,labelfont={sf,sf}} 
    \begin{subfigure}[t]{\columnwidth}\centering
    \includegraphics[width=\myOneCol]{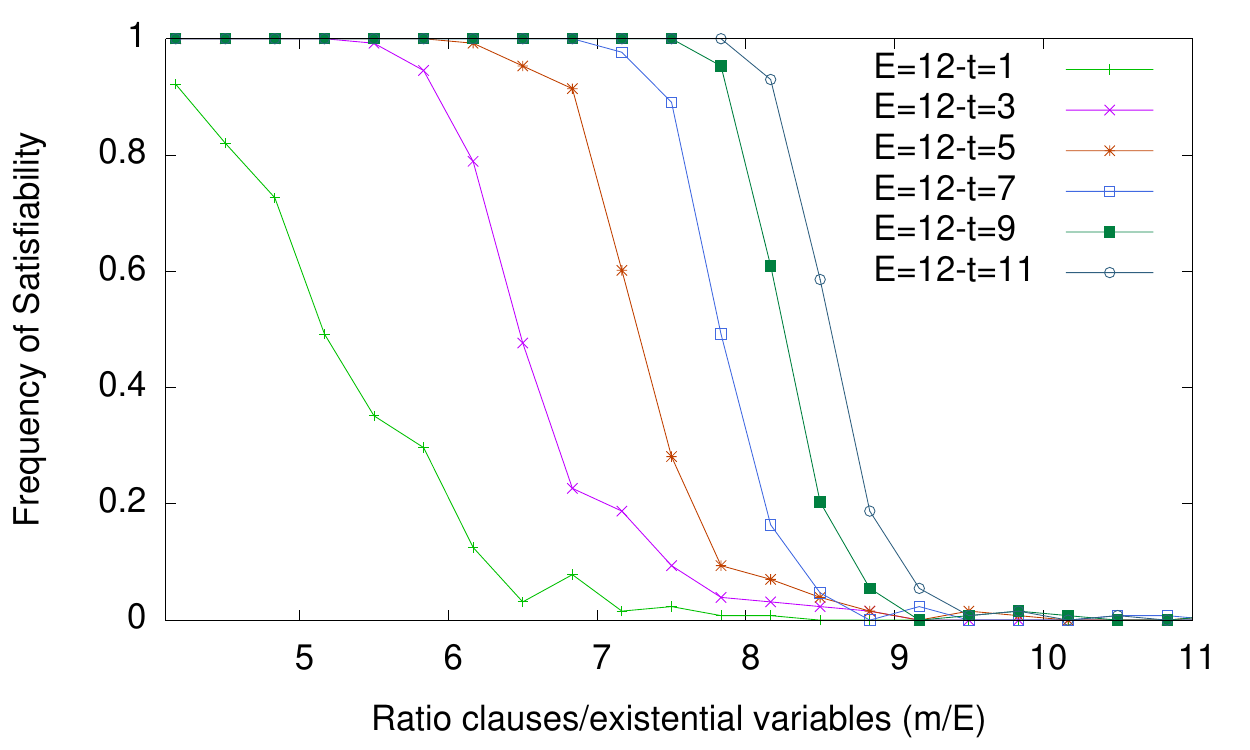} 		
    \caption{Phase transition shift (QBF)}\label{fig:growing:qbf}
	\end{subfigure}
	
    \begin{subfigure}[t]{\columnwidth}\centering
    \includegraphics[width=\myOneCol]{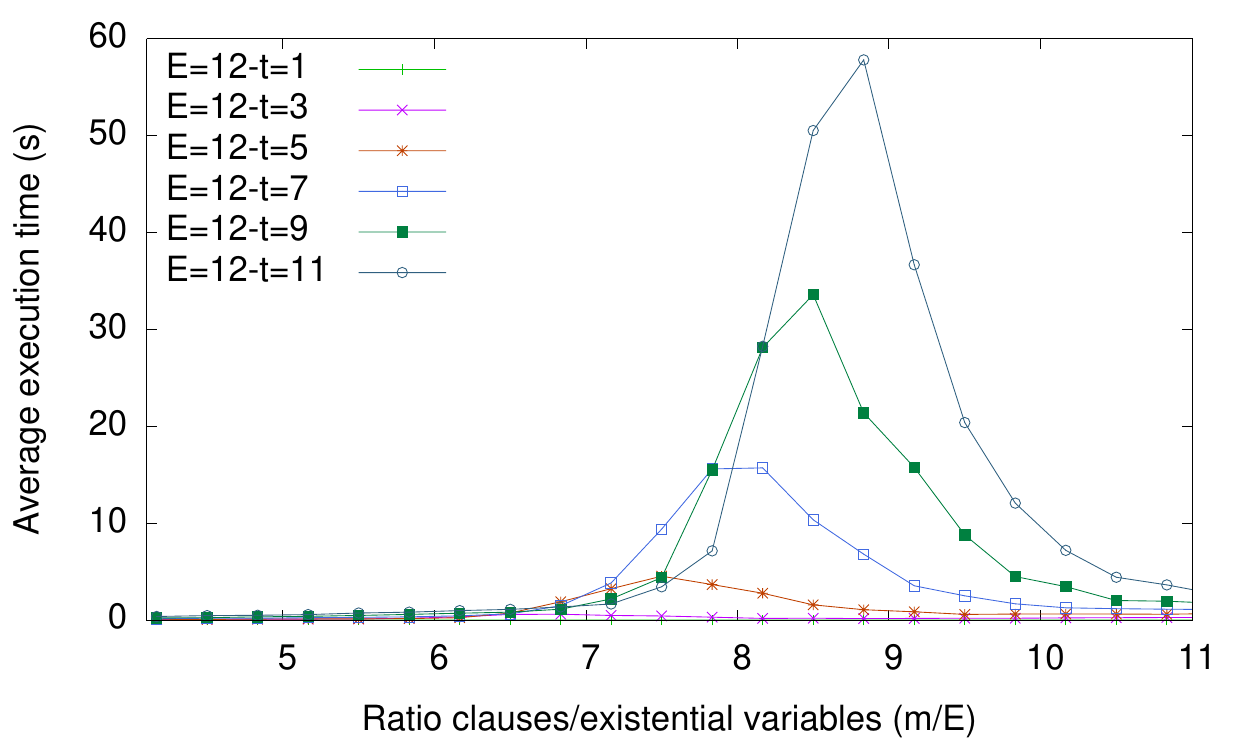} 
    \caption{Easy-hard-easy pattern (QBF)}\label{fig:growing:onesolver:qbf}
	\end{subfigure}
	
    \begin{subfigure}[t]{\columnwidth}\centering
    \includegraphics[width=\myOneCol]{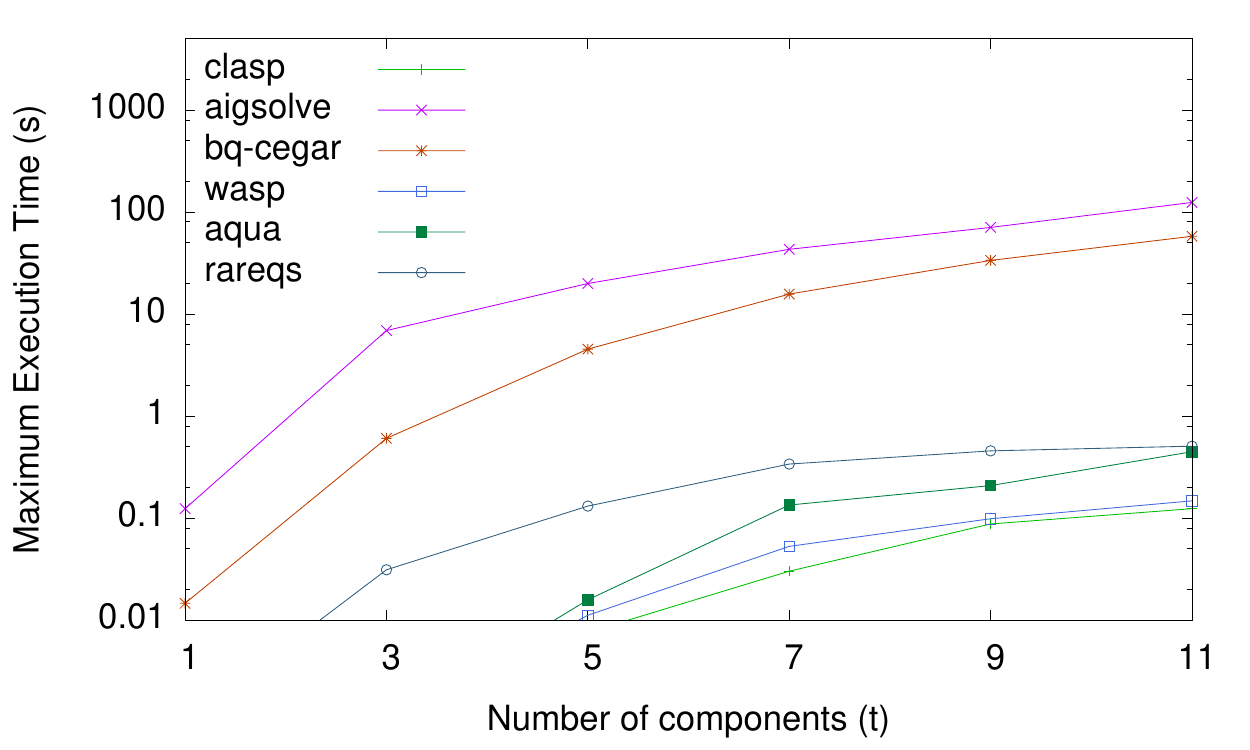} 	
    \caption{Hardness of multi-component model (QBF)}\label{fig:growing:qbf:maxtime}
	\end{subfigure}
    
%    \subfigure[Phase transition shift (QBF)]{\label{fig:growing:qbf}
%    \includegraphics[width=\myOneCol]{plots/CI-W-growing/CI-W-growing-SAT-qbf.pdf} 
%    }
%    \subfigure[Easy-hard-easy pattern (QBF)]{\label{fig:growing:onesolver:qbf}
%    \includegraphics[width=\myOneCol]{plots/CI-W-growing/CI-W-growing-Time-qbf.pdf} 
%    }
%    \subfigure[Hardness of multi-component model (QBF)]{\label{fig:growing:qbf:maxtime}
%    %\includegraphics[width=\myOneCol]{plots/CI-W-growing/CI-W-growing-MaxTime-asp.pdf} 
%    \includegraphics[width=\myOneCol]{plots/CI-W-growing/CI-W-growing-MaxTime-all.pdf} 
%    }
    \caption{Behavior of multi-component model (QBF): phase transition and hardness.}\label{fig:multi}
\end{figure*}

\renewcommand{\myOneCol}{0.998\columnwidth}
%%%
\begin{figure*}[t!]
    \centering
    \captionsetup[sub]{font=normal,labelfont={sf,sf}} 
    	\vspace*{1.2cm}
    \begin{subfigure}[t]{\columnwidth}\centering
    \includegraphics[width=\myOneCol]{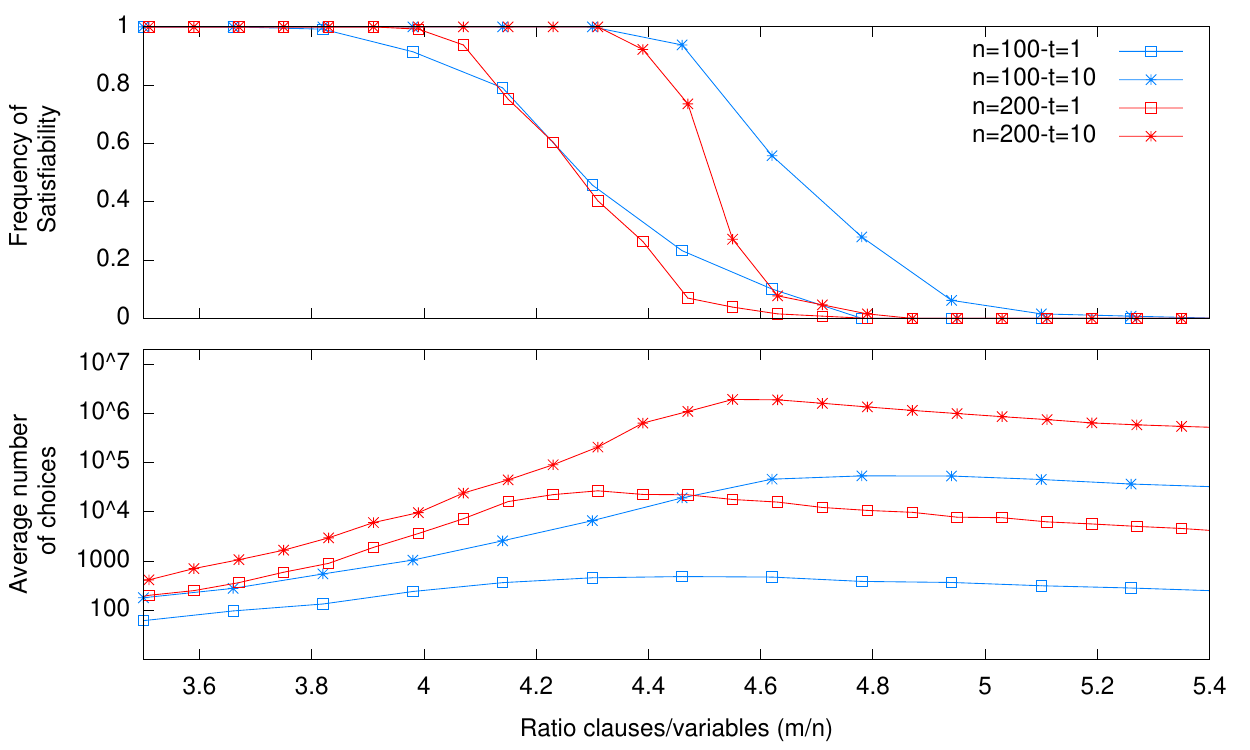} 	
    \caption{Combined effect of variables and components (SAT)}\label{fig:bounds:sat}
	\end{subfigure}
	\vspace*{0.8cm}
	
    \begin{subfigure}[t]{\columnwidth}\centering
	\includegraphics[width=\myOneCol]{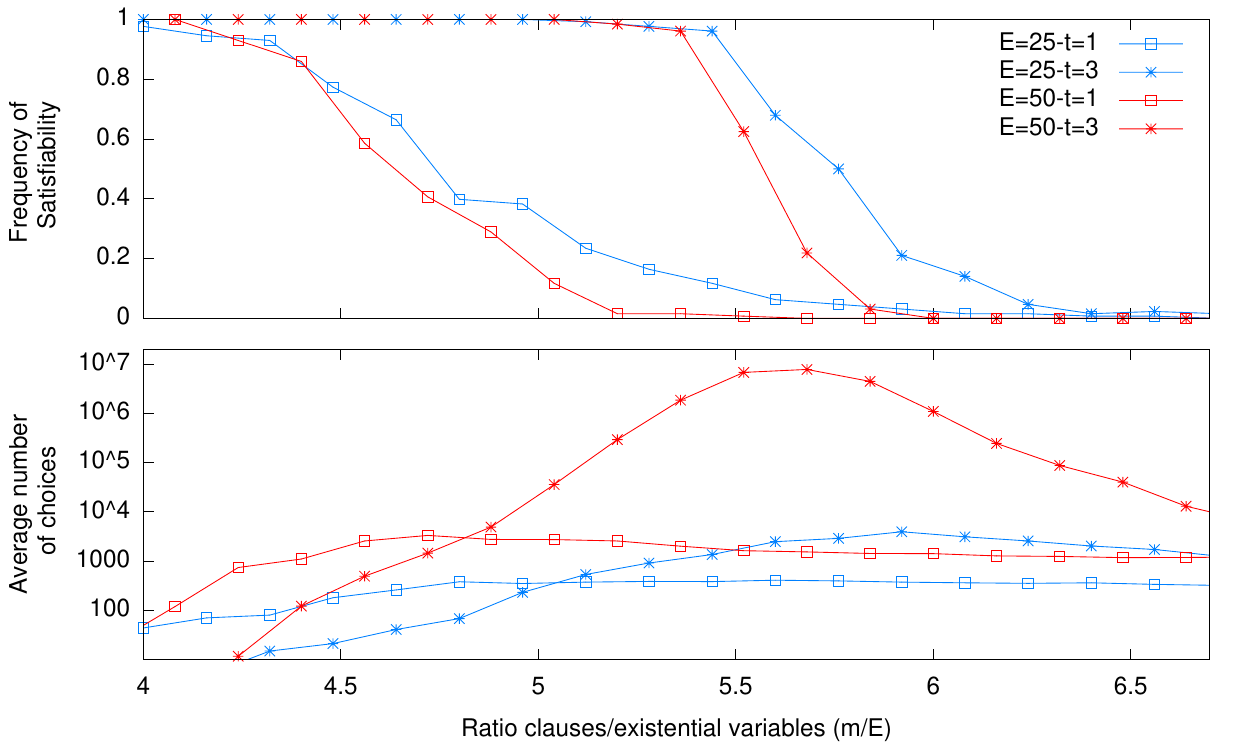} 
    \caption{Combined effect of variables and components (QBF)}\label{fig:bounds:qbf}
	\end{subfigure}
    
%    \subfigure[Combined effect of variables and components (SAT)]{\label{fig:bounds:sat}
%    \includegraphics[width=\myOneCol]{plots/SAT-NvsW/SAT-NvsW-both.pdf} 
%    }
%    
%    \subfigure[Combined effect of variables and components (QBF)]{\label{fig:bounds:qbf}
%	\includegraphics[width=\myOneCol]{plots/CI-NvsW/CI-NvsW-both.pdf} 
%    }
    \caption{Combined effect of variables and components: phase transition and hardness.}\label{fig:bounds}
\end{figure*}
%%%

%%%%%%%%%%%%%%%%%%%%%%%%%%%%%%%%%%%%%%
\subsection{Behavior of Multi-component Model}

%\textbf{Phase transition shift.}
To study the satisfiability of multi-component model instances (the 
location of the phase transition), we considered the setting with the number
of variables (propositional atoms) fixed. Figure~\ref{fig:growing:sat} shows
the results for the $t$ component model $t$-$C(3,200,m)$, with 
$t\in\{1,3,5,7,9,11\}$. The $x$-axis gives the ratio of
the numbers of clauses and variables ($m/200$), the $y$-axis shows the 
frequency of SAT. Consistently with our theoretical results, the phase 
transition shifts from left to right, and it sharpens for growing values 
of $t$. The same can be observed in Figure~\ref{fig:growing:qbf}, showing
the frequency of QBFs from %$t$-$Q(1,3,24,12,m)$ 
$t$-$Q(1,3;24,12;m)$ that are true, for
$t\in\{1,3,5,7,9,11\}$. 
The satisfiability plots obtained for logic programs from the corresponding
models $t$-$\dnf_\dlp(1,3;24,12;m)$ are symmetric with respect to the line 
$y=0.5$ and are not reported.

%\textbf{Components and Hardness.}
To study the hardness of the multi-component model we computed the average 
solver running times. 
The results (on the same instances as before) for the 
\glucose SAT solver and the \bqcegar QBF solver are in 
Figures~\ref{fig:growing:onesolver:sat}~and~\ref{fig:growing:onesolver:qbf}.
The plots show a strong dependency of the hardness on the 
number of components: the peak of hardness moves right and 
grows visibly with $t$. 
In more detail, the CNF formulas (one component) are solved by \glucose 
in less than 0.42s, whereas instances with 
11 components require about 7 minutes, i.e, they are more than 
3 orders of magnitude harder. Analogous behavior is observed when running
\bqcegar on QBF formulas. Those from the one-component model are solved 
instantaneously (average time $\leq$ 0.01s), those from the 11-component 
model require about one minute. 
The experiments with other solvers gave similar results.

To underline the dependency of the hardness on the number of components, 
for each solver we compute the average time over samples of the same size and plot its maximum (for simplicity \textit{maximum execution time}) for several values of $t$ in
Figures~\ref{fig:growing:sat:maxtime} (SAT)~and~\ref{fig:growing:qbf:maxtime} 
(QBF, programs).
In particular, Figure~\ref{fig:growing:sat:maxtime} reports the results 
obtained by running \glucose and \lingeling, and
Figure~\ref{fig:growing:qbf:maxtime} --- the results obtained by running 
\bqcegar, \aigsolve, \aqua, \rareqs and the results obtained by running \clasp and 
\wasp on the corresponding programs. The picture shows that the peak of 
difficulty grows with the number of components no matter the implementation
or the representation roughly, at a rate that is more than quadratic with 
$t$ ($y$-axis in logarithmic scale).

%\textbf{Behavior depending on the number of variables.}
Next, we discuss the behavior of formulas when both the number of variables 
and the number of components grow.
Figure~\ref{fig:bounds:sat} %~and~\ref{fig:bounds:time:sat} 
reports on the behavior of CNF formulas with $n\in\{100,200\}$ and $t\in \{1,10\}$. 
Formulas with 100 variables are plotted in red, and those with 200 variables 
in blue. We use squares to identify graphs for formulas with 
one component and stars for graphs concerning formulas with ten
components.  
Figure~\ref{fig:bounds:sat} shows that when the number of variables grows 
the phase transition moves to the left, and the transition becomes 
sharper.
By Theorem \ref{thm:4}, we expect that 
the bounds on (un)satisfiability do not depend on $t$, indeed when the number 
of variables grows the right shift due to an increase in the number of 
components is compensated, and becomes negligible.
Our experiments also confirm that hardness grows with both the number of components and the number of variables. 
This is seen in Figure~\ref{fig:bounds:sat} 
in the bottom, which plots the average number of choices 
taken by \clasp (we consider choices since 
execution times are negligible). 
Note that CNF formulas with 100 variables and 10 components are already harder than formulas with 200 variables and one component. 
Figure~\ref{fig:bounds:qbf} shows the same picture for QBFs 
in $t$-$Q(1,3;50,25;m)$ (plotted in red) and $t$-$Q(1,3;100,50;m)$
(plotted in blue), with $t \in \{1,3\}$. These results were obtained
by running \clasp on the corresponding programs.

%%%%%%%%%%%%%%%%%%%%%%%%%%%%%%%%%%%%%%
\subsection{Combination of Controlled and Multi-component model}
%\textbf{Combination of Controlled and Multi-component model.}
We now present the results obtained by combining the two models presented 
in this paper. We focus on the effect of the combination of models on the 
hardness of formulas.
The results are summarized in Figure~\ref{fig:CIvsCTRL:Histogram} where 
a bar plot depicts the maximum average execution times 
\textcolor{black}{(i.e., the average execution times measured evaluating the hardest instances at the phase transition)}
obtained by running 
ASP and QBF solvers on instances of models $t$-$Q(1,3;A,E;m)$ 
(multi-component with Chen-Interian) and $t$-$Q^{\ctd}(4,A,E)$ 
(multi-component with controlled) varying the number of components 
$t \in\{1,3,5,7,9,11\}$. 
To obtain comparable execution times with both ASP and QBF solvers, 
\clasp and \wasp were run on instances with $E=24$, 
\rareqs and \aqua on instances with $E=18$,
whereas \bqcegar and \aigsolve on instances with $E=12$, and $A\in[2,120]$
and $m\in[2,300]$. Figure~\ref{fig:CIvsCTRL:Histogram} shows histograms for 
each solver. The results obtained for each setting of $t$ in 
$t$-$Q(1,3;A,E;m)$ and $t$-$Q^{\ctd}(4,A,E)$ are reported side by side in 
blue and orange bars, respectively. 
The red horizontal line helps identifying a timeout of 24 hours, 
and a red bar ending with an arrow indicates that some execution required more than 24 hours.
A red exclamation mark identifies abrupt termination of a solver.

\textcolor{black}{We observe that, no matter the solver, the hardest instances of multi-component with controlled are at least one order of magnitude harder than the Chen-Interian-based counterparts for all settings of $t$.} 
Notably, the combination of the two new models allows to generate instance that are ``super-hard'';
indeed instances with one component are solved in less than $0.9$s and it was sufficient to set $t=11$ to obtain instances that are more than six orders of magnitude harder to evaluate (some ``controlled'' instances with $E\geq 18$ could not even be solved in 24 hours).

\begin{figure}[t!]
    \centering  
    \includegraphics[width=\myOneCol]{}
    \caption{Comparing Chen-Interian and Controlled model: effect of components.}\label{fig:CIvsCTRL:Histogram}
\end{figure}

%%%%%%%%%%%%%%%%%%%%%%%%%%%%%%%%%%%%%%
\subsection{Impact on SAT Solving}
A desirable property of a random model is to generate instances 
that behave similarly to real-world ones~\cite{DBLP:conf/cp/KautzS03,DBLP:conf/ijcai/AnsoteguiBL09}.
% DBLP:conf/aaai/AnsoteguiBLM08 Measuring the hardness of sat instances
% DBLP:conf/ijcai/AnsoteguiBL09 Towards Industrial-Like Random SAT Instances
% DBLP:conf/cp/KautzS03 Ten Challenges Redux
%
This similarity has been measured empirically by comparing the performance of solvers for random and industrial instances. 
Following Ans{\'{o}}tegui et al.~\cite{DBLP:conf/ijcai/AnsoteguiBL09},
we measure the ratio of the execution times of solvers.
We compared \kcnfs (a well-known SAT solver specialized in random instances) with \glucose and \lingeling (both specialized in real-world instances) to assess whether our model allows to generate instances that are better solved by solvers for real-world instances.
Figure~\ref{fig:ImpactOnSolving:sat} shows the
results for the model $t$-$C(3,100,m)$, while varying the number of 
components $t\in\{1,2,3,4,5\}$. 
In particular, the $x$-axis gives the ratio of the numbers of clauses and variables ($m/100$), 
and  the $y$-axis shows \glucose versus \kcnfs (in Figure~\ref{fig:ImpactOnSolving:sat:glu}) 
and \lingeling versus \kcnfs (in Figure~\ref{fig:ImpactOnSolving:sat:lin}).

We observe that, \kcnfs is faster (ratios $>1$) than both \glucose and \lingeling when $t=1$, i.e., when our model coincides with the classical one for random formulas. 
Once we increase the number of components the result is reverted, \glucose and \lingeling are faster than \kcnfs (ratios $<1$), and the difference grows significantly with $t$.
This is independent of the clauses/variables ratio.

The difference between random and real-world instances is 
often attributed to the presence of some \textit{hidden structure} in 
the latter~\cite{DBLP:conf/aaai/AnsoteguiBLM08}. We observed that
multi-component models yield instances that are solved faster by 
solvers designed for real-world instances. We conjecture this is due 
to the component structure introduced by the model. This structure can 
be controlled by varying the number of components, yielding instances 
of varying hardness.

\begin{figure*}[t!]
    \centering
    \captionsetup[sub]{font=normal,labelfont={sf,sf}}
    	\vspace*{1.2cm}
    \begin{subfigure}[t]{\columnwidth}\centering
        \includegraphics[width=\myOneCol]{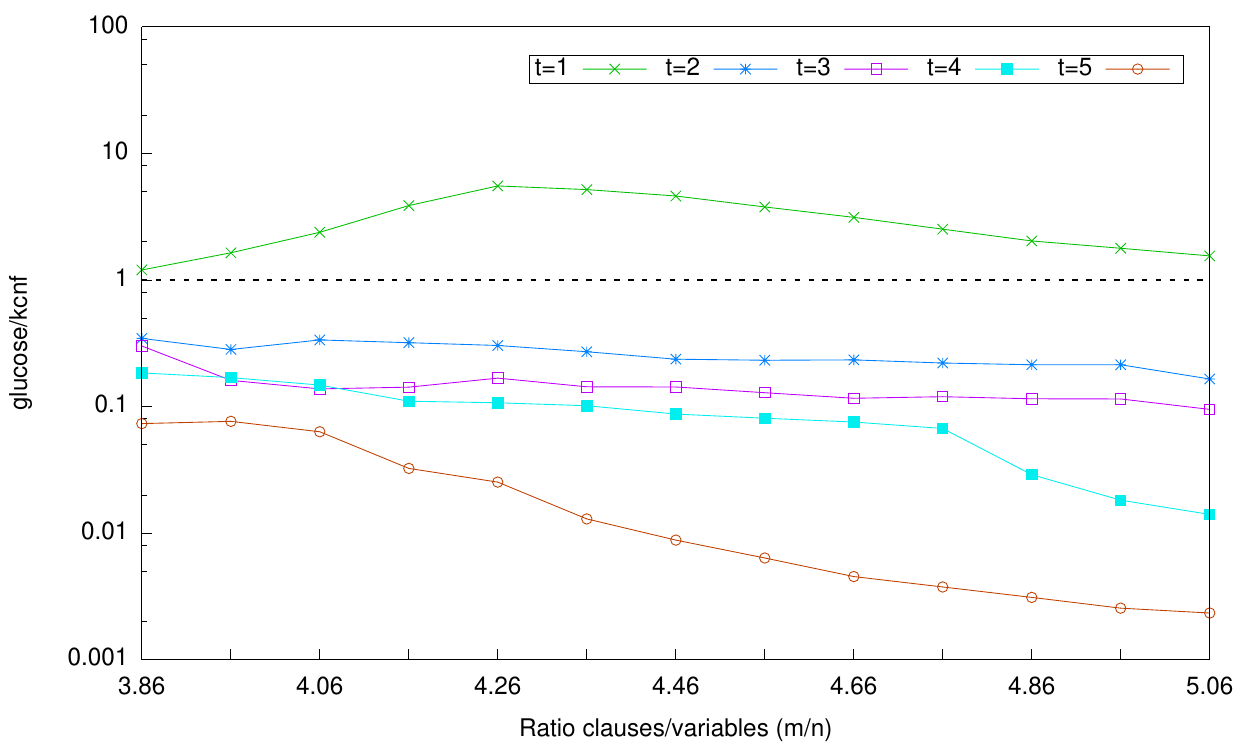}
    \caption{CPU time ratio glucose/kcnfs}\label{fig:ImpactOnSolving:sat:glu}
	\end{subfigure}    
    	\vspace*{0.8cm}
    	
    \begin{subfigure}[t]{\columnwidth}\centering
        \includegraphics[width=\myOneCol]{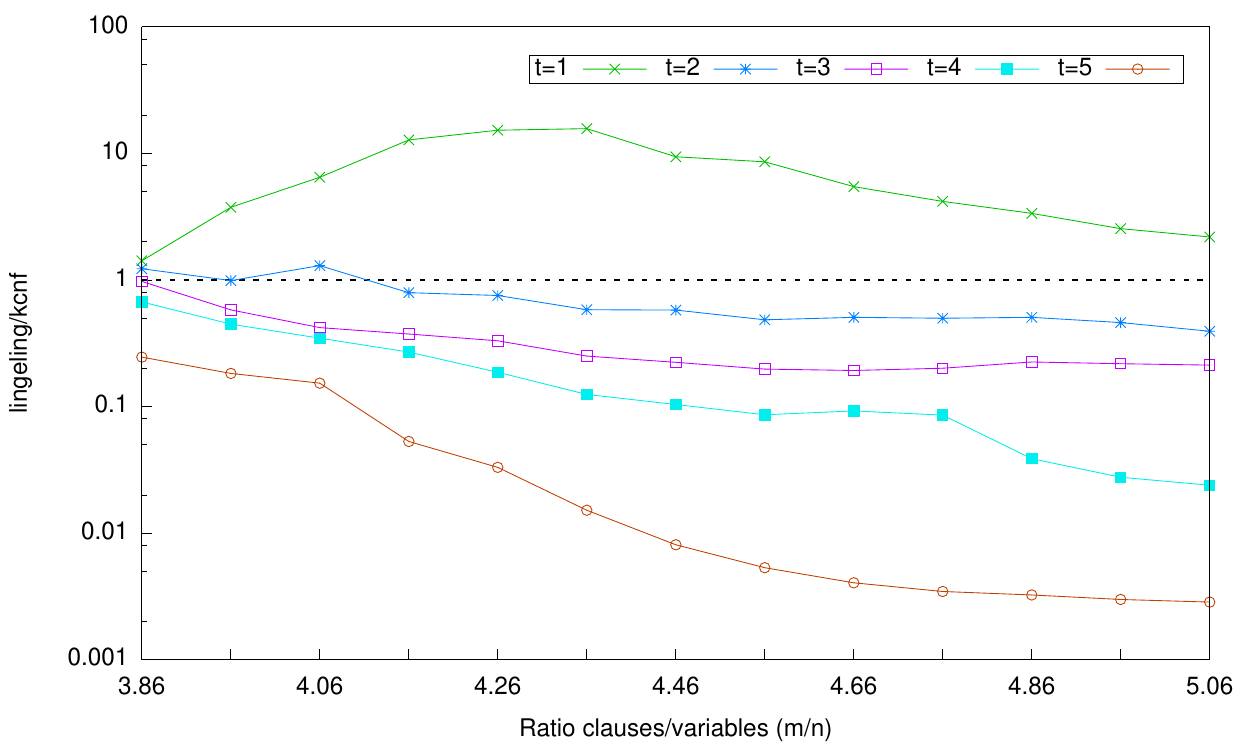}
    \caption{CPU time ratio lingeling/kcnfs}\label{fig:ImpactOnSolving:sat:lin}
	\end{subfigure}    

%    \subfigure[CPU time ratio glucose/kcnfs]{\label{fig:ImpactOnSolving:sat:glu}
%    \includegraphics[width=\myOneCol]{plots/RandomVsStructured/rnd-str-SAT-s-128-n-200-200-1-c-772-1012-20-w-1-5-1-glucose.pdf}
%    }
%    \subfigure[CPU time ratio lingeling/kcnfs]{\label{fig:ImpactOnSolving:sat:lin}
%    \includegraphics[width=\myOneCol]{plots/RandomVsStructured/rnd-str-SAT-s-128-n-200-200-1-c-772-1012-20-w-1-5-1-lingeling.pdf}
%    }
    \caption{Impact on SAT solving.}\label{fig:ImpactOnSolving:sat} %\vspace*{-0.3cm}
\end{figure*}

%%%%%%%%%%%%%%%%%%%%%%%%%%%%%%%%%%%%%%
\subsection{Impact on QBF and ASP Solving}
An analysis distinguishing the behavior of random and industrial instances 
is not possible for ASP and QBF solvers. 
Indeed, no QBF/ASP solvers have ever been designated (or known) as 
\textit{specialized to random instances} in ASP and QBF Evaluations so far 
(cf. \cite{CalimeriGMR16,NarizzanoPT06} and \url{http://www.qbflib.org}).
Nonetheless our models has other interesting implications for QBF and ASP solvers.

\paragraph{Impact on QBF Solving}
To assess the validity of our multicomponent Chen-Interian model for QBF, 
we submitted several instances to the QBF Evaluation 2016.
All our instances (with $n=100$ only, and $t \leq 6$) were classified as \textit{hard} by the organizers, and helped identify a bug in one of the participating solvers, 
demonstrating the efficacy of our model in performance analysis 
and in correctness testing.

\paragraph{Impact on ASP Solving}
For ASP solvers, 
Figure~\ref{fig:ImpactOnSolving:asp:ci} outlines the impact of our model on answer set search for programs corresponding to QBF formulas $t$-$Q(1,3,24,12,m)$ with $t \in \{ 1,3,5,7,$ $9, 11\}$. 
ASP solvers evaluate disjunctive programs by first computing a candidate 
model, and then checking its stability (the latter task is co-NP complete). 
Thus, we plot $(i)$ the ratio between the number of choices made during the search phase and the number of stable model checks performed by \wasp and \clasp, and $(ii)$ the ratio between the time spent in stable model checking and the total execution time for the solver \wasp (results for \clasp are analogous) 
both for growing $t$. %\mc{Why (i) for both and\\ (ii) for wasp only?} 
The ratio between the numbers of choices and model checks 
decreases when the number of components grows, following a similar behavior 
for both solvers. This is a machine-independent measure 
of the impact of the two activities, and we observe that 
the role of the model checker grows with $t$.
Specifically, the impact of the model checking on the total solving time 
grows from about about 3\% ($t=1$) to 88\% ($t=11$). 
Analogous considerations are supported by Figure~\ref{fig:ImpactOnSolving:asp:ctrl}, which
outlines the impact of the combination of controlled and multi-component model on answer set search for programs corresponding to QBF formulas $t$-$Q^{ctd}(4,32,16,m)$ with $t \in \{ 1, 3, 5, 7, 9, 11\}$. Also in this case, and increase of $t$ causes both $(i)$ a decrease of the ratio between the number of choices made during the search phase and the number of stable model checks, and $(ii)$ an increase of the the time spent in stable model checking and the total execution time for the solver
\wasp (results for \clasp are analogous).
Specifically, the impact of the model checking on the total solving time 
grows from about about 0.05\% ($t=1$) to 51\% ($t=11$).

\begin{figure*}[t!]
    \centering
    \captionsetup[sub]{font=normal,labelfont={sf,sf}} 
    \vspace*{1.2cm}
    \begin{subfigure}[t]{\columnwidth}\centering
    	\includegraphics[width=\myOneCol]{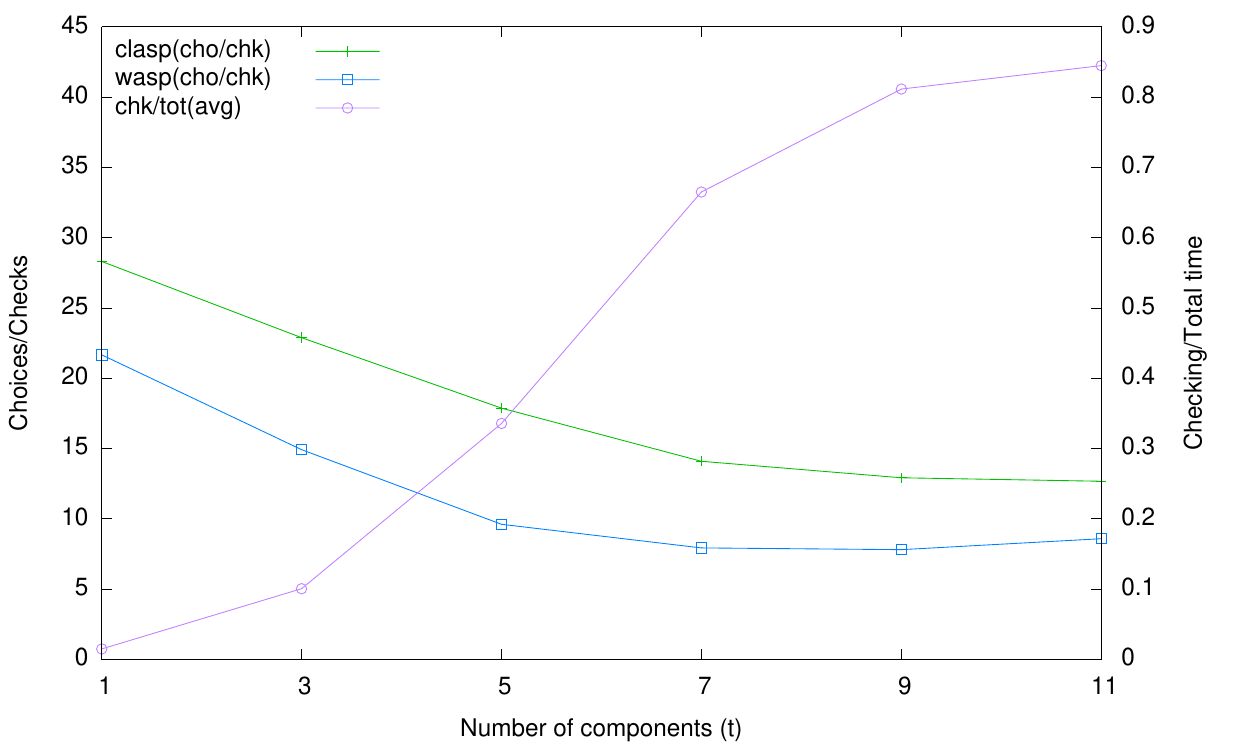} 
    \caption{Answer set computation: Multi-component}\label{fig:ImpactOnSolving:asp:ci}
	\end{subfigure}    
		\vspace*{0.8cm}
		
    \begin{subfigure}[t]{\columnwidth}\centering
	\includegraphics[width=\myOneCol]{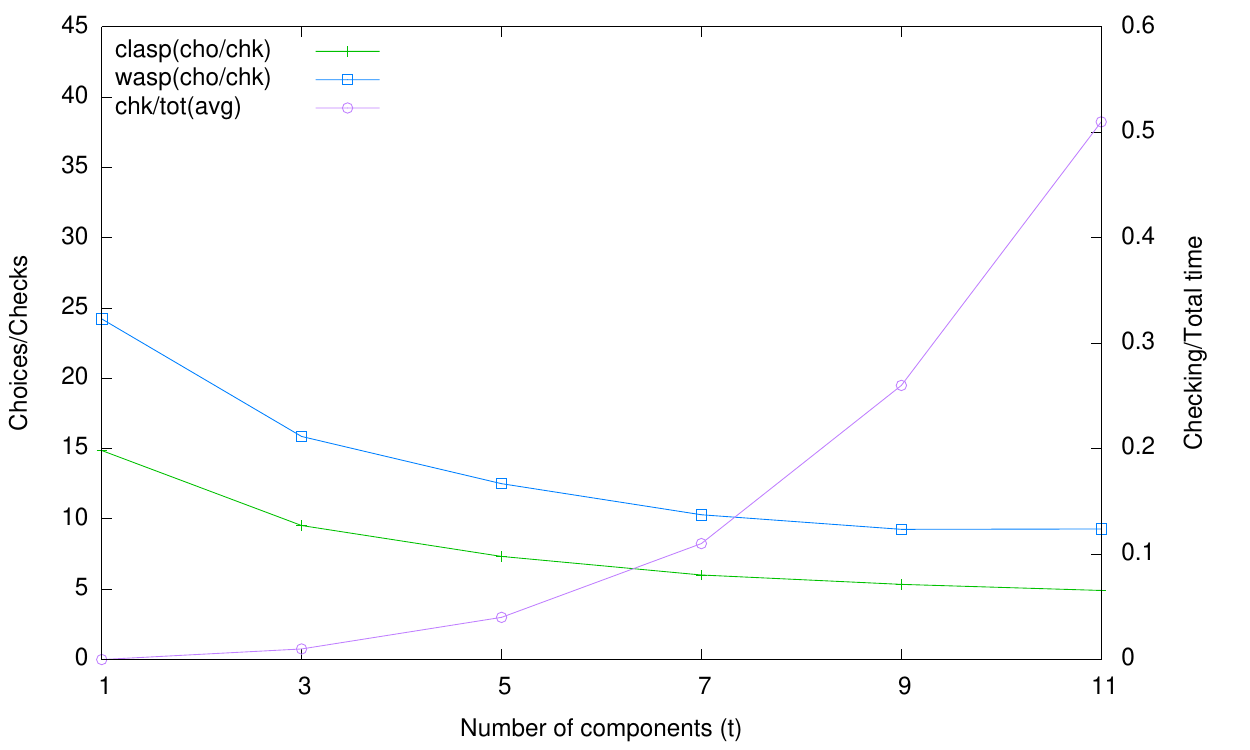} 
    \caption{Answer set computation: Multi-component Controlled Model}\label{fig:ImpactOnSolving:asp:ctrl}
	\end{subfigure}       
%    \subfigure[Answer set computation: Multi-component]{\label{fig:ImpactOnSolving:asp:ci}
%	\includegraphics[width=\myOneCol]{plots/RandomVsStructured/rnd-str-asp-ci-s-128-e-48-48-1-a-24-24-1-c-96-264-4-w-1-11-2.pdf} 
%    }
%   
%    \subfigure[Answer set computation: Multi-component Controlled Model]{\label{fig:ImpactOnSolving:asp:ctrl}
%	\includegraphics[width=\myOneCol]{plots/RandomVsStructured/rnd-str-asp-ctrl-s-128-E-2-140-4-A-16-16-1-w-1-11-2.pdf} 
%    }
    \caption{Impact on ASP solving: answer set computation.}\label{fig:ImpactOnSolving:asp} %\vspace*{-0.3cm}
\end{figure*}

\begin{figure*}[t!]
    \centering
    \captionsetup[sub]{font=normal,labelfont={sf,sf}} 
    	\vspace*{1.2cm}
    \begin{subfigure}[t]{\columnwidth}\centering
	\includegraphics[width=\myOneCol]{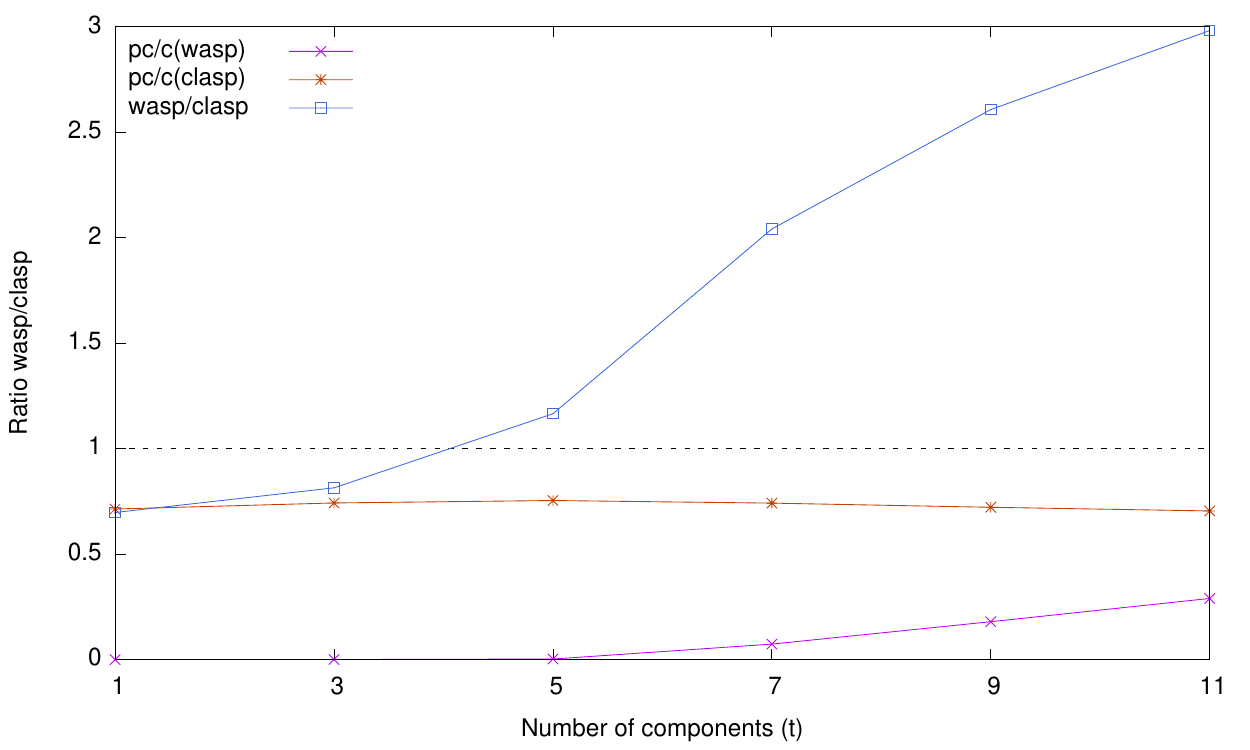} 
    \caption{Partial checking: Multi-component}\label{fig:ImpactOnSolving:pc:ci}
	\end{subfigure}    
	\vspace*{0.8cm}
	
    \begin{subfigure}[t]{\columnwidth}\centering
	\includegraphics[width=\myOneCol]{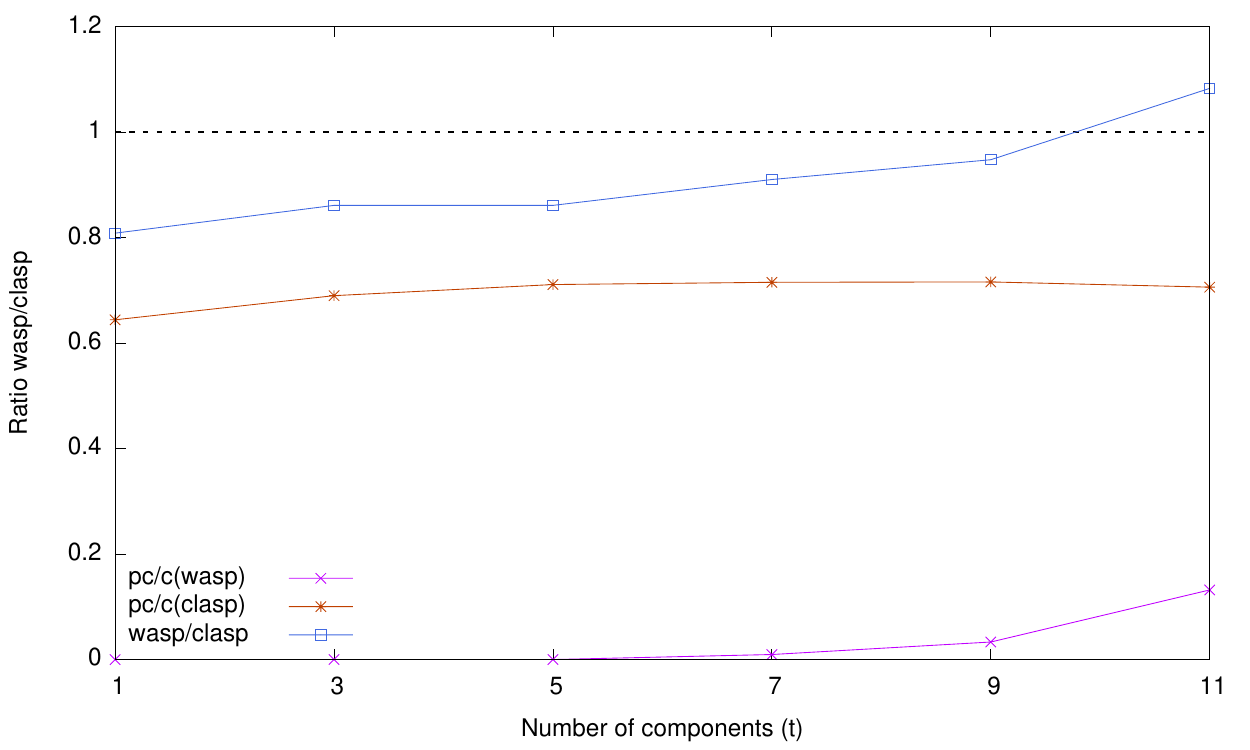} 
    \caption{Partial checking: Multi-component Controlled Model}\label{fig:ImpactOnSolving:pc:ctrl}
	\end{subfigure}    

%    \subfigure[Partial checking: Multi-component]{\label{fig:ImpactOnSolving:pc:ci}
%	\includegraphics[width=\myOneCol]{plots/RandomVsStructured/rnd-str-asp-ci-pc-w-1-11-2.pdf} 
%    }
%
%    \subfigure[Partial checking: Multi-component Controlled Model]{\label{fig:ImpactOnSolving:pc:ctrl}
%	\includegraphics[width=\myOneCol]{plots/RandomVsStructured/rnd-str-asp-ctrl-pc-w-1-11-2.pdf} 
%    }
    \caption{Impact on ASP solving: impact of partial checking.}\label{fig:ImpactOnSolving:asp:pc} %\vspace*{-0.3cm}
\end{figure*}

It is known that on usual benchmarks ASP solvers spend more time in 
the model search phase than in the final model checking phase~\cite{DBLP:conf/lpnmr/Pfeifer04} 
(this also happens on benchmarks we generated for $t=1$). However, our multi-component models 
allow us to generate in a controllable way instances that put emphasis %also 
on the model checking phase.

Finally, we report some other observations that point to a potential 
impact of our models in detecting areas of improvement for solvers.
Let us recall that the two ASP solvers we studied, \clasp and \wasp,
employ different strategies for stable model checking. \clasp searches 
for unfounded sets~\cite{DBLP:conf/ijcai/GebserKS13}, while \wasp searches 
for a minimal model of the program reduct~\cite{AlvianoDLR15}. Both solvers 
are able to check partial interpretations, but they employ different 
heuristics for enabling this search space pruning technique.
%Figures~\ref{fig:ImpactOnSolving:pc:ci}-\ref{fig:ImpactOnSolving:pc:ctrl} 
Figure~\ref{fig:ImpactOnSolving:asp:pc} compares \wasp and \clasp by 
plotting the ratio between the time required by the two solvers for finding 
an answer set (labeled \wasp/\clasp) and the ratio between the number of 
partial 
and total checks performed by \wasp (labeled pc/c(\wasp)) and \clasp 
(labeled pc/c(\clasp)) for the two multi-component models we studied.
The results on the multi-component Chen-Interian model are in  
Figure~\ref{fig:ImpactOnSolving:pc:ci}. The results on the multi-component 
controlled model are in Figure~\ref{fig:ImpactOnSolving:pc:ctrl}. One can
see that \wasp is faster than \clasp when the number of components 
is small. When the number of components grows \clasp becomes faster 
and takes over. Interestingly, the deterioration in the 
performance of \wasp corresponds to the point in which the ratio pc/c 
starts increasing. In contrast, \clasp maintains consistently the ratio
of about 70\% of the numbers of partial and total checks, and this seems 
to pay off for larger values of $t$. The results suggest that partial 
checking in \wasp was implemented in a less efficient way then in \clasp, 
and it hinders \wasp when the number of components grows. It seems also 
that for easier instances better performance could be obtained by disabling 
or reducing the number of partial checks as they do not 
seem to be essential for the performance. These observation suggest that 
there is space for solver developers to devise smarter heuristics for 
improving the usage of partial checking.

%%%%%%%%%%%%%%%%%%%%%%%%%%%%%%%%%%%%%%%%%%%%%%%%%%%%%%%%%%%%%%%%%%%%%%%%%%%%
\section{Conclusions}
In this paper we proposed the controlled and multi-component models for random 
propositional formulas, and disjunctive logic programs. 
The models extend the well-known fixed clause length model for k-SAT, 
and the Chen-Interian model for QBF. 

We provide theoretical bounds that predict the location of 
the region where the phase-transition occurs, and we present the results 
of an experimental analysis that confirms our theoretical findings in 
practice. Our experiments also show that the hardest instances are located 
in the phase transition region. Moreover, in the multi-component model the 
hardness of formulas depends significantly on the number of components.

Comparing models, we observed that the controlled model allows one to 
generate random instances that are much harder than those obtained with 
the Chen-Interian model with the same number of existential variables. 
Further, multi-component model allows one to generate random instances with 
few components that are several orders of magnitude harder than those 
generated with the same number of variables from the underlying 
``single-component'' model. Finally, a combination of the two new models 
results in the generation of programs and formulas that are ``super-hard'' 
to evaluate.

Our experiments with different solvers and encodings gave consistent results.
This supports our claim that the phenomena we observed are inherent properties 
of the models rather than an artifact of the solver used.

Despite their simple structure the models have theoretical and empirical 
properties that make them important for further advancement of the SAT, QBF 
and ASP solvers.

First, the hardness of formulas/programs can be controlled and, unlike 
in the earlier models, not only in terms of the ratio of clauses to variables. 
Our experiments showed that the hardness \emph{strongly} depends on the number 
of components. Thus, it can also be controlled by varying that parameter,
and even a small number of components can lead to extremely hard instances.
Further, in our experiments (as well as in the QBF Competition 2016) instances 
generated according to our models helped identifying bugs in existing solvers.  
Moreover, the multi-component model generates formulas that in at least one
aspect are similar to instances arising in practice: they are solved better 
by SAT solvers specialized in industrial benchmarks than by SAT solvers 
specialized in random ones. 
This makes them useful for development and testing of solvers intended
for practical applications. Finally, our models of random disjunctive
programs are the first such models for that class of logic objects. This
and the fact that it allows us to control the role of the stable model
checking phase point to its potential for the development of ASP solvers.

Our work raises an interesting open question. The controlled model we proposed
and studied stipulates that clauses in the matrix of QBFs contain exactly one
universal variable. It is possible to lift this requirement. We discuss some 
natural extensions in \ref{app:generalization}. It turns out that when the 
number of universal variables per clause is greater than one, the generalized 
model generates instances that exhibit a qualitatively different behavior. 
Arguably, they are easier then formulas from the corresponding Chen-Interian 
model.

However, a comparison to the Chen-Interian model is not clear cut, a problem 
we already noted for the one universal variable case. In particular, we chose
to compare for hardness formulas from the two models by fixing in each model 
the number of existential variables to the same value. Under this constraint,
the hardest formulas in the basic controlled model contain more universal 
variables than it is the case for the hardest formulas from the Chen-Interian 
model. However, for the generalized controlled model and its \emph{smooth} 
version, both discussed in \ref{app:generalization}, this relationship 
reverses. The hardest formulas from the (smooth) generalized controlled model 
have \emph{many} fewer universal variables than the hardest ones from the 
Chen-Interian model. Developing alternative perspectives on formulas from 
the two models might provide a better understanding of the relative hardness.
%the hardness of the two models might lead to a different, perhaps ``fairer,''
%perspective on the hardness of the (smooth) generalized model. 
This is an important avenue to explore and it requires further studies.

\section*{Acknowledgments}
\thanks{
The work of the first two authors has been partially supported by 
the Italian Ministry for Economic Development (MISE) under the project
``PIUCultura -- Paradigmi Innovativi per l'Utilizzo della Cultura'' 
(n. F/020016/01-02/X27), and under project ``Smarter Solutions in the Big 
Data World (S2BDW)'' (n. F/050389/01-03/X32) funded within the call 
``HORIZON2020'' PON I\&C 2014-2020. The work of the third author has been
partially supported by the the NSF grant IIS-1707371.
}

\bibliographystyle{abbrv}
\bibliography{bibtex}

\label{lastpage}

\newpage

\section*{Appendix}

\subsection*{Generalized Controlled Model}\label{app:generalization}

The controlled model $Q^\ctd(k,A,E)$
introduced in Section~\ref{sec:controlled}, stipulates 
that every clause in the matrix of a QBF from the model contains exactly one
universal variable, and that each universal variable occurs in exactly two
clauses, in one of them as a positive literal (not negated) and in the other
one as the negative literal (negated). It follows that $Q^\ctd(k,A,E) 
\subseteq Q(1,k-1;A,E;2A)$, that is, the controlled model $Q^\ctd(k,A,E)$
is a restriciton of the Chen-Interian model $Q(1,k-1;A,E;2A)$. Moreover, 
the key property of the controlled model $Q^\ctd(k,A,E)$ is that, for every
truth assignment to the universal variables in $X$, once we simplify the 
matrix accordingly we are left with \emph{exactly} $A$ $(k-1)$-literal clauses
over $E$ variables, whereas in the case of the Chen-Interian model 
$Q(1,k-1;A,E;2A)$, similar simplifications leave us with %$(k-1)$-CNF formulas
with varying number of $(k-1)$-clauses, with the \emph{average} number being 
$A$.

We now generalize the model to allow clauses with exactly $h$ occurrences of 
universal variables, where $h$ is a fixed integer satidfying $1\leq h\leq k$.
More precisely, the model consists of QBFs $\forall X \exists Y\ F$, where
$F$ consists of $\binom{A}{h} 2^h$ $k$-literal clauses, each clause consists 
of $h$ literals over $X$ and $k-h$ literals over $Y$ (with no repetitions of 
variables), and where for every consistent set of $h$ literals over $X$ there 
is a single clause in the formula that contains them. A QBF in this model 
(to be precise, its matrix) is obtained by generating $\binom{A}{h} 2^h$ 
%$(k-h)$-literal
$h$-literal clauses over $X$ and extending each of them by a randomly 
generated consistent $(k-h)$-element set of literals over $Y$. We denote the 
set of QBFs obtained in this way by $Q^\gctd(h,k-h,A,E)$ and call it the 
\textit{generalized controlled model}. 

Clearly, $Q^\ctd(k,A,E)=Q^\gctd(1,k-1,A,E)$. Thus, the controlled model
we discussed in the paper is a special case of the model described here.
We also note that $Q^\gctd(h,k-h,A,E) \subseteq Q(h,k-h;A,E;\binom{A}{h} 2^h)$. 
Thus, the generalized controlled model is a restriction of the appropriate
Chen-Interian model --- while random with respect to variables in $Y$ 
(existentially quantified variables), the way variables in $X$ (universally
quantified variables) are treated is fully deterministic. In particular, for
every truth assignment on $X$, once we simplify the matrix accordingly, 
we are left with \emph{exactly} $\binom{A}{h}$ $(k-h)$-literal clauses 
over $E$ variables in $Y$, while in the case of the Chen-Interian model 
$Q(h,k-h;A,E;\binom{A}{h} 2^h)$, similar simplifications leave us with
$(k-h)$-CNF formulas with varying number of clauses, with the 
\emph{average} number being $\binom{A}{h}$.

\begin{example}
Consider a set $X=\{x_1,x_2,x_3\}$ of $A=3$ universal variables and a set 
$Y=\{y_1,y_2,y_3,y_4\}$ of $E=4$ existential variables. We are interested 
in ``generalized controlled formulas'' having $k=5$ literals with $h=2$ of
them over $X$. That is, we are interested in the model $Q^\gctd(2,3,3,4)$.
According to the definition, we have to build $\binom{3}{2}2^2=12$ clauses 
of length $5$. For an appropriate enumeration $C_i$, $i=1,\ldots,12$, these 
clauses will satisfy:
\begin{center}
$\begin{array}{rclrclrcl}
\{x_1,x_2\}&\subseteq& C_1\hspace{1cm} &
\{x_1,x_3\}&\subseteq& C_5\hspace{1cm} &
\{x_2,x_3\}&\subseteq& C_9 \\
\{\neg x_1,x_2\}&\subseteq& C_2\hspace{1cm} &
\{\neg x_1,x_3\}&\subseteq& C_6\hspace{1cm} &
\{\neg x_2,x_3\}&\subseteq& C_{10} \\
\{x_1,\neg x_2\}&\subseteq& C_3\hspace{1cm} &
\{x_1,\neg x_3\}&\subseteq& C_7\hspace{1cm} &
\{x_2,\neg x_3\}&\subseteq& C_{11} \\
\{\neg x_1,\neg x_2\}&\subseteq& C_4\hspace{1cm} &
\{\neg x_1,\neg x_3\}&\subseteq& C_8\hspace{1cm} &
\{\neg x_2,\neg x_3\}&\subseteq& C_{12}
\end{array}$
\end{center}
Let us choose any truth assignment $\sigma$ on $X$, for instance, 
$\sigma(x_1)=x_1$, $\sigma(x_2)=\neg x_2$, and $\sigma(x_3)=\neg x_3$.
Once we simplify the clauses with respect to this assignment, exactly
$\binom{3}{2}=3$ clauses 
%$C_3\setminus\{x_1,\neg x_2\}$, 
%$C_7\setminus\{x_1,\neg x_3\}$ and 
%$C_{12} \setminus\{\neg x_2,\neg x_3\}$ 
$C_2\setminus\{\neg x_1,x_2\}$, 
$C_6\setminus\{\neg x_1,x_3\}$ and 
$C_{9} \setminus\{x_2,x_3\}$ 
remain 
%\mc{check} 
(all other
clauses after the simplifications become tutologies and can be dropped). 
\end{example}

Let $q^\gctd(h,k-h,A,E)$ denote the probability that a random formula in 
$Q^{\gctd}(h,k-h,A,E)$ is true. 
We define $\mu^\gctd_l(h,k-h)$
to be the supremum over all positive real numbers $\rho$ such that
$$\lim_{E\rra\infty} q^\gctd(h,k-h,\lfloor \rho E^{1/h}\rfloor,E) =1,$$
and $\mu^\gctd_u(h,k-h)$ to be the infimum over all positive real numbers $\rho$ such that 
$$\lim_{E\rra\infty} q^\gctd(h,k-h,\lfloor \rho E^{1/h}\rfloor, E) =0.$$

We will now derive bounds on $\mu^\gctd_l(h,k-h)$ and $\mu^\gctd_u(h,k-h)$ by 
exploiting results on random $(k-h)$-CNF formulas. The proof is an adaptation 
of the proof of Theorem \ref{th:controlledThreshold}.

\begin{thm}\label{th:genControlledThreshold} 
For every integers $k$ and $h$ such that $k\geq 2$ and $1\leq h<k$, 
$\mu^\gctd_l(h,k-h)$ and $\mu^\gctd_u(h,k-h)$ are well defined.
%where $U_{k-1}$ and $L_{k-1}$ are an upper bound and a lower bound for random $(k-1)$-CNF formulas, respectively. 
\end{thm}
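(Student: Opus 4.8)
The plan is to adapt the proof of Theorem~\ref{th:controlledThreshold} almost verbatim, replacing the linear clause counts $A$ and $2A$ by the binomial counts $\binom{A}{h}$ and $\binom{A}{h}2^h$, and tracking how the scaling $A=\lfloor\rho E^{1/h}\rfloor$ turns these into clause-to-variable ratios with finite limits. Write $\Phi=\forall X\exists Y\,F\in Q^\gctd(h,k-h,A,E)$, let $F^Y$ be the $(k-h)$-CNF obtained by deleting the $h$ universal literals from every clause of $F$, and for an interpretation $I$ of $X$ let $F|_I$ be the conjunction of the $Y$-parts of those clauses whose $h$ universal literals are all falsified by $I$. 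As noted in the discussion of the generalized model, exactly one clause survives for each $h$-subset of $X$, so $F|_I$ has exactly $\binom{A}{h}$ clauses; since the $Y$-parts are generated independently and uniformly, $F^Y$ is a uniform random element of $C(k-h,\binom{A}{h}2^h,E)$ and each $F|_I$ is a uniform random element of $C(k-h,\binom{A}{h},E)$. The two implications driving the argument are unchanged: if some $F|_I$ is unsatisfiable then $\Phi$ is false, and if $F^Y$ is satisfiable then every $F|_I$ is satisfiable and hence $\Phi$ is true.

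For the upper bound I would fix an interpretation $I$ and use the first implication. Under $A=\lfloor\rho E^{1/h}\rfloor$ one has $\binom{A}{h}\sim A^h/h!\sim \rho^h E/h!$, so the clause-to-variable ratio of $F|_I$ tends to $\rho^h/h!$. For any $\rho$ with $\rho>(h!\,\rho_u(k-h))^{1/h}$, this limit exceeds $\rho_u(k-h)$; by the definition of $\rho_u(k-h)$ together with monotonicity of satisfiability in the number of clauses, $F|_I$ is almost surely unsatisfiable and $\Phi$ almost surely false, giving $\mu^\gctd_u(h,k-h)\le (h!\,\rho_u(k-h))^{1/h}$. Symmetrically, for the lower bound I would use $F^Y$, whose clause count $\binom{A}{h}2^h\sim (2\rho)^h E/h!$ makes its ratio tend to $(2\rho)^h/h!$; for any $\rho<\tfrac12(h!\,\rho_l(k-h))^{1/h}$ this limit is below $\rho_l(k-h)$, so $F^Y$ is almost surely satisfiable and $\Phi$ almost surely true, giving $\mu^\gctd_l(h,k-h)\ge \tfrac12(h!\,\rho_l(k-h))^{1/h}$. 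For $h=1$ these reduce exactly to the bounds of Theorem~\ref{th:controlledThreshold}.

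Well-definedness then follows as in Theorem~\ref{ci1}. Since $\rho_u(k-h)<\infty$, the set of positive $\rho$ with $\lim_{E\rra\infty} q^\gctd(h,k-h,\lfloor\rho E^{1/h}\rfloor,E)=0$ contains $\bigl((h!\,\rho_u(k-h))^{1/h},\infty\bigr)$, hence is non-empty and bounded below by $0$, so $\mu^\gctd_u(h,k-h)$ exists as a finite infimum. Since $\rho_l(k-h)>0$, the set of positive $\rho$ with limit $1$ contains $\bigl(0,\tfrac12(h!\,\rho_l(k-h))^{1/h}\bigr)$, hence is non-empty; and because no $\rho$ exceeding the upper threshold can belong to it, it is bounded above, so $\mu^\gctd_l(h,k-h)$ exists as a finite supremum.

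I expect the main obstacle to be the one step the $h=1$ proof does not face: the clause count $\binom{A}{h}$ is polynomial rather than linear in $A$, so it is not literally of the form $\lfloor\rho' E\rfloor$ used to define $\rho_l(k-h)$ and $\rho_u(k-h)$. The careful point is therefore to sandwich the deterministic count $\binom{A}{h}$ (and $\binom{A}{h}2^h$) between two linear-in-$E$ clause counts straddling the relevant threshold, and to invoke monotonicity of the satisfiability probability in the number of clauses so that the standard thresholds for random $(k-h)$-CNF apply to a formula whose number of clauses matches the target ratio only asymptotically. Verifying the binomial asymptotics, checking that they are unaffected by the floor in $A=\lfloor\rho E^{1/h}\rfloor$, and confirming that $\rho_l(k-h),\rho_u(k-h)$ have the required finiteness and positivity for the relevant $(k-h)$-CNF regime, is routine but is where the content beyond Theorem~\ref{th:controlledThreshold} resides.
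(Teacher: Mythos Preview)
Your proposal is correct and follows essentially the same approach as the paper: you set up $F^Y$ and $F|_I$ exactly as in the proof of Theorem~\ref{th:controlledThreshold}, use that $\binom{A}{h}$ is of order $A^h\sim \rho^h E$ under the scaling $A=\lfloor\rho E^{1/h}\rfloor$, and then appeal to the thresholds $\rho_l(k-h)$ and $\rho_u(k-h)$ for random $(k-h)$-CNF to exhibit values of $\rho$ making the limit $1$ and $0$ respectively. The only cosmetic difference is that the paper uses generic constants $\alpha_h,\beta_h$ with $\alpha_h A^h\le\binom{A}{h}\le\beta_h A^h$ rather than your sharper asymptotic $\binom{A}{h}\sim A^h/h!$, so it does not record the explicit bounds $\tfrac12(h!\,\rho_l(k-h))^{1/h}$ and $(h!\,\rho_u(k-h))^{1/h}$; your identification of the monotonicity/sandwiching issue is exactly the content the paper handles via these crude polynomial bounds.
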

\begin{proof}
Let $\Phi\in Q^{\gctd}(h,k-h,A,E)$, $X=\{x_1,...,x_A\}$, and 
$Y=\{y_1,...,y_E\}$. By the definition, $\Phi=\forall X \exists Y F$, where 
$F=C_1 \wedge\ldots\wedge C_N$ is a $k$-CNF formula of $N=\binom{A}{h}2^h$ 
clauses $C_i= l_{i,1}\vee\ldots\vee l_{i,k}$ such that $l_{i,1},\ldots,
l_{i,h}$ are literals over $X$ and $l_{i,h+1},\ldots,l_{i,k}$ are literals 
over $Y$. We define $C_i^Y=l_{i,h+1}\vee\ldots\vee l_{i,k}$ and $F^Y = 
C_1^Y\land\ldots\land C_N^Y$. Moreover, for every 
interpretation $I$ of $X$ we define $F|_I=\bigwedge\lbrace C_i^Y \mid 
C_i\in F \mbox{ and } I\not\models l_{i,1}\lor\ldots\lor l_{i,h} \}$.

Let us assume that $\Phi$ is selected from $Q^{\gctd}(h,k-h,A,E)$ uniformly 
at random. By the definition of the model $Q^\gctd(h,k-h,A,E)$, $F^Y$ can 
be regarded as selected from $C(k-h,N,E)$ uniformly at random and, for each 
truth assignment $I$ of $X$, $F|_I$ can be regarded as selected uniformly at 
random from $C(k-h,M,E)$, where $M=\binom{A}{h}$. 

To show that $\mu^\gctd_u(h,k-h)$ are  well defined, it is enough to show that
there are $r$ and $s$ such that
\[
\lim_{E\rra\infty} q^\gctd(h,k-h,\lfloor r E^{1/h}\rfloor, E) =0
\quad\mbox{and}\quad
\lim_{E\rra\infty} q^\gctd(h,k-h,\lfloor s E^{1/h}\rfloor,E) =1.
\]
The proof relies on an obvious property that for every fixed positive 
integer $h$, there are positive constants $\alpha_h$ and $\beta_h$ such that 
for every sufficiently large positive integer $A$,
\[
\beta_h A^h \geq \binom{A}{h} \geq \alpha_h A^h.
\]
 
To prove the existence of $r$, let us fix any real $r$ such that 
$\alpha_h (r/2)^h > \rho_u(k-h)$,
and let $A=\lfloor r E^{1/h}\rfloor$. Next, let $\Phi =\forall X \exists 
Y F$ be a QBF selected uniformly at random from $Q^{\gctd}(h,k-h,A,E)$ and 
$I$ be a truth assingment on $X$. Clearly, if $F|_I$ is unsatisfiable, then 
$\Phi$ is false. 

For all sufficiently large $E$, we have $A \geq (r/2) E^{1/h}$. 
Consequently, $A^h \geq (r/2)^h E$ and
\[
\binom{A}{h}\geq \alpha_h (r/2)^h E.
\]
Since $\alpha_h(r/2)^h > \rho_u(k-h)$, it follows that the probability 
that $F|_I$ is unsatisfiable tends to 1 with $E$. Thus, the probability that 
$\Phi$ is false tends to 0 with $E$, too. In other words,
\[
\lim_{E\rra\infty} q^\gctd(h,k-h,\lfloor r E^{1/h}\rfloor, E) =0.
\]

To prove the  existence of $s$ we proceed similarly. Let $s$ be any positive 
real such that $2^h\beta_h s^h\leq \rho_l(k-h)$ and let $A =\lfloor s 
E^{1/h}\rfloor$. Further, as before, let $\Phi =\forall X \exists Y F$ be 
a QBF selected uniformly at random from $Q^{\gctd}(h,k-h,A,E)$. 

Clearly, if the formula $F^Y$ is satisfiable, then for every interpretation 
$I$ of $X$, the formula $F|_I$ is satisfiable or, equivalently, $\Phi$ is 
true. In our case, we have that $A\leq s E^{1/h}$. Thus, $A^h \leq s^h E$.
It follows that $2^h\binom{A}{h} \leq 2^h \beta_h s^h E$.
Thus, $2^h\binom{A}{h}/E \leq 2^h \beta_h s^h < \rho_l(k-h)$. It follows
that the probability that $F^Y$ is satisfiable tends to 1 with $E$ and so, 
the probability that $\Phi$ is true tends to $1$ with $E$. In other words,
\[
\lim_{E\rra\infty} q^\gctd(h,k-h,\lfloor s E^{1/h}\rfloor,E) =1.
\]

\vspace*{-0.2in}
\end{proof}

\renewcommand{\myOneCol}{0.93\columnwidth}
\begin{figure*}[t!] %\captionsetup[subfloat]{farskip=-5em,captionskip=-1em}
    \centering
    \includegraphics[width=\myOneCol]{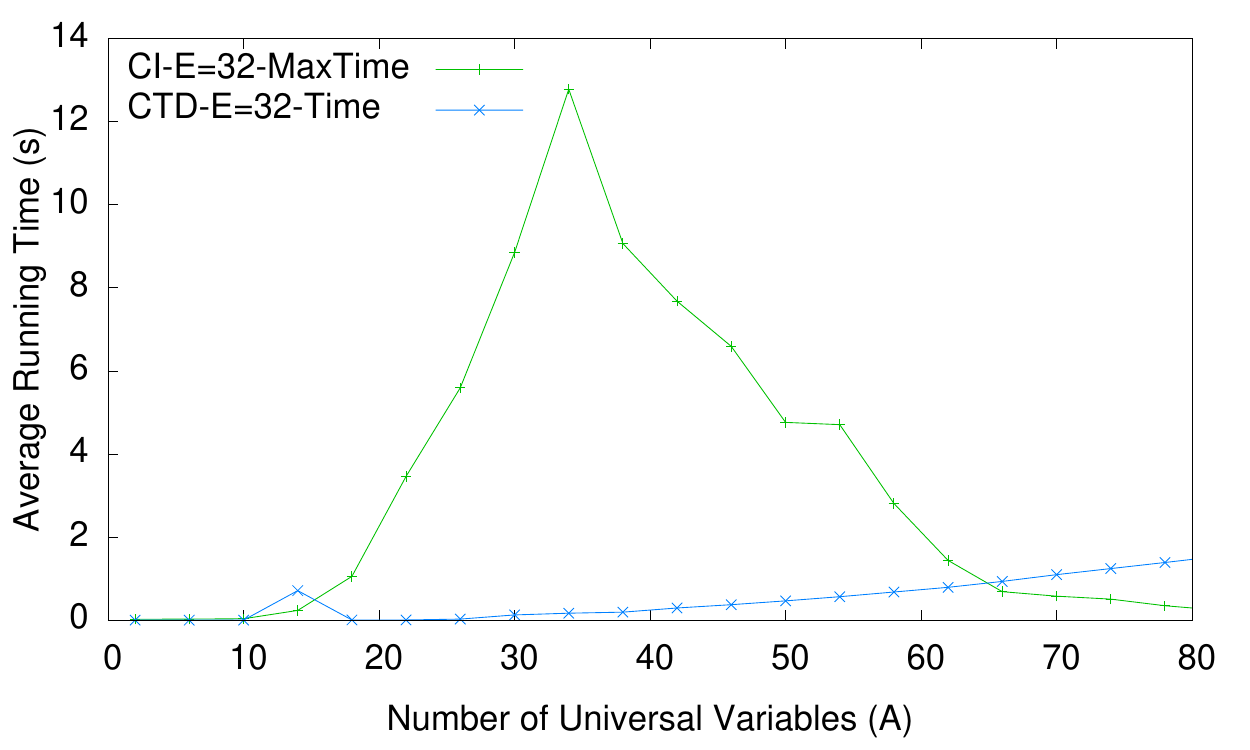}
    \caption{Comparing Chen-Interian and Generalized Controlled model: hardness comparison.}\label{fig:CTRLGeneral}
\end{figure*}

\paragraph{Empirical Behavior.}
We now discuss properties of the generalized controlled model presented above.
In particular we compare the generalized controlled and the Chen-Interian 
models with respect to the hardness of formulas having the same number of 
variables, and comment on one possible extension of the generalized model. 

We consider formulas from the Chen-Interian model $Q(a,e;A,E;m)$, where we set $a=2$, $e=3$, and $E=32$, 
and vary the number $A$ of universal variables over the range $[2..80]$
and the number $m$ of clauses over the range $[10..1200]$.
As we did in Section~\ref{cvci}, we compare the hardness properties of the generalized controlled and the Chen-Interian models with the same number of existential variables. 
These results are shown in Figure~\ref{fig:CTRLGeneral}.
For the controlled model, for each value of $A$, the value on the corresponding hardness graph (the blue line) is obtained by averaging the solve times on formulas generated from the model $Q^{\gctd}(2,3,A,32)$. 
The matrices of these formulas are 5-CNF formulas over $A+32$ variables and with $4\binom{A}{2}$ clauses. The 
corresponding point on the hardness graph for the Chen-Interian model is 
obtained by averaging the solve times on formulas generated from the model 
$Q(2,3;A,32;max)$, where for each $A$ and $E=32$, $max$ is selected to 
maximize the solve times (in particular, it falls in the phase transition 
region for the combination of the values $A$ and $E=32$). The matrices of
these formulas are 5-CNF formulas over $A+32$ variables and $m$ clauses.
The results show that the peak hardness regions for the two models are not 
aligned. Comparing this results with the one in 
Figure~\ref{fig:CIvsCTRL:Comparison} we note that the generalized controlled 
model instances are much easier to solve than Chen-Interian ones, almost in 
every setting. The peak hardness from the generalized controlled model 
instances happens before the maximum hardness the peak hardness region for 
the Chen-Interian model. This is the opposite of what happens for (basic) 
controlled model instances as shown in Figure~\ref{fig:CIvsCTRL:Comparison}.
%We experimented with similar settings, in all cases generalized controlled
%model instances generated with $a>1$ become easier and easier as $a$ grows. 

One possible weakness of the generalized controlled model is that the number 
of clauses, $m=4\binom{A}{2}$, grows quadratically with the number $A$ of 
universal variables. Informally, this growth creates ``long jumps'' in terms 
of the number of clauses in a formula as we increment $A$ and so, also the 
corresponding jumps in the ratio of the number of clauses to the number of 
existential variables. That may cause the model to miss the ``sweet spot'' 
of maximum hardness. For example, already in our experiment with $h=2$, 
formulas with $A=14$ feature 364 clauses, and formulas with $A=15$ feature 
420 clauses. We established experimentally that formulas with $A=14$ are
satisfied with the frequency $~0.1$, whereas the frequency of a satisfiable
instance for $A=15$ is 1.

In order to verify whether the ``jumps'' contribute to the generation of 
easier formulas, we further extended the generalized controlled model to
fill the gaps. Specifically, the \textit{smooth} generalized controlled model,
denoted by $Q^\sgctd(h,k-h,E;m)$, where we specify the number of existential
variables and the number of clauses in the matrix, and where the number of
universal variables is determined by the constraint $2^{h} \binom{A-1}{h}+1 
\leq m \leq 2^{h} \binom{A}{h}$. In particular, if $m=2^h\binom{A}{h}$,
$Q^\sgctd(h,k-h,E;m)$ is defined to coincide with the generalized controlled 
model $Q^\gctd(h,k-h,A,E)$. Formulas for $m$ satisfying $2^{h} \binom{A-1}{h}+1
\leq m<2^h\binom{A}{h}$ are obtained by generating an instance of 
$Q^\gctd(h,k-h,A,E)$ and randomly choosing $m$ among its clauses. 
%Clearly $Q^\gctd(h,k-h,A,E) \subseteq Q^\sgctd(h,k-h,A,E,m)$. 

A phase transition result holds also for the smooth generalized controlled
model. Let $q^\sgctd(h,k-h,E;m)$ denote the probability that a random formula 
in $Q^{\sgctd}(h,k-h,E;m)$ is true. We define $\mu^\sgctd_l(h,k-h)$
to be the supremum over all positive real numbers $\rho$ such that
$$\lim_{E\rra\infty} q^\sgctd(h,k-h,E,\lfloor \rho E\rfloor) =1,$$
and $\mu^\sgctd_u(h,k-h)$ to be the infimum over all positive real 
numbers $\rho$ such that $$\lim_{E\rra\infty} q^\sgctd(h,k-h,E,\lfloor
\rho m\rfloor) =0.$$ Theorem \ref{th:genControlledThreshold} implies
the following result.

\begin{corollary}
For every integers $k$ and $h$ such that $k\geq 2$ and $1\leq h<k$,
$\mu^\sgctd_l(h,k-h)$ and $\mu^\sgctd_u(h,k-h)$ are well defined.
\end{corollary}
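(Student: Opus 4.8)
The plan is to adapt the reduction to random $(k-h)$-CNF formulas used in the proof of Theorem~\ref{th:genControlledThreshold}, now tracking the number of clauses $m$ (which grows linearly with $E$) instead of the number $A$ of universal variables. The essential new observation is that a random instance of $Q^\sgctd(h,k-h,E;m)$ arises by first generating a full instance $\Phi=\forall X\exists Y\,F$ of $Q^\gctd(h,k-h,A,E)$ --- where $A$ is the unique integer with $2^h\binom{A-1}{h}<m\le 2^h\binom{A}{h}$ --- and then retaining a uniformly random $m$-subset of its $2^h\binom{A}{h}$ clauses. Since $m=\floor{\rho E}\rra\infty$, the associated $A$ also tends to infinity. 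Writing $F^Y$ for the conjunction of the existential parts of the retained clauses and, for an interpretation $I$ of $X$, $F|_I$ for the sub-conjunction of those $C_i^Y$ whose universal literals are all falsified by $I$, the independence of the existential extensions shows that $F^Y$ is distributed as a uniform random formula from $C(k-h,m,E)$, while $F|_I$ is a uniform random $(k-h)$-CNF over the $E$ existential variables whose number of clauses $Z$ is governed by the random selection.

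To prove that $\mu^\sgctd_l(h,k-h)$ is well defined I first exhibit an $s>0$ with $\lim_{E\rra\infty}q^\sgctd(h,k-h,E,\floor{sE})=1$. Fix any $s<\rho_l(k-h)$ and set $m=\floor{sE}$. Then $F^Y\in C(k-h,m,E)$ has clause-to-variable ratio at most $s<\rho_l(k-h)$, so $F^Y$ is satisfiable with probability tending to $1$; and whenever $F^Y$ is satisfiable every $F|_I$, being a sub-conjunction of $F^Y$, is satisfiable, i.e.\ $\Phi$ is true. Hence the set of $\rho$ for which the limit equals $1$ contains the whole interval $(0,\rho_l(k-h))$ and is nonempty.

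For the opposite side I exhibit an $r$ with $\lim_{E\rra\infty}q^\sgctd(h,k-h,E,\floor{rE})=0$. Fix any $r>2^h\rho_u(k-h)$, put $m=\floor{rE}$, and fix a single interpretation $I$ of $X$; it suffices to show that $F|_I$ is unsatisfiable with probability tending to $1$, since this forces $\Phi$ to be false. The key is a deterministic lower bound on $Z$: of the $\binom{A}{h}$ clauses that $I$ falsifies on their universal part, at most the total number of \emph{discarded} clauses can be missing, and the block constraint $m>2^h\binom{A-1}{h}$ bounds the number of discarded clauses by $2^h\binom{A-1}{h-1}=2^h\frac{h}{A}\binom{A}{h}$. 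Therefore $Z\ge\binom{A}{h}\bigl(1-2^h h/A\bigr)\ge \frac{m}{2^h}\bigl(1-2^h h/A\bigr)$, using $m\le 2^h\binom{A}{h}$. Since $A\rra\infty$ and $r/2^h>\rho_u(k-h)$, for all large $E$ we obtain $Z/E>\rho_u(k-h)+\delta$ for some fixed $\delta>0$, whence the random $(k-h)$-CNF $F|_I$ is unsatisfiable with probability tending to $1$ (adding clauses only increases the chance of unsatisfiability). Thus the set of $\rho$ for which the limit equals $0$ contains $(2^h\rho_u(k-h),\infty)$ and is nonempty.

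Finally, because a sequence cannot converge both to $0$ and to $1$, the two sets are disjoint: the first is bounded above by $2^h\rho_u(k-h)$ and the second is bounded below by $\rho_l(k-h)>0$. Consequently $\mu^\sgctd_l(h,k-h)=\sup\{\rho>0:\lim=1\}$ and $\mu^\sgctd_u(h,k-h)=\inf\{\rho>0:\lim=0\}$ are finite, i.e.\ well defined. I expect the main obstacle to be the clause-counting in the upper bound: one must argue carefully that, despite retaining only $m\le 2^h\binom{A}{h}$ of the clauses, enough of the $\binom{A}{h}$ clauses relevant to a \emph{fixed} $I$ survive to push the density of $F|_I$ above $\rho_u(k-h)$. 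The block constraint on $m$ is precisely what makes this deterministic bound work and is the delicate ingredient; an alternative route through hypergeometric concentration of $Z$ around its mean $m/2^h$ would also close the gap but is less clean.
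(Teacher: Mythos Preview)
Your proof is correct. The paper gives no argument beyond the one-line remark that the corollary ``follows from Theorem~\ref{th:genControlledThreshold},'' so you have supplied exactly the details the authors leave implicit: you re-run the reduction to random $(k-h)$-CNF formulas from that proof, now with the clause count $m$ (rather than $A$) as the driving parameter.

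The lower-bound half is identical in spirit to Theorem~\ref{th:genControlledThreshold}: the conjunction $F^Y$ of the existential parts of the $m$ retained clauses is a uniform element of $C(k-h,m,E)$, and with $m=\lfloor sE\rfloor$, $s<\rho_l(k-h)$, it is satisfiable with probability tending to~$1$. The upper-bound half is where the smooth model genuinely requires something extra, and you handle it cleanly. The block constraint $2^h\binom{A-1}{h}<m\le 2^h\binom{A}{h}$ is precisely what lets you bound the number of discarded clauses by $2^h\binom{A-1}{h-1}=2^h\tfrac{h}{A}\binom{A}{h}$, giving the deterministic lower bound $Z\ge \binom{A}{h}(1-2^h h/A)\ge \tfrac{m}{2^h}(1-2^h h/A)$; combined with the monotonicity of $p(k-h,E,\cdot)$ this yields $P(F|_I\text{ SAT})\le p(k-h,E,\lfloor(\rho_u(k-h)+\delta)E\rfloor)\to 0$ once $r>2^h\rho_u(k-h)$. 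Your observation that conditioning on the random selection leaves the existential parts i.i.d.\ (so $F|_I$ really is uniform in $C(k-h,Z,E)$ given $Z$) is the right justification, and the deterministic route you chose is indeed cleaner than the hypergeometric-concentration alternative you mention.
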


We experimented with the smooth generalized controlled model on the same 
setting as before but focusing on the phase transition region, that is, on
values of $A$ that are close to 14. The results reported in 
Figure~\ref{fig:smooth} were, thus, obtained varying $A$ from 10 to 18 (so 
$150 \leq m \leq 612$). It can be noted that the smooth model allows us to 
generate formulas that are precisely in the phase transition zone, moreover 
we can obtain harder formulas. Nonetheless, the smooth generalized controlled 
model remains less hard than the Chen-Interian model, if we compare the
hardest formulas that can be generated with the same number of existential 
variables, disregarding the number of universal variables. As we noted in
the main part of the paper, alternative ways to compare the hardness of 
the models may exist and finding them is an important open research question.

\renewcommand{\myOneCol}{0.93\columnwidth}
\begin{figure*}[t!] %\captionsetup[subfloat]{farskip=-5em,captionskip=-1em}
    \centering
    \includegraphics[width=\myOneCol]{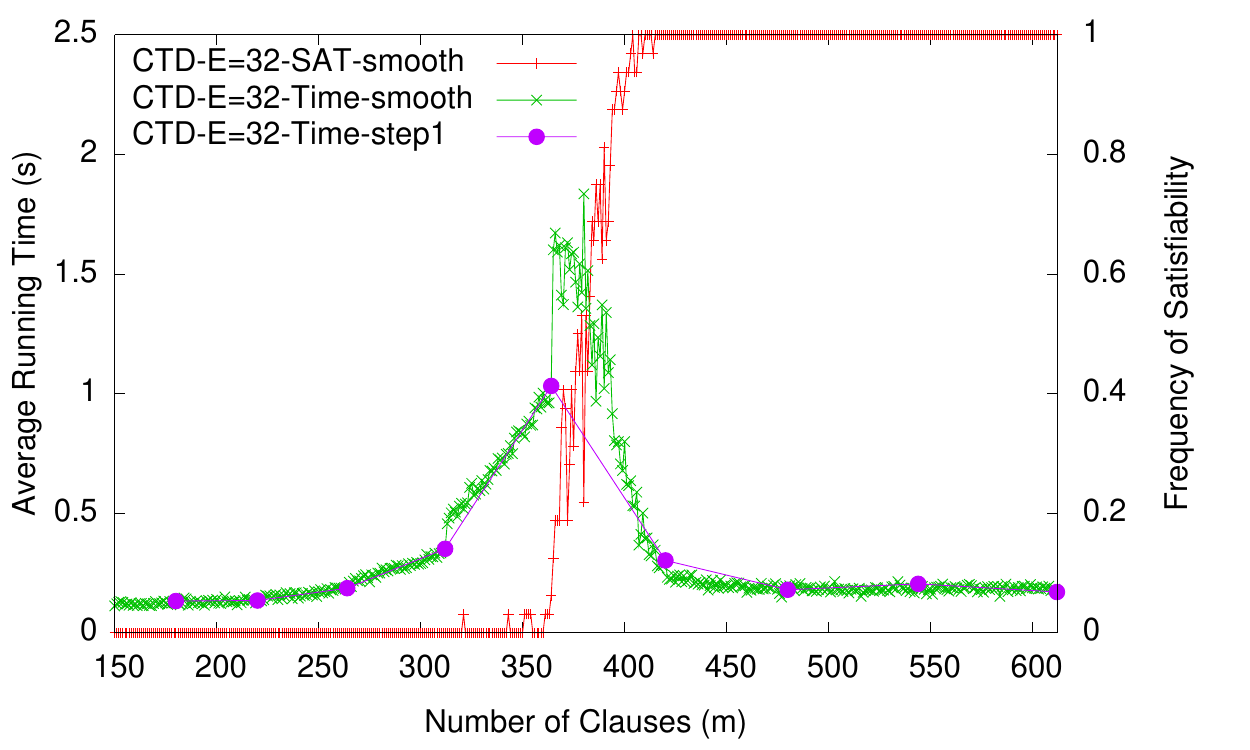}
    \caption{Comparing Generalized Controlled model with Smooth Generalized Controlled model.}\label{fig:smooth}
\end{figure*}

\subsection*{Additional notes on the generation of formulas}\label{app:generation}

Let $X$ be a set consisting of $N$ elements. We will consider the
following method to generate random elements of $X^t$ (the set of all 
$t$-tuples over $X$):
\begin{quote}
for each position $i$, $1\leq i\leq t$, select an element from $X$
uniformly at random.
\end{quote}
Clearly, every element of $X^t$ is equally likely to appear as the result
of this method. Thus, the method generates $t$-tuples over $X$ uniformly
at random. 

Let $S$ be a property of $t$-tuples over $X$ and let $p_N$ be the probability
that a $t$-tuple generated by the method described above has the property $S$. 
It follows that $p_N$ is the probability that an $t$-tuple selected from $X^t$ 
uniformly at random has the property $S$.

Next, let us define
\[
D^t(X) = \{\langle x_1,\ldots,x_t\rangle\in X^t \colon x_i\neq x_j,\
\mbox{for $i\neq j$}\}
\]
In other words, $D^t(X)$ is the set of all tuples in $X^t$ with 
no repeating elements.  

Let $S$ be a property of tuples in $X^t$. We will denote by $p'_N$ the 
probability that a tuple selected from $D^t(X)$ uniformly at random
has the property $S$. Then, if $t$ is sufficiently smaller than $N$, $p'_N$ 
can be closely estimated by $p_N$. To show that, let us define
\[
R^t(X) = X^t\setminus D^t(X).
\]
Clearly,
\[
p_N = \frac{|S|}{|X^t|}\quad\mbox{and}\quad p'_N=\frac{|D^t(X)\cap S|}{|D^t(X)|}.
\]
It follows that
\[
p_N - \frac{|R^t(X)\cap S|}{|X^t|}\leq p'_N \leq p_N + \frac{|R^t(X)\setminus S|}{|X^t|}
\]
and so,
\[
p_N - \frac{|R^t(X)|}{|X^t|} \leq p'_N \leq p_N + \frac{|R^t(X)|}{|X^t|}
\]
or, more explicitly,
\[
p_N - \big(1-\frac{(N-t+1)\cdots(N-1)N}{N^t}\big) \leq p'_N \leq p_N +\big(1- \frac{(N-t+1)\cdots(N-1)N}{N^t}\big).
\]
\begin{lemma}
If $\lim_{N\rar \infty} t^2/N = 0$, then 
\[
\lim_{N\rar\infty} \frac{(N-t+1)\cdots(N-1)N}{N^t} =1
\]
\end{lemma}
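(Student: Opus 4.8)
The plan is to rewrite the quotient as a finite product and then trap it between $1$ and a quantity that the hypothesis forces to $1$. First I would note that the hypothesis $\lim_{N\rar\infty} t^2/N = 0$ forces $t < N$ for all large $N$, so every factor below is positive, and I would compute
\[
\frac{(N-t+1)\cdots(N-1)N}{N^t} = \prod_{j=0}^{t-1}\frac{N-j}{N} = \prod_{j=0}^{t-1}\Bigl(1-\frac{j}{N}\Bigr).
\]
Each factor lies in $[0,1]$, so the product is at most $1$; this gives the upper bound immediately.

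For the lower bound I would invoke the elementary Weierstrass inequality: for reals $a_0,\ldots,a_{t-1}\in[0,1]$ one has $\prod_{j=0}^{t-1}(1-a_j)\ge 1-\sum_{j=0}^{t-1}a_j$. Applying it with $a_j = j/N$ yields
\[
\prod_{j=0}^{t-1}\Bigl(1-\frac{j}{N}\Bigr) \ge 1-\sum_{j=0}^{t-1}\frac{j}{N} = 1-\frac{t(t-1)}{2N} \ge 1-\frac{t^2}{2N}.
\]
Combining the two bounds gives $1-\frac{t^2}{2N} \le \frac{(N-t+1)\cdots N}{N^t} \le 1$, and since the hypothesis makes the left-hand side tend to $1$, the squeeze theorem delivers the claim.

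The only point requiring a separate argument is the Weierstrass inequality itself, but it is entirely routine: a one-line induction on $t$, where the inductive step uses $(1-a_t)\bigl(1-\sum_{j<t}a_j\bigr) \ge 1-\sum_{j\le t}a_j$ because the discarded cross term $a_t\sum_{j<t}a_j$ is nonnegative and the factor $1-a_t$ is nonnegative. There is thus no genuine obstacle here; the estimate is sharp enough that no further refinement is needed. An alternative route would be to take logarithms and bound $\sum_{j}\log(1-j/N)$, but the product inequality keeps the bookkeeping cleaner and avoids having to control the behavior of $\log(1-x)$ as $j/N$ approaches $1$.
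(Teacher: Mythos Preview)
Your argument is correct. Both you and the paper trap the product between $1$ and a lower bound that tends to $1$, but the lower bounds are obtained differently. The paper replaces every factor by the smallest one to get
\[
\Bigl(\frac{N-t}{N}\Bigr)^t \le \frac{(N-t+1)\cdots N}{N^t},
\]
rewrites the left-hand side as $\bigl[(1-t/N)^{N/t}\bigr]^{t^2/N}$, and then appeals to $(1-1/x)^x\to 1/e$ together with $t^2/N\to 0$ to conclude. Your route via the Weierstrass product inequality is more direct: it produces the explicit algebraic bound $1-t^2/(2N)$ without invoking the limit defining $e$, and the squeeze is immediate. The paper's approach has the minor advantage that the crude bound $(1-t/N)^t$ is the first thing one writes down; your approach has the advantage of being entirely elementary and of giving a sharper explicit error term ($t(t-1)/(2N)$ rather than an implicit one coming from the convergence rate of $(1-1/x)^x$). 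Either argument is adequate here.
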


\begin{proof}: Clearly, 
\[
\Bigl(\frac{N-t}{N}\Bigr)^t \leq \frac{(N-t+1)\cdots(N-1)N}{N^t} \leq 1.
\]
Moeover, 
\[
\Bigl(\frac{N-t}{N}\Bigr)^t = \biggl[\Bigl(1 - \frac{1}{N/t}\Bigr)^{N/t}\biggr]^{t^2/N}
\]
Since $\lim_{N\rar \infty} t^2/N = 0$ and $t$ is a positive integer,
$\lim_{N\rar \infty} N/t = \infty$. Thus, 
\[
\lim_{N\rar\infty} \Bigl(1 - \frac{1}{N/t}\Bigr)^{N/t} = 1/e
\]
and, consequently,
\[
\lim_{N\rar\infty} \biggl[\Bigl(1 - \frac{1}{N/t}\Bigr)^{N/t}\biggr]^{t^2/N}
=1.
\]
\end{proof}

\begin{corollary}
If $\lim_{N\rar \infty} t^2/N = 0$, then there is a sequence $\varepsilon_N$ such that $\lim_{N\rar \infty}\varepsilon_N =0$
and
\[
p_N - \varepsilon_N \leq p'_N \leq p_N + \varepsilon_N.
\]
\end{corollary}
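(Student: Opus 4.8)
The plan is to read off the sequence $\varepsilon_N$ directly from the two-sided bound that was already derived just before the Lemma. Recall that the text established, entirely by counting, that
\[
p_N - \Bigl(1-\tfrac{(N-t+1)\cdots(N-1)N}{N^t}\Bigr) \leq p'_N \leq p_N + \Bigl(1- \tfrac{(N-t+1)\cdots(N-1)N}{N^t}\Bigr),
\]
where the common half-width is exactly $|R^t(X)|/|X^t|$. So the natural move is to define
\[
\varepsilon_N = 1-\frac{(N-t+1)\cdots(N-1)N}{N^t},
\]
which is nonnegative since the product is at most $1$ (each factor $(N-j)/N \leq 1$), so that the inequality of the corollary is literally the displayed bound above and requires no further work beyond substitution.

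It then remains only to verify that $\varepsilon_N \rar 0$. First I would note that this is precisely the content of the Lemma: under the hypothesis $\lim_{N\rar\infty} t^2/N = 0$, the Lemma gives $\lim_{N\rar\infty} \frac{(N-t+1)\cdots(N-1)N}{N^t} = 1$, and hence $\lim_{N\rar\infty}\varepsilon_N = 1 - 1 = 0$ by the algebra of limits. Since the same hypothesis $\lim_{N\rar\infty} t^2/N = 0$ appears in both the Lemma and the corollary, the Lemma applies verbatim.

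Combining these two observations finishes the argument: the chosen $\varepsilon_N$ satisfies $\lim_{N\rar\infty}\varepsilon_N = 0$ and sandwiches $p'_N$ around $p_N$ as required. I do not expect any genuine obstacle here — all the analytic difficulty (controlling the product $\prod_{j=1}^{t-1}(1-j/N)$ via the $1/e$ limit and the exponent $t^2/N$) was already absorbed into the Lemma. The only point deserving a sentence of care is that $\varepsilon_N$ must be the same on both sides of the inequality and must be nonnegative, both of which hold by construction; everything else is a direct appeal to the preceding bound and the Lemma.
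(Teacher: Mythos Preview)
Your proposal is correct and is exactly the intended argument: the paper states the Corollary without a separate proof because it follows immediately from the already-displayed two-sided bound together with the Lemma, and you have spelled out precisely that step by setting $\varepsilon_N = 1 - \frac{(N-t+1)\cdots(N-1)N}{N^t}$ and invoking the Lemma to get $\varepsilon_N\to 0$.
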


Next, we observe that if the property $S$ does not depend on the
order of the elements in a tuple in $D^t(X)$, that is, the probability 
that a tuple in $D^t(X)$ has the property $S$ is the same for every 
permutation of the elements in the tuple), then the probability that a 
set of $t$ elements from $X$ has a property $S$ (its ``set version'' 
to be precise) is given by $p'_N$.

Our earlier discussion shows then that to estimate the probability that 
a $t$-element subset of $X$ selected uniformly at random has a property 
$S$, it is sufficient to estimate the probability that a $t$-tuple over $X$ 
(an element of $X^t$) selected uniformly at random has the property $S$.
 
In this paper, we take advantage of this observation in the case when 
$X$ consists of formulas and $S$ is the property that a set (tuple) of
formulas is satisfiable ($\sat$), and unsatisfiable (UNSAT).

In particular, we consider in the paper the case when $X$ is the set of all 
non-tautological k-literal clauses over the set of $n$ 
propositional variables. We note that $|X|= 2^k{n \choose k}$.
It follows that when studying the probability that a $k$-CNF formula
with $m$ clauses is satisfiable, where $m=O(n)$, the results above apply and 
the probabillity, in the limit, is the same no matter whether we vew formulas
as sets or ordered tuples of clauses.

We also consider the case, when $X$ is the set of all $k$-CNF formulas with $m$
clauses over a set of $n$ variables, that is, the set $C(k,n,m)$. Also here, 
it makes no difference whether a disjunction of such formulas is considered a set 
of those formulas or an ordered tuple of such formulas. Since we consider disjunctions
of $t$ CNF formulas, where $t$ is fixed, the probability of such a disjunction being satisfiable
is, in the limit, not affected by how we interpret the disjunction --- as a set or an ordered tuple.

\end{document}

% end of new_TLP2egui.tex